\let\originalleft\left
\let\originalright\right
\renewcommand{\left}{\mathopen{}\mathclose\bgroup\originalleft}
\renewcommand{\right}{\aftergroup\egroup\originalright}
\renewcommand{\Pr}{\operatorname*{\textbf{\textup{Pr}}}}
\DeclareMathOperator*{\EE}{\textbf{\textup{E}}}
\renewcommand{\geq}{\geqslant}
\renewcommand{\ge}{\geqslant}
\renewcommand{\leq}{\leqslant}
\renewcommand{\le}{\leqslant}
\newcommand{\set}[1]{\{ #1 \}}
\newcommand{\lb}{\left}
\newcommand{\rb}{\right}
\newcommand{\lt}{\left}
\newcommand{\rt}{\right}
\DeclareMathOperator{\poly}{\ensuremath{\mathrm{poly}}}
\newtheorem*{rep@theorem}{\rep@title}
\newcommand{\newreptheorem}[2]{%
\newenvironment{rep#1}[1]{%
\def\rep@title{#2 \ref{##1}}%
\begin{rep@theorem}[restated]}%
{\end{rep@theorem}}}
\newcommand{\onlyShort}[1]{\ifthenelse{\boolean{short}}{#1}{}}
\newcommand{\onlyLong}[1]{\ifthenelse{\boolean{short}}{}{#1}}
\newtheorem{claim}{Claim}
\theoremstyle{plain}
\newcommand{\ann}[1]{%
\text{\footnotesize(#1)}\quad}
\newcommand{\winner}{\rho_\text{winner}}
\DeclareMathOperator*{\argmax}{arg\,max}
\DeclareMathOperator{\Exec}{\mathsf{Exec}}
\newcommand{\msg}[1]{\langle #1 \rangle}
\newcommand{\Few}{\texttt{Few}}
\newcommand{\Undist}{\texttt{Undist}}
\newcommand{\Corr}{\texttt{Corr}}
\def\shay#1{}
\def\peter#1{}
\def\mingming#1{}
\def\xianbinZhu#1{}
\keywords{Distributed Algorithm; Leader Election; Lower Bound}
\author{Shay Kutten}
\affiliation{
  \institution{Technion - Israel Institute of Technology}
  \country{Israel}
}
\author{Peter Robinson}
\affiliation{
  \department{School of Computer \& Cyber Sciences}
  \institution{Augusta University}
  \country{Georgia, USA}
}
\author{Ming Ming Tan}
\affiliation{
  \department{School of Computer \& Cyber Sciences}
  \institution{Augusta University}
  \country{Georgia, USA}
}
\author{Xianbin Zhu}
\affiliation{
  \department{Department of Computer Science}
  \institution{City University of Hong Kong}
  \country{Hong Kong SAR}
}
\begin{document}
\title{Improved Tradeoffs for Leader Election}
\begin{abstract}
  We consider leader election in clique networks, where $n$ nodes are connected by point-to-point communication links.
  For the \emph{synchronous clique under simultaneous wake-up}, i.e., where all nodes start executing the algorithm in round $1$, we show a tradeoff between the number of messages and the amount of time.
  The previous lower bound side of such a tradeoff, in the seminal paper of Afek and Gafni (1991), was shown only assuming adversarial wake-up.
  Interestingly, our new tradeoff also improves the previous lower bounds
for a large part of the spectrum, even under simultaneous wake-up.
  More specifically, we show that any deterministic algorithm with a message complexity of $n\,f(n)$ requires $\Omega\left(\frac{\log n}{\log f(n)+1}\right)$ rounds, for $f(n) > 1$.
  Our result holds even if the node IDs are chosen from a relatively small set of size $\Theta(n\log n)$, as we are able to avoid using Ramsey's theorem, in contrast to many existing lower bounds for deterministic algorithms.
  We also give an upper bound that improves over the previously-best tradeoff achieved by the algorithm of Afek and Gafni.
  Our second contribution for the synchronous clique under simultaneous wake-up is to show that $\Omega(n\log n)$ is in fact a lower bound on the message complexity that holds for any deterministic algorithm with a termination time $T(n)$ (i.e., any function of $n$), for a sufficiently large ID space. 
  We complement this result by giving a simple deterministic algorithm that achieves leader election in sublinear time while sending only $o(n \, \log n)$ messages, if the ID space is of at most linear size.
  We also show that Las Vegas algorithms (that never fail)
  require $\Theta(n)$ messages.
  This exhibits a gap between Las Vegas and Monte Carlo algorithms.

  For the \emph{synchronous clique under adversarial wake-up},
  we show that $\Omega(n^{3/2})$ is a lower bound for $2$-round algorithms. Our result is the first superlinear lower bound for \emph{randomized} leader election algorithms in the clique. We also give a simple algorithm that matches this bound.

  Finally, we turn our attention to the \emph{asynchronous clique}:
  Assuming \emph{adversarial wake-up}, we give a randomized algorithm that, for any $k\in[2,O(\log n / \log\log n)]$, achieves a message complexity of $O(n^{1 + 1/k})$ and an asynchronous time complexity of $k+8$.
  Our algorithm achieves the first tradeoff between messages and time in the asynchronous model.
  For \emph{simultaneous wake-up}, we translate the deterministic tradeoff algorithm of Afek and Gafni to the asynchronous model, thus partially answering an open problem they pose.
\end{abstract}
\acmYear{2023}\copyrightyear{2023}
\acmConference[PODC '23]{ACM Symposium on Principles of Distributed Computing}{June 19--23, 2023}{Orlando, FL, USA}
\acmBooktitle{ACM Symposium on Principles of Distributed Computing (PODC '23), June 19--23, 2023, Orlando, FL, USA}
\acmPrice{15.00}
\acmDOI{10.1145/3583668.3594576}
\acmISBN{979-8-4007-0121-4/23/06}
\maketitle

\section{Introduction}
\label{sec:intro}

We address one of the most fundamental problems in the area of distributed computing---breaking symmetry by electing a leader among the $n$ nodes of a network. This problem was introduced by Le Lann~\cite{le1977distributed} and serves as a crucial building block in numerous applications such as resource allocation, load balancing, etc. The literature on leader election is too rich to survey here even for models close to the one used here, and
the study of leader election has had a significant influence on other areas including e.g., peer-to-peer networks, sensor networks, population protocols, programmable matter, etc.~\cite{nygren2010akamai, ganeriwal2003timing, yao2002cougar,popul,derakhshandeh2015leader}. The main complexity measures studied have been message and time complexities as studied here. Note that for some of the other above mentioned kinds of networks (e.g., sensor networks), saving in energy is also very important. However, saving in messages and time can also help save energy.

In this work, we focus on resolving the complexity bounds of leader election in the synchronous clique.
 However, let us note that clique networks have been studied intensively for their own sake. This includes a wide literature on leader election in clique networks. Some such papers are mentioned below. Clique networks have also been studied for other problems and other models of distributed computing. One reason may be the fact that they capture the network (or the application) layer of most networks, where essentially, every node is able to reach every other node.

\subsection{Our Contributions} \label{sec:contributions}

Our main contributions are new tradeoff and lower bound results that aim at resolving the message and time complexity of leader election in clique networks under various assumptions.
Table~\ref{tab:results} summarizes our main results. 
\begin{table*}[t]
    \centering
\begin{threeparttable}
\begin{small}
  \begin{tabular*}{\textwidth}{l@{\hskip 1cm} l@{\hskip 0.5cm} l c l}
  \toprule
  Result &  Remarks & Time & & Messages\\
  \midrule
  \multicolumn{2}{l}{\textbf{Synchronous, Deterministic, Simultaneous Wake-up}} \\
    Lower Bound, Theorem~\ref{thm:tradeoff} & 
    for any $f(n)>1$ $^{*}$ &
    $\le\frac{\log_2 n - 1}{\log f(n) + 1} + 1$ & $\Rightarrow$ &
    $\ge n\,f(n)$ \\
    Lower Bound, Theorem~\ref{thm:lb_unconditional} & 
    $T(n)$ is any function of $n$ $^{\$}$&
    $\le T(n)$ & $\Rightarrow$ & $\Omega\lt( n\log n \rt)$\\
    Algorithm, Theorem~\ref{thm:improved_afekgafni}  & 
    any odd $\ell\ge3$ &
    $\ell$ & & $O\lt( \ell\,n^{1+\frac{2}{\ell+1}} \rt)$\\
  \midrule
  \multicolumn{2}{l}{\textbf{Synchronous, Deterministic, Adversarial Wake-up}}\\
    Algorithm \cite{afek1991time} & 
    any $\ell\ge2$ &
    $\ell$ & & $O\lt( \ell\,n^{1+\frac{2}{\ell}} \rt)$\\
    Lower Bound \cite{afek1991time} & 
    for any $c\ge 2$ &
    $\le \frac{1}{2}\log_cn$ & $\Rightarrow$ & $\ge\frac{c-1}{2}n\log_cn$\\
    Lower Bound \cite{afek1991time}& 
    $T(n)$ is any function of $n$ &
    $\le T(n)$ & $\Rightarrow$ & $\Omega\lt( n\log n \rt)$\\
  \midrule
  \multicolumn{2}{l}{\textbf{Synchronous, Randomized, Simultaneous Wake-up}}\\
    Algorithm, Theorem~\ref{thm:las_vegas}  & 
    Las Vegas  &
    $3$ (w.h.p.) & & $O(n)$ (w.h.p.)\\
    Lower Bound, Theorem~\ref{thm:las_vegas} & 
    Las Vegas & & &$\Omega\lt( n \rt)$\\
    Algorithm~\cite{kutten2015sublinear}  & 
    succeeds w.h.p.  &
    $2$ & & $O\lt(\sqrt{n}\log^{3/2}(n)\rt)$ \\
    Lower Bound~\cite{kutten2015sublinear} & 
    for small constant error prob.\ 
    & & &$\Omega\lt( \sqrt{n} \rt)$\\
  \midrule
  \multicolumn{2}{l}{\textbf{Synchronous, Randomized, Adversarial Wake-up}}\\
    Algorithm, Theorem~\ref{thm:adversarial_two_round_algo}  & 
    succeeds w.p.\ $1-\epsilon$  &
    $2$ & & $O\lt(n^{3/2}\log\lt( \frac{1}{\epsilon} \rt)\rt)$ $^{\dagger,\dagger\dagger}$\\
    Lower Bound, Theorem~\ref{thm:adversarial_lb} & 
    for any constant error prob.\ & 
    $\le2$ &
    $\Rightarrow$ &
    $\Omega\lt( n^{3/2} \rt)$ $^\dagger$\\
    Algorithm \cite{kutten2020singularly} & 
    succeeds w.h.p. &
    9 & & $O\lt(n\rt)$\\
  \midrule
  \multicolumn{2}{l}{\textbf{Asynchronous, Randomized}} \\
    Algorithm, Theorem~\ref{thm:async} & 
    succeeds w.h.p.; $k\in[2,O\lt( \frac{\log n}{\log\log n} \rt)]$ &
    $k+8$ & & $O\lt(n^{1+1/k}\rt)$\\
    Algorithm \cite{kutten2020singularly} & 
    succeeds w.h.p. &
    $O(\log^2n)$ & & $O\lt(n\rt)$\\
  \bottomrule
\end{tabular*}
\end{small}
\begin{normalsize}
  \begin{tablenotes}
  \item[$*$] Assuming an ID space of size at least $\Omega(n\log n)$.
  \item[\$] Assuming an ID space of size at least $\Omega\lt( n^{\log_2 n}(T(n))^{\log_2 n -1} \rt)$.
  \item[$\dagger$] In expectation.
  \item[$\dagger\dagger$] With high probability, the message complexity does not exceed $O(n^{3/2}\log n)$.
  \end{tablenotes}
\end{normalsize}
\vspace{5 mm}
\caption{Results for Leader Election in Clique Networks. Note that the ``$\Rightarrow$'' symbol for the entries of Theorems~\ref{thm:tradeoff}, \ref{thm:lb_unconditional}, and \ref{thm:adversarial_lb} indicates that a sufficiently small time complexity implies the stated bound on the message complexity.} \label{tab:results}
\end{threeparttable}
\end{table*}
  
\subsubsection{\textbf{Results for the Synchronous Clique under Simultaneous Wake-up}}
\paragraph{\textsl{\textbf{Improved Tradeoff between Messages and Time}}}
In Section~\ref{sec:tradeoff}, we focus on the setting where every node is awake at the start of round $1$.
We show that there is an inherent tradeoff between messages and time by proving that any deterministic leader election algorithm takes at least $\frac{\log_2 n - 1}{\log_2 f(n) + 1}$ rounds, if it sends at most $n\cdot f(n)$ messages, for $f(n)>1$, and therefore any $\ell$-round algorithm requires at least $\Omega \lt( n^{1 + {1}/(\ell-1)} \rt)$ messages.
We point out that, in contrast to previous lower bounds for general\footnote{that is, not necessarily comparison-based algorithm}
deterministic algorithms under simultaneous wake-up, such as the celebrated message complexity lower bound of $\Omega(n\log n)$ for leader election in rings by Frederickson and Lynch~\cite{frederickson1987electing}, our result does not use Ramsey's theorem, and thus holds even for moderately large ID spaces, i.e., of size $\Theta(n\log n)$. This difference is meaningful in the $\mathsf{CONGEST}$ model~\cite{peleg2000distributed}, where assuming a large ID space implies a large communication complexity trivially.

In Section~\ref{sec:improved_afekgafni}, we give a nearly matching upper bound via a simple modification of the deterministic algorithm by Afek and Gafni~\cite{afek1991time}, obtaining an $\ell$-round algorithm, for any odd $\ell = 2k -3 \ge 3$ that sends $O \lt( \ell\,n^{1 + 2/(\ell+1)} \rt)$ messages.

\paragraph{\textsl{\textbf{An $\Omega(n\log n)$ Lower Bound for Time-Bounded Algorithms}}} In Section~\ref{sec:unconditional}, we prove
a lower bound of $\Omega(n\log n)$ messages that holds for any time-bounded algorithm, i.e., for algorithms with a time complexity that is a function of $n$. We also show that the assumption of a sufficiently large ID set
 (though significantly smaller than the one used in ~\cite{frederickson1987electing}) is indeed crucial for a message complexity lower bound of $\Omega(n\log n)$ to hold, by giving a deterministic algorithm that terminates in sublinear time and sends $o(n\log n)$ messages when the ID space has linear size.

\paragraph{\textsl{\textbf{An $\Omega(n)$ Lower Bound for Las Vegas Algorithms}}}
In Section~\ref{sec:las_vegas}, we show that any randomized Las Vegas leader election algorithm has a message complexity of $\Omega(n)$ and argue that this bound is tight.
This reveals a polynomial gap in the message complexity compared to the Monte Carlo algorithm of Kutten et al.~\cite{kutten2015sublinear} which succeeds with high probability while terminating in only $2$ rounds and sending $O(\sqrt{n}\log ^{3/2}n)$ messages.

\subsubsection{\textbf{A Tight Bound for the Synchronous Clique under Adversarial Wake-up}}
In Section~\ref{sec:adversarial}, we turn our attention to the setting where the adversary wakes up an arbitrary set of nodes, while all other nodes are asleep initially, and awake only upon receiving a message.
We show that any randomized $2$-round algorithm that succeeds with constant probability must send $\Omega(n^{3/2})$ messages in expectation, and give a simple algorithm that tightly matches this bound.
To the best of our knowledge, this is the first lower bound that holds for \emph{randomized} leader election algorithms under adversarial wake-up.

\subsubsection{\textbf{Results for the Asynchronous Clique}}

In Section~\ref{sec:async}, we give the first algorithm that achieves a tradeoff between messages and time in the asynchronous model, again, under the assumption that some subset of nodes is awoken by the adversary.
In particular, we show that, for any parameter choice of $k \in [2,O(\log n / \log\log n)]$, there is an algorithm that terminates in $k+8$ rounds while sending $O(n^{1 + 1/k})$ messages.

The above tradeoff algorithm is randomized. As a second asynchronous tradeoff, in Section~\ref{sec:async_afek} we translated the original deterministic tradeoff algorithm (and tradeoff upper bounds) of Afek and Gafni~\cite{afek1991time} to the limited asynchronous model. That is, for the sake of this (deterministic) algorithm only, we assume that indeed, the delay of each message is still chosen by the adversary, however, the time complexity for this algorithm is counted from the last spontaneous wake up of a node (or, alternatively, assuming all the nodes wake up simultaneously). This partially answers an old open problem posed by Afek and Gafni.

\subsection{Related Work} \label{sec:related}

A result that is closely related to our tradeoff lower bound is the seminal work of Afek and Gafni~\cite{afek1991time}, who showed (see Theorem 4.5 in \cite{afek1991time}) that if a deterministic algorithm elects a leader in at most $\frac{1}{2}\log_c(n)$ rounds of the synchronous clique under adversarial wake-up, then it must send at least $\frac{c-1}{2}n\log_c n$ messages, for any $c\ge 2$.
However, their lower bound uses an adversarially constructed execution that crucially relies on the fact that nodes do \emph{not} wake up simultaneously.
Hence, their techniques do not apply to the setting that we consider in Section~\ref{sec:tradeoff}, where all the nodes are assumed to start the execution \emph{simultaneously}. %

 For the case of adversarial wake up, where the lower bounds of~\cite{afek1991time} apply too, their tradeoff is not directly comparable to the one obtained in
in our Theorem~\ref{thm:tradeoff}.
In more detail, consider any $k$-round deterministic algorithm, where $2 \le k = O(\log n)$.
(Note that any $1$-round algorithm must trivially send $\Theta(n^2)$ messages.)
Our Theorem~\ref{thm:tradeoff} shows that any $k$-round deterministic algorithm must send $\Omega( (n/2)^{1 + \frac{1}{k-1}})$ messages, whereas the message complexity lower bound of \cite{afek1991time} is $\Omega \lt( k\,n^{1 + {1}/{2k}} \rt)$.
Thus, for constant time algorithms, our result improves by a polynomial factor over the previously best lower bound.
On the other hand, for $k = \Theta(\log n)$, the result of \cite{afek1991time} yields a bound that is greater by a factor of $\Theta(\log n)$. %

Afek and Gafni~\cite{afek1991time} also provide an $\Omega(n\log n )$ lower bound on the message complexity that was not conditioned by the bound on the time. However, again,
 their proof relies heavily on the assumption that the adversary can choose which nodes to wake up and when, which is not admissible in our setting; the $\Omega(n\log n)$ lower bound we provide in Theorem~\ref{thm:lb_unconditional} holds even when assuming simultaneous wake up. An earlier $\Omega(n \log n)$ lower bound for \emph{asynchronous} networks was shown by Korach, Moran, and Zaks \cite{korach1984tight}. Again, the wake-up was adversarial (since the network was asynchronous). Moreover, a communication lower bound for asynchronous networks does not translate easily to one in synchronous networks, since in synchronous networks, it is possible to convey information using the passage of time.

Afek and Gafni also pose the open problem whether a sublinear time message optimal asynchronous algorithm exists. In particular, when they translate their synchronous tradeoff algorithm into the asynchronous model, they are forced to increase its time complexity to linear.
The also explain and demonstrate that ``The arbitrary delay of messages
(and not the absence of the clock) is the source of the increase in the time complexity of
the algorithm.''
We managed to translate their algorithm to the asynchronous model, preserving the very same tradeoff they obtained between time and number of messages.
The translated algorithm does cope well with the arbitrary delay of messages, thus overcoming the hurdle they posed.
 However, for this algorithm only, we assume that all the nodes wake up simultaneously. Alternatively, the translated algorithm exhibit that tradeoff (in the asynchronous model) if one counts the time only starting from the time of the last spontaneous wake up.

 Previous work on leader election also considered the trade-off between message complexity and time complexity on networks with a regular structure. Frederickson \cite{frederickson1983tradeoffs} gave trade-offs on special network topologies, including rings, meshes and trees.

For clique networks, it was shown by Korach, Moran, and Zaks \cite{korach1984tight} that the existential lower bound of $\Omega(m)$
(eventually established in \cite{kutten2015complexity}) does not hold, since they presented an algorithm with $O(n \log n)$ message complexity, even though in a clique, $m=\Theta(n^2)$. Such deterministic algorithms were also presented by others, e.g., Humblet
\cite{humblet1984selecting}. One explanation was a graph property that complete networks have (linear traversability) as demonstrated in \cite{korach1990modular}.

Afek and Matias~\cite{afek1994elections} presented a randomized algorithm for the \emph{asynchronous} model, but assuming that all the nodes wake up at the same time.
(Their algorithm requires each node to take actions initially, to create a sparse graph.) On the average, the message complexity is $O(n)$ and the time complexity is $O(\log n)$.
Ramanathan et al.~\cite{ramanathan2007randomized} proposed a randomized \emph{synchronous} leader election algorithm with error probability $O(1/\log ^{\Omega(1)}n)$ that has $O(\log n)$ time complexity and a linear message complexity.
\onlyLong{Subsequently, Kutten et al.~\cite{kutten2015sublinear} gave a randomized \emph{synchronous} algorithm that terminates in just $2$ rounds, while using $O(\sqrt{n}\log ^{3/2}n)$ messages with high probability, and elects a unique leader with high probability.}\xspace
Recently, Kutten et al.~\cite{kutten2020singularly} presented a randomized \emph{asynchronous} algorithm that uses $O(n)$ messages and $O(\log^2 n)$ time under adversarial wake up.
For the same model, Singh~\cite{singh1992leader} presented a lower bound for deterministic algorithms, which shows that achieving $O(n)$ messages comes at the price of having a time complexity of $\Omega(n / \log n)$.

\section{Preliminaries}

To solve leader election, exactly one of the nodes in the network needs to be distinguished as the leader.
In the \emph{explicit} variant of leader election, every node must output the ID of the leader when it terminates, whereas in \emph{implicit leader election}
(known to be in some cases easier \cite{kutten2015sublinear}),
each node must irrevocably output a single bit indicating whether it is the leader, and the goal is that exactly one of the bits is set to $1$.
Since any algorithm that achieves explicit leader election also solves the implicit version, we focus on implicit leader election whenever  showing a lower bound.

We consider a clique network of $n$ nodes that communicate via synchronous message passing using point-to-point links.
The computation proceeds in rounds and in each round $r \ge 1$, a node can send (possibly distinct) messages to every other node.
When we are mainly concerned with
showing lower bounds, we consider the $\mathsf{LOCAL}$ model~\cite{peleg2000distributed} which means that we do not restrict the size of these messages.
However, our algorithms have their claimed complexities also under the CONGEST model.
Each node has a unique ID and we assume the most common model which is that of the
\emph{clean network model}~\cite{peleg2000distributed} (sometimes called KT$_0$) that restricts the knowledge of each node as follows: initially, a node is aware only of its own ID, and the total number of nodes $n$, but it does not know the IDs of the other nodes.
The nodes are connected via bidirectional links, each connecting two nodes. To address the links, each node has a set of $n-1$ ports, over which it receives and sends messages, respectively.
The assignment of port numbers to destinations is arbitrary and may differ for different nodes.
Formally, a \emph{port mapping} $p$ is a function that maps each pair $(u,i)$ to some pair $(v,j)$, i.e., $p((u,i)) = (v,j)$ means that a message sent by node $u$ over port $i$ is received by node $v$ over port $j$. We note that port mapping is bijective, that is, if $p((u,i)=(v,j)$ then $p((v,j))=(u,i)$.
Note that neither $u$ nor $v$ are aware of how their ports are connected until they send/receive the first message over these ports.
We say that a port $i$ is \emph{unused} if no message was sent or received over $i$
and we say that a node $u$ \emph{opens a port $i$ in round $r$} if $u$ sends the first message over $i$ in $r$.
When analyzing the execution of the algorithm for a limited subset of nodes $S$ and a restricted number rounds up to some number $r$, it may not be necessary for us to fully specify all port mappings, but rather only the mappings of the ports over which actual messages were sent/received by nodes in $S$ until round $r$, while leaving the remaining assignments undefined.
We call such a partially defined function a \emph{partial port mapping}.
If a port mapping $q$ extends a given partial port mapping $p$ by assigning some previously undefined ports, we say that $q$ is \emph{compatible} with $p$.

\section{The Synchronous Clique under Simultaneous Wake-up}
\label{sec:sync-clique}
\subsection{Communication Graphs} \label{sec:comm_graphs} \label{sec:model}

For deterministic algorithms, we assume that each node is equipped with a unique ID chosen from an \emph{ID universe} $U$, which is a ``sufficiently large'' set of integers.
An adversary chooses an $n$-subset from $U$, called \emph{(valid) ID assignment}, which, together with the port mapping, fully determines the execution of any given deterministic leader election algorithm (since the network is synchronous and all the nodes wake up simultaneously).
In some parts of our analysis, we are only interested in the behavior of the algorithm up to some specific round $r$, and we define the \emph{round $r$ execution prefix} to consist of the first $r$ rounds of the execution.

We say that two executions $E_1$ and $E_2$ are \emph{indistinguishable} for a node $u$ up to round $r$ if $u$ has the same ID in both executions and $u$ receives the exact same set of messages in each round until the start of round $r$.
A simple consequence of indistinguishability is that node $u$ will behave the same in both $E_1$ and $E_2$ and also output the same value in both executions, assuming $r$ is sufficiently large.
Our lower bounds in this section assume that $n$ is a power of 2; we do not need this restrictions for our algorithms, unless stated otherwise.

We need to reason about which parts of the network have communicated with each other, which is captured by the communication graph:

\begin{definition}[Communication Graph] \label{def:comm_graph}
Consider a round $r\ge 1$, an ID assignment $I$, and a partial port mapping $p$.
In the {round $r$ communication graph}, we have the same node set as the clique network, and there is a directed edge $(u,v)$, if node $u$ sent a message over a port that is connected to $v$ in some round $r'<r$.
The behavior of the algorithm is fully determined by $I$ and $p$, and hence we write $\mathcal{G}_r^{I,p}$ to denote the resulting \emph{round $r$ communication graph}.
\end{definition}
For brevity, we sometimes omit the superscripts $I$ and $p$  when they are irrelevant or clear from the context.
Note that $\mathcal{G}_1$ is the empty graph that consists only of singleton nodes without any edges.

Consider some weakly connected components $C_1,\dots,C_k \subseteq \mathcal{G}_r^{I,p}$, and let $S = \bigcup_{i=1}^k V(C_i)$, where $V(C_i)$ denotes the nodes in $C_i$.
By definition, the nodes in $S$ do not have any edges to nodes outside of $S$ in graph $\mathcal{G}_r^{I,p}$.
Therefore, the behavior of the nodes in $S$ up to round $r$ only depends on their $|S|$ IDs and port mapping and is, in particular, independent of the IDs assigned to the remaining nodes.
We say that \emph{$S$ is isolated up to round $r$ under $I$ and $p$}.
For the remainder of this section, whenever we use the term "component", we mean a weakly connected component. 

\begin{definition}[Capacity of Components] \label{def:capacity}
We say that a \emph{component $C$ has capacity $\lambda$} if, for each node $u \in C$, it holds that $u$ has no incoming and outgoing edge to at least $\lambda$ nodes in $C$.
\end{definition}
In other words, a capacity of $\lambda$ means that each node has at least $\lambda$ other nodes in $C$ to which it has not communicated yet. 

Since any deterministic algorithms must correctly work on all ID assignments and any port mapping between the node IDs, it is admissible for us to choose the mapping of the unused ports of nodes ``adaptively'', i.e., depending on the current state of the nodes.
The proof of the following lemma is immediate:

\begin{lemma} \label{lem:capacity}
  Let $C \subseteq \mathcal{G}_r^{I,p}$ be a component with capacity $\lambda$ and suppose that the nodes in $C$
  send (in total) at most $t\le\lambda$ messages $m_1,\dots,m_t$ during the interval of rounds $[r,r']$, for some round $r'\ge r$.
  Then, there exists a port mapping that is compatible with $p$ such that messages $m_1,\dots,m_t$ are received by nodes in $C$.
\end{lemma}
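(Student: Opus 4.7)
My plan is to build the desired extension of $p$ greedily, processing the messages $m_1,\dots,m_t$ in temporal order and committing, for each fresh port used by a sender in $C$, a destination that lies in $C$. I would carry out an induction on $i$ with the invariant that after processing $m_i$, the partial port mapping $p_i$ extending $p$ (i) routes every committed edge to a pair of nodes in $C$, and (ii) preserves a capacity of at least $\lambda - i$, in the sense that each $u \in C$ has no incoming or outgoing edge to at least $\lambda - i$ nodes of $C$ in the currently committed communication graph. The base case $p_0 = p$ follows from two observations: because $C$ is a weakly connected component of $\mathcal{G}_r^{I,p}$, none of $p$'s edges can leave $C$; and by the capacity hypothesis each $u \in C$ starts with at least $\lambda$ non-adjacent nodes in $C$.

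For the inductive step, consider the $i$-th message $m_i$, sent by some $u_i \in C$ over port $\pi_i$. If $\pi_i$ is already in the domain of $p_{i-1}$, then its destination is forced, and by the invariant it already lies in $C$, so I set $p_i = p_{i-1}$ and no new edge is added. Otherwise, the invariant together with $i-1 \le t-1 < \lambda$ guarantees a node $v \in C$ non-adjacent to $u_i$ in the currently committed graph. Applying the capacity bound symmetrically to $v$ shows that $v$'s committed degree is at most $|C| - 1 - (\lambda - (i-1))$, which leaves at least $n - |C| + \lambda - (i-1) \ge 1$ free ports on $v$'s side, into which $\pi_i$ can be mapped. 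I extend $p_{i-1}$ by this single new pair to obtain $p_i$. The only new edge is $(u_i, v)$, so the capacity at any node drops by at most one, giving the invariant for $i$. After processing all $t$ messages, any completion of $p_t$ to a total bijective port mapping is compatible with $p$ and routes every $m_i$ inside $C$.

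The only real subtlety, and what the hypothesis $t \le \lambda$ is tailored to, is the bookkeeping ensuring that the capacity invariant never collapses before we are done. Both ``there is still a non-adjacent candidate $v$ for $u_i$'' and ``that candidate still has a free port for us to use'' reduce to the same numerical inequality $\lambda - (i-1) \ge 1$, so the argument goes through uniformly and requires no new combinatorial tools beyond the capacity condition and the fact that $p$ respects the component structure of $\mathcal{G}_r^{I,p}$.
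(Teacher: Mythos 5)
Your proof is correct, and it fills in the greedy bookkeeping that the paper compresses into a single phrase (``The proof of the following lemma is immediate''). The key points — that all edges of $p$ incident to $C$ stay inside $C$ because $C$ is a weakly connected component, that each new routed message reduces the per-node capacity by at most one so $t\le\lambda$ suffices, and that the capacity bound at the chosen receiver $v$ also guarantees a free port on $v$'s side — are exactly the observations that make the lemma ``immediate,'' and you have laid them out carefully. The one thing worth being a little more explicit about is that the sequence of messages $m_1,\dots,m_t$ is itself determined on the fly by the port mapping you are building; your phrase ``processing in temporal order'' implicitly handles this, since at each step the behavior of the nodes in $C$ up to the current point is fixed by the choices already committed, and hence the next message is well-defined before you decide where to route it. Stating this explicitly would close the only potential circularity in the argument, but the construction you give does resolve it correctly.
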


\begin{definition}[Restricted Execution Prefix] \label{def:prefix}
  Consider a set $X$ of at most $\frac{n}{2}$ IDs and suppose that we execute the given algorithm for exactly $r$ rounds on a set $V_X$ of $|X|$ nodes with IDs in $X$ that are connected according to some partial port mapping $p$.
  The resulting execution prefix will depend on the partial port mapping $p$, and we point out that the algorithm assumes that there are $n$ nodes and hence each node in $V_X$ has $n-1$ ports.
  We define the set $\Exec_r(X)$ to contain every round $r$ execution prefix of $V_X$,
  in which the port mapping $p$ is such that every message sent by a node $u \in X$ is received by some $u'\in X$.
\end{definition}
Intuitively speaking, $\Exec_r(X)$ consists of all of the execution prefixes where the nodes with IDs in $X$ operate ``undisturbed'', i.e., in isolation, from the rest of the network.
Notice that in every execution prefix $E \in \Exec_r(X)$, every node sends messages over less than $|X|$ distinct ports by round $r$, since otherwise, it would have sent a message to at least one node outside $X$ and thus $E \notin \Exec_r(X)$.

The next definition captures the scenario that nodes with IDs in a certain set $X$ can terminate ``on their own'', i.e., without sending messages to nodes with IDs not in $X$.
Consequently, there must exist some round $r$ after which the components formed by these nodes in $\mathcal{G}_r$ do not add any more outgoing edges.

\begin{definition}[Terminating and Expanding Components] \label{def:static}
Consider a set of IDs $X$ of size at most $n/2$. 
We say that \emph{$X$ forms terminating components} if there exists a round $r$ such that, for every execution prefix $E\in \Exec_r(X)$, it holds that all nodes with IDs in $X$ have terminated in $E$ by round $r$.
Otherwise, we say that \emph{$X$ forms expanding components}.
\end{definition}
We point out that when we say a set of IDs $X$ forms terminating components, it is for each partial port mapping that results in an execution prefix in $\Exec_r(X)$.
Throughout this section, all logarithms are assumed to be of base $2$ unless stated otherwise.
\onlyShort{In the full paper~\cite{DBLP:journals/corr/abs-2301-08235}, we show the following:}\xspace

\begin{lemma} \label{lem:number_static}
  Consider any ID universe $U'$ and any integer $\ell \le \log_2 n - 1$.
  There exist at most $2^{\log_2 n - \ell}$ disjoint subsets $X_i \in {U' \choose 2^\ell}$
  of size $2^\ell$ such that $X_i$ form terminating components\footnote{We will follow the standard notation from extremal combinatorics where ${U' \choose k}$ denotes the family of all the $k$-element subsets of $U'$.}.
\end{lemma}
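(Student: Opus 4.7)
The plan is to argue by contradiction: assume there exist more than $2^{\log_2 n - \ell} = n/2^\ell$ pairwise disjoint $2^\ell$-subsets $X_1,\dots,X_t$ of $U'$ that each form terminating components. Write $s := n/2^\ell$, and note $s \ge 2$ since $\ell \le \log_2 n - 1$. The strategy is to combine $s$ of these subsets into a global $n$-node ID assignment on which the algorithm must violate leader-election correctness.

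For each $X_i$, fix any partial port mapping $p_i$ under which $X_i$ is isolated, and let $E_i \in \Exec_{r_i}(X_i)$ be the resulting round-$r_i$ execution prefix, where $r_i$ is the terminating round guaranteed by Definition~\ref{def:static}. Classify $X_i$ as \emph{leader-producing} if some node in $X_i$ outputs the ``I am the leader'' bit in $E_i$, and \emph{leaderless} otherwise; let $L$ be the family of leader-producing sets. This classification is an intrinsic property of the pair $(X_i,p_i)$ and does not depend on which other sets are present in a global execution.

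The central step is the following gluing argument. For any $s$-element subfamily $S \subseteq \{X_1,\dots,X_t\}$, the union $I_S := \bigcup_{X_i \in S} X_i$ has exactly $n$ elements and is a valid ID assignment. The partial port mappings $\{p_i\}_{X_i \in S}$ can be extended to a global port mapping $p$ of the $n$-node clique by reusing each $p_i$ on the ports it already constrains and then pairing the remaining unused ports arbitrarily into a bijection between distinct nodes. Under $(I_S,p)$ every $X_i \in S$ remains isolated, so each node in $X_i$ behaves exactly as in $E_i$; in particular, all nodes terminate, and the number of nodes outputting the leader bit equals $|L \cap S|$, which must equal $1$ by correctness of leader election.

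The contradiction now comes from a case split on $|L|$: if $|L| = 0$, any $S$ produces no leader; if $|L| \ge 2$, choosing $S$ to contain two members of $L$ (possible because $t \ge s$) yields at least two leaders; and if $|L| = 1$, then since $t \ge s+1$ we may choose $S$ to avoid the unique element of $L$, again producing no leader. Each case contradicts correctness, so $t \le s = 2^{\log_2 n - \ell}$. The step that needs the most care is the port-mapping extension, which reduces to completing a partial involution on the even-sized set of unused ports of the $n$-node clique; this is routine once one checks that every node has enough unused ports to avoid being matched to itself.
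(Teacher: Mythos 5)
Your proof is correct and takes essentially the same approach as the paper: both glue the isolated, terminating subsets into full $n$-node ID assignments via compatible port mappings and invoke indistinguishability to control the number of leaders. The only difference is organizational — the paper's proof constructs two specific executions $E_a$ and $E_b$ to pin down a leader-producing set in $X_1$ and another in $X_{s+1}$ before forming a third execution with two leaders, whereas your classification into leader-producing versus leaderless sets followed by the three-way case split on $|L|$ reaches the same contradiction a bit more systematically.
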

\onlyLong{
\begin{proof}
  Fix some integer $\ell$ that satisfies the premise and assume towards a contradiction that there are at least $2^{\log_2 n - \ell}+1$ disjoint subsets $X_1,\dots X_{2^{\log_2 n - \ell}+1}$, each of which results in terminating components (see Definition~\ref{def:static}).

  That is, for each $i$, there exists a partial port mapping $p_i$ such that the nodes with IDs in $X_i$ only send messages to other nodes with IDs in $X_i$.
  Consider the set of IDs $X_a = \bigcup_{i=1}^{2^{\log_2 n - \ell}}X_i$.
  Since each $X_i$ has size exactly $2^\ell$, and these sets are pairwise disjoint, $X_a$ consists of $n$ unique IDs.
  Let $E_a$ be the execution of the algorithm on $X_a$ with the port mapping $p$ such that, for each node $u$ with ID in $X_i$ and all $k \in [n-1]$, we define $p(u,k) := p_i(u,k)$.
  By the correctness of the algorithm, this yields a leader with some ID $a \in X_a$.
  Without loss of generality, assume that $a \in X_1$.
  Now consider the set $X_b = \bigcup_{i=2}^{2^{\log_2 n - \ell}+1}X_i$.
  Since $X_b$ is disjoint union of $2^{\log_2 n - \ell}$ sets of size $2^\ell$ we know that $|X_b|=|X_a|$, and thus $X_b$ is also a valid ID assignment for $n$ nodes.
  Moreover, by combining the port mappings associated with $X_2,\dots,X_{2^{\log_2 n - \ell}+1}$ in a similar manner as for $E_a$, we obtain an execution $E_b$ and a leader with some ID $b$ when executing the algorithm on $X_b$.
  As we have assumed that each $X_i$ forms terminating components,
  it follows that, for each $u$ with ID in $X_i$ ($i \in [2,\log_2 n - \ell]$), some execution prefix in $\Exec_r(X_i)$ is indistinguishable from $E_a$ for $u$ as well as from $E_b$, for any round $r$.
  Thus, all the components formed by the sets $X_2,\dots,X_{\log_2 n - \ell}$ must behave the same in both executions $E_a$ and $E_b$ until the nodes have terminated, which implies that
  $b \in X_{2^{\log_2 n - \ell}+1}$.
  Now consider the execution $E$ on $X_1 \cup X_{2^{\log_2 n - \ell}+1} \cup
  \bigcup_{i=2}^{2^{\log_2 n - \ell}-1} X_i $ with port mapping $p$.
  It follows that we obtain leaders with IDs $a \in X_1$ and $b \in X_{2^{\log_2 n - \ell}+1}$ in $E$, yielding a contradiction.
\end{proof}
}

An immediate consequence of Lemma~\ref{lem:number_static} is that a sufficiently large ID universe must contain a large subset of IDs without any terminating component:

\begin{corollary} \label{cor:no_static}
  Consider any ID universe $U'$.
  There exists a set of IDs $U \subseteq U'$ of size $|U'| - n\log_2 n$ such that, for any $\ell \le \log_2 n - 1$ and any subset $X \subseteq U$ of size $2^\ell$, it holds that $X$ does not form terminating components.
\end{corollary}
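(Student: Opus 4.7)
The plan is to prove the corollary by a greedy ``maximal packing'' argument, removing a small set of ``bad'' IDs for each cardinality $2^\ell$ and taking the intersection.

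First I would fix any $\ell$ with $1 \le \ell \le \log_2 n - 1$ and let $\mathcal{F}_\ell$ denote the family of \emph{all} subsets $X \subseteq U'$ of size $2^\ell$ that form terminating components. Inside $\mathcal{F}_\ell$, pick a \emph{maximal} subfamily $\mathcal{D}_\ell \subseteq \mathcal{F}_\ell$ whose members are pairwise disjoint (such a maximal $\mathcal{D}_\ell$ exists by Zorn's lemma, or simply by greedy selection since $U'$ is finite in any realistic instance, but a maximal pairwise disjoint subfamily exists in general). By Lemma~\ref{lem:number_static} applied to the universe $U'$, the subfamily $\mathcal{D}_\ell$ contains at most $2^{\log_2 n - \ell}$ sets, so the union $R_\ell := \bigcup \mathcal{D}_\ell$ satisfies $|R_\ell| \le 2^{\log_2 n - \ell}\cdot 2^\ell = n$.

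Next I would set
\[
  U := U' \setminus \bigcup_{\ell=1}^{\log_2 n - 1} R_\ell.
\]
Since there are at most $\log_2 n$ indices $\ell$ and $|R_\ell|\le n$ for each, we have $|U| \ge |U'| - n\log_2 n$, matching the bound claimed in the corollary.

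The remaining step, which is the crux, is to verify that no ``bad'' subset survives inside $U$. Suppose for contradiction that some $X \subseteq U$ of size $2^\ell$ (with $\ell \le \log_2 n - 1$) still forms terminating components, so $X \in \mathcal{F}_\ell$. Because $X \subseteq U$, by construction $X$ is disjoint from $R_\ell = \bigcup \mathcal{D}_\ell$, hence $X$ is disjoint from every member of $\mathcal{D}_\ell$. But then $\mathcal{D}_\ell \cup \{X\}$ is a strictly larger pairwise disjoint subfamily of $\mathcal{F}_\ell$, contradicting the maximality of $\mathcal{D}_\ell$. I do not anticipate a real obstacle beyond making sure the $\ell$-range and arithmetic line up with the stated size bound $|U'|-n\log_2 n$; all the genuine content is packaged in Lemma~\ref{lem:number_static}, and the rest is a standard maximal-packing argument.
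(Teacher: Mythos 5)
Your proof is essentially the argument the paper treats as an immediate consequence of Lemma~\ref{lem:number_static}, and it is correct: a maximal pairwise-disjoint packing $\mathcal{D}_\ell$ of terminating $2^\ell$-subsets has at most $2^{\log_2 n - \ell}$ members by the lemma, so $|R_\ell| = |\bigcup \mathcal{D}_\ell| \le n$, and any terminating $2^\ell$-subset contained in $U' \setminus R_\ell$ would extend $\mathcal{D}_\ell$, contradicting maximality. Summing over $\ell$ gives the claimed size bound (and one can shrink $U$ further to make the size \emph{exactly} $|U'| - n\log_2 n$, since the property is inherited by subsets).

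One small slip: you restrict $\ell$ to $1 \le \ell \le \log_2 n - 1$, whereas the corollary (and its use in Lemma~\ref{lem:growing}, where the parameter $i$ starts at $0$) also needs the case $\ell = 0$, i.e.\ singleton sets that form terminating components. A single node that decides ``non-leader'' without any communication would be exactly such a set, so $\ell = 0$ is not vacuous. The fix is trivial — just include $R_0$ in the union; this adds one more term of size at most $n$, and the total removed is still at most $n\log_2 n$ since the range $\ell \in \{0,\dots,\log_2 n - 1\}$ has exactly $\log_2 n$ indices.
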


\subsection{A Lower Bound on the Communication-Time Tradeoff in the Synchronous Clique} \label{sec:tradeoff}

\begin{theorem} \label{thm:tradeoff}
  Suppose that $n$ is a sufficiently large power of two.
  Consider any deterministic algorithm that elects a leader in the synchronous clique of $n$ nodes and terminates in $T(n)$ rounds while sending at most $n\cdot f(n)$ messages, where $f(n) > 1$ is any increasing function of $n$.
  If the IDs of the nodes are chosen from a set of size at least $2n\log_2 n + n$, then it must be that
    $T(n) >  \frac{\log_2 n - 1}{\log_2 f(n) + 1} + 1.$
  Consequently, any deterministic $k$-round algorithm requires at least $\Omega\lt( \lt(\frac{n}{2}\rt)^{1 + {1}/{(k-1)}}\rt)$ messages.
\end{theorem}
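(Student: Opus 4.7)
The plan is to argue by contradiction. Suppose some deterministic algorithm $\mathcal{A}$ elects a leader in $T(n) \le \frac{\log_2 n - 1}{\log_2 f(n) + 1} + 1$ rounds while sending at most $n\,f(n)$ messages; rearranging yields $(2f(n))^{T(n)-1}\le n/2$. By Corollary~\ref{cor:no_static} applied to the given universe of size at least $2n\log_2 n+n$, I may restrict attention to an ID subset $U$ of size at least $n\log_2 n + n$ such that no power-of-$2$ subset of $U$ of size at most $n/2$ forms terminating components. The goal is to exhibit an ID assignment $I\subseteq U$ and a port mapping $p$ under which some $X \subseteq I$ of size a power of two and at most $n/2$ is isolated throughout the $T(n)$-round execution of $\mathcal{A}$. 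Since $\mathcal{A}$ must terminate within $T(n)$ rounds for \emph{every} port mapping, extending any partial port mapping on $X$ that keeps $X$ isolated to a full port mapping on $V$ yields a full execution in which the nodes of $X$ terminate by round $T(n)$. Because the restricted execution in $\Exec_{T(n)}(X)$ induced by the partial mapping is indistinguishable to the $X$-nodes from that full execution, $X$ forms terminating components in the sense of Definition~\ref{def:static}, contradicting Corollary~\ref{cor:no_static}.

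To produce such an $X$, I build a chain $X_0 \subseteq X_1 \subseteq \dots \subseteq X_{T(n)-1} \subseteq U$ adaptively, together with compatible partial port mappings $p_0 \subseteq p_1 \subseteq \dots$, maintaining the invariants that $X_r$ is isolated up to round $r+1$ under $p_r$, $|X_r|$ is a power of two, and $|X_r|\le (2f(n))^r$. Start with $X_0$ consisting of any single ID from $U$. To pass from step $r$ to $r+1$, simulate round $r+1$ of $\mathcal{A}$ on the restricted execution of $X_r$ under $p_r$. Messages sent over ports already committed by $p_r$ arrive inside $X_r$ by construction; each message over a newly-opened port of some $u\in X_r$ still needs a destination. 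Using Lemma~\ref{lem:capacity}, I pair as many of these new-port messages against each other, or against unused ports inside $X_r$, as the capacity of $X_r$ permits, and route the surplus to fresh recipients drawn from $U\setminus X_r$. These recipients, padded if necessary to the next power of two inside $U$, become the new members of $X_{r+1}$.

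The crux is establishing the growth bound $|X_{r+1}|\le 2f(n)\cdot|X_r|$. The challenge is that in the isolated execution of $X_r$ its nodes ``believe'' there are $n$ other nodes in the network and may, in principle, send many messages. I plan to control this by a counting argument over the whole ID assignment: for any $I\subseteq U$ of size $n$, the algorithm sends at most $n\,f(n)$ messages in total, so the average per-node message count in the execution on $I$ is at most $f(n)$. Averaging over the $\binom{|U|}{n}$ possible assignments and invoking the pigeonhole principle yields an $I$---and a chain of induced candidate sets $X_r$ of the desired sizes---for which the nodes of $X_r$ send at most $O(|X_r|\,f(n))$ messages in round $r+1$ of the isolated execution. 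Lemma~\ref{lem:capacity} then absorbs the first $\Theta(|X_r|)$ of these internally, leaving at most $O(|X_r|\,f(n))$ for which fresh recipients are needed, and padding to the next power of two yields $|X_{r+1}|\le 2f(n)\cdot|X_r|$.

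Iterating the step $T(n)-1$ times produces $|X_{T(n)-1}|\le (2f(n))^{T(n)-1}\le n/2$, and extending $p_{T(n)-1}$ to a full port mapping of $V$ completes an execution in which $X_{T(n)-1}$ is isolated throughout; by the opening paragraph, $X_{T(n)-1}$ forms terminating components, yielding the desired contradiction with Corollary~\ref{cor:no_static}. The consequence that any $k$-round algorithm uses $\Omega((n/2)^{1+1/(k-1)})$ messages is then immediate by rearranging. The main obstacle I foresee is making the averaging argument work adaptively and uniformly across all $T(n)-1$ rounds: the favorable ID assignment for step $r+1$ must be consistent with commitments already made at steps $1,\dots,r$, so the cleanest way is likely to fix $I$ once at the start from a sub-universe of $U$ satisfying all per-round bounds simultaneously. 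The slack $|U|\ge n\log_2 n+n$ provided by Corollary~\ref{cor:no_static} is what leaves enough room for this global averaging.
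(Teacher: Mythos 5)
Your high-level blueprint matches the paper's: start from Corollary~\ref{cor:no_static}, assume $T(n)$ is small, and grow an isolated set whose size increases by a factor of roughly $2f(n)$ per round, arriving at a terminating component of size at most $n/2$. The arithmetic $(2f(n))^{T(n)-1}\le n/2$ and the slack budget $|U|\ge n\log_2 n + n$ (allowing roughly $n$ pruned IDs per round over at most $\log_2 n$ rounds) are both on the mark. However, two steps in your construction do not work as written, and they are precisely the steps where the paper's argument earns its keep.

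First, the ``averaging and pigeonhole over $\binom{|U|}{n}$ assignments'' is not a proof that the isolated execution of $X_r$ sends $O(|X_r|f(n))$ messages in round $r+1$. The $n\,f(n)$ message budget constrains the \emph{full} $n$-node execution, whereas the quantity you need to bound is what the $|X_r|$ nodes do in an isolated execution, and those nodes could a priori be exactly the busy ones. The paper bridges this gap (Lemma~\ref{lem:num_components}, Claim~\ref{cl:costly}) not by averaging but by a pruning argument that crucially relies on maintaining a \emph{partition of the entire $n$-node assignment into $n/2^{\sigma_r}$ equal-sized isolated blocks}: if more than $n/2^{\sigma_r}-1$ disjoint blocks were ``costly'' (sent $\ge 2^{\sigma_r}(2f(n)-1)$ messages), their union would be a valid $n$-node ID assignment whose round-$(r+1)$ message count exceeds $n\,f(n)$, a contradiction. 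Without the full partition, you have no way to combine your single chain's witnesses into a single violating execution.

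Second, routing the surplus messages to ``fresh recipients drawn from $U\setminus X_r$'' and declaring them new members of $X_{r+1}$ does not preserve your invariant. In the isolated execution in $\Exec_{r+1}(X_{r+1})$, those fresh nodes have been running the algorithm since round $1$, and nothing prevents them from having already sent messages hitting $X_r$ in rounds $1,\dots,r$, which would destroy the isolation of $X_r$ and change the very behavior you simulated. The paper's decomposition dissolves this issue by construction: at round $r$ \emph{every} block $X_i$ is isolated from every other, so when $2^t$ of them are merged into $X_j'$ in round $r+1$, the new members' prefixes are guaranteed compatible. To repair your proposal you would need to replace the single chain with a full partition invariant, at which point you would essentially be re-deriving Lemma~\ref{lem:num_components}.
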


In the remainder of this section, we prove Theorem~\ref{thm:tradeoff}.
Let $U'$ be the given ID universe.
Corollary~\ref{cor:no_static} ensures that, after removing $n \log_2 n$ IDs from $U'$, the remaining IDs do not produce any terminating components of certain sizes (up to $n/2$).

Define $U \subseteq U'$ to be this set of the remaining IDs and observe that
\onlyLong{\begin{align*}
  |U| \ge  n\log_2 n + n.
\end{align*}}%
\onlyShort{$
  |U| \ge  n\log_2 n + n.
$}\xspace
Note that any execution of the algorithm on an ID assignment from $U$ results in a communication graph where the largest component contains a majority of the nodes upon the termination of the algorithm.

Assume towards contradiction that $T(n) \le  \frac{\log_2 n - 1}{\log_2 f(n) + 1} + 1.$
The following lemma is the key technical component for proving Theorem~\ref{thm:tradeoff}.
Intuitively speaking, it shows that by selectively excluding certain ID assignments, we can prevent components from growing too quickly.

\begin{lemma} \label{lem:num_components}
  Suppose that $n$ is a power of two.
  For every round $r \le T(n)$, there is a set $U_r \subseteq U$ of size at least $|U| - n\,(r-1)$ such that, for every ID assignment $I \subseteq U_r$, there exists a partial port mapping $p$, and a decomposition of $\mathcal{G}_{r}^{I,p}$ into sets of nodes $X_1,\dots,X_{n/2^{\sigma_r}}$,
  where
  \begin{align}
    \sigma_r := \lb(\lb\lceil \log_2 f(n) \rb\rceil + 1\rb)(r-1) \label{eq:sigma}
  \end{align}
  and the following properties hold for every set $X_i$:
  \begin{enumerate}
    \item[(A)] For some integer $k_I\ge 1$ (depending on $I$), there exist components $C_1,\dots,C_{k_I} \subseteq \mathcal{G}_r^{I,p}$ such that $X_i = \bigcup_{j=1}^{k_I} V(C_j)$, i.e., there are no edges between nodes in $X_i$ and nodes not in $X_i$.
    \item[(B)] $X_i$ contains exactly $2^{\sigma_r}$ nodes. %
  \end{enumerate}
\end{lemma}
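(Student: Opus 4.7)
My plan is to prove the lemma by induction on $r$. For the base case $r=1$, we have $\sigma_1 = 0$, so the desired partition is into $n$ singletons; since $\mathcal{G}_1^{I,p}$ is empty for every $I$ and $p$, every node forms its own connected component, and properties (A) and (B) hold trivially with $U_1 := U$. The inductive step is where all the real work lies.

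For the inductive step, assume the claim for $r$ and let $M := 2^{\sigma_{r+1} - \sigma_r} = 2^{\lceil \log_2 f(n) \rceil + 1}$, which satisfies $M \ge 2 f(n)$. The intention is to form each new group $X_j^{r+1}$ as the union of exactly $M$ of the old groups $X_i^r$, so that $|X_j^{r+1}| = M \cdot 2^{\sigma_r} = 2^{\sigma_{r+1}}$ and $k_{r+1} = k_r / M$. For any $I \subseteq U_{r+1}$ (with $U_{r+1} \subseteq U_r$ to be defined), the inductive hypothesis applied to $I \subseteq U_r$ gives a partial port mapping $p_r$ and a decomposition $X_1^r, \dots, X_{k_r}^r$. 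I would extend $p_r$ to $p_{r+1}$ by partitioning the $k_r$ groups into $k_{r+1}$ batches $B_1, \dots, B_{k_{r+1}}$ of $M$ old groups each, and routing every round-$r$ message sent through a previously-unused port of a node in $B_j$ to another node of $B_j$. Lemma~\ref{lem:capacity} then guarantees the existence of a compatible port assignment realizing this routing, provided the number of messages originating inside $B_j$ does not exceed the batch's internal capacity, which is at least $2^{\sigma_{r+1}}$ minus the ports already used within $B_j$ under $p_r$. The resulting decomposition $\{X_j^{r+1}\}$ then satisfies (A) because no edge crosses a batch boundary, and (B) because each batch contains exactly $2^{\sigma_{r+1}}$ nodes.

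The main obstacle is showing that the batches can always be chosen so that Lemma~\ref{lem:capacity} applies, and that the ``bad'' ID assignments for which no such batching exists can be excluded by removing at most $n$ IDs from $U_r$. The two quantitative inputs are the global message bound $n f(n)$ and the factor-of-$2$ slack in $M$: an averaging argument over groups shows that the per-round outgoing traffic of a batch of $M \ge 2 f(n)$ old groups is well below $2^{\sigma_{r+1}}$ for most ID assignments, so the capacity condition of Lemma~\ref{lem:capacity} is met; the remaining ``bad'' IDs, of which there are at most $n$, are removed to form $U_{r+1}$, yielding $|U_{r+1}| \ge |U_r| - n$. Corollary~\ref{cor:no_static}, which is already baked into the choice of $U$, simultaneously guarantees that no batch $X_j^{r+1}$ (of size at most $n/2$ for $r+1 \le T(n)$) prematurely terminates, so the inductive structure carries nontrivial content through round $T(n)$.
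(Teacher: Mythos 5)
Your overall strategy---induction on $r$, merging $M = 2^{\lceil\log_2 f(n)\rceil+1}$ old groups into each new group, and pruning bad IDs to form $U_{r+1}$---is the same as the paper's, but both quantitative steps you wave at are wrong in ways that cannot be patched by filling in details. The claim that an averaging argument makes the round-$(r+1)$ traffic of a batch ``well below $2^{\sigma_{r+1}}$ for most ID assignments'' is false: with at most $n f(n)$ messages in total and $n/2^{\sigma_r}$ old groups, the \emph{average} old group already sends $2^{\sigma_r} f(n)$ messages in a round, so a batch of $M \geq 2f(n)$ groups sends on the order of $2^{\sigma_{r+1}} f(n)$ messages---a factor $f(n)$ larger than the batch's per-node capacity $2^{\sigma_{r+1}} - 2^{\sigma_r}$. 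Hence Lemma~\ref{lem:capacity}, which requires the \emph{total} number of messages from the set to be at most the per-node capacity $\lambda$ (and is anyway stated for a single component, not a union), does not apply even to an average batch, let alone the worst one: a single heavy old group can absorb nearly the entire budget, and no choice of batching averages it away. The paper never bounds per-batch traffic at all. It ensures that, after pruning, each \emph{old group} sends at most $\mu_{r+1} := (2f(n)-1)2^{\sigma_r}$ new messages in round $r+1$, and then routes that group's messages directly to the $(2^t-1)2^{\sigma_r} \geq \mu_{r+1}$ nodes of the \emph{other} $2^t-1$ groups in its batch; Lemma~\ref{lem:capacity} is not invoked in this step.

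Second, the assertion that ``at most $n$ bad IDs'' need be removed per round is exactly what must be proven, and it is not an averaging statement. The paper's Claim~\ref{cl:costly} declares an old group $X_{I,p}$ ``costly'' if it sends $\geq \mu_{r+1}$ messages in round $r+1$, and each pruning iteration removes only the $2^{\sigma_r}$ IDs of one costly group---this fine granularity is essential, since a single removal at the ID-assignment or batch scale would already exhaust the budget of $n$ deletions. The number of iterations is then bounded by $n/2^{\sigma_r}-1$ via a gluing argument: the removed groups are pairwise disjoint by construction, so if $n/2^{\sigma_r}$ of them existed, their union would be a valid $n$-node ID assignment in which, by indistinguishability of the isolated prefixes, each group still sends $\geq \mu_{r+1}$ messages in round $r+1$, for a total of $(n/2^{\sigma_r})\mu_{r+1} = n(2f(n)-1) > n f(n)$ messages, contradicting the budget. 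This disjoint-union-and-glue argument, together with defining ``costly'' at the old-group scale, is the missing core of the proof.
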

\begin{proof}
  We proceed by induction on $r$.
  For the basis $r=1$, we define $U_1 := U$.
  Properties~(A) and (B) are immediate since $\mathcal{G}_1^{I,p}$ is the empty graph that consists only of singleton nodes without any edges, for every ID assignment $I \subseteq U_1$ and port mappings $p$.
  Moreover, $\sigma_1 = 0$ and hence $2^{\sigma_1} = 1$, as required.
  Now suppose that the lemma holds for some $r\ge 1$, i.e., there exists a set $U_r$ of size $|U| - n\,(r-1)$ such that all ID assignments from $U_r$ satisfy the inductive hypothesis with respect to $r$.
  We define
  \begin{align}
    \mu_{r+1} :=  2^{\sigma_r}\lb(2\, f(n)-1\rb). \label{eq:mu}
  \end{align}
  Consider any ID assignment $I \subseteq U_r$. Let $p$ be a partial port mapping and $X_1,\dots,X_{n/2^{\sigma_r}}$ be the decomposition of $\mathcal{G}_r^{I,p}$ induced by $I$ as guaranteed by the inductive hypothesis.
  We say that $I$ is \emph{costly}, if, when executing the algorithm on $I$ and $p$, there exists some $X_{I,p} \in \{X_1,\dots,X_{n/2^{\sigma_r}}\}$ such that the nodes in $X_{I,p}$ send at least $\mu_{r+1}$ messages during round $r+1$. We call the subset $X_{I,p}$ a \emph{costly subset} of $\mathcal{G}_{r}^{I,p}$. 
  We will iteratively prune the IDs of costly subsets from $U_r$ as follows:
  Let $I_1$ be any costly ID assignment from $U_r$. Then, there exists a costly subset $X_{I_1, p_1} \subset \mathcal{G}_{r}^{I_1,p_1}$ (where $p_1$ is the partial port mapping associated with $I_1$
according to the inductive hypothesis), which contains $2^{\sigma_r}$ nodes that send  (in total) at least $\mu_{r+1}$ messages in round $r+1$.
  We remove the IDs of $X_{I_1, p_1}$ from $U_r$ and obtain a slightly smaller ID universe denoted by $U_r^{(1)} \subseteq U_r$.
  Again, we check whether there exists an ID assignment $I_2 \subseteq U_r^{(1)}$ with a costly subset $X_{I_2, p_2}$ and proceed by removing the offending set of $2^{\sigma_r}$ IDs in the decomposition induced by $I_2$ from $U_r^{(1)}$, and so on.

  This pruning process stops after $\ell$ iterations, for some integer $\ell$, once we can no longer find any costly ID assignment in the ID universe $U_r^{(\ell)}$.
  We define $U_{r+1} := U_r^{(\ell)}$.
  \onlyShort{In the full paper~\cite{DBLP:journals/corr/abs-2301-08235}, we prove the following claim:}\xspace
  \begin{claim} \label{cl:costly}
    The pruning process stops after at most $\ell \le \frac{n}{2^{\sigma_r}}-1$ iterations, and
    \onlyLong{\begin{align*}
    |U_{r+1}| \ge |U_{r}| - n \ge |U| - n\,r.
    \end{align*}}%
    \onlyShort{
    $
    |U_{r+1}| \ge |U_{r}| - n \ge |U| - n\,r.
    $
    }\xspace
  \end{claim}
 \onlyLong{
  \begin{proof}[Proof of Claim~\ref{cl:costly}]
    By a slight abuse of notation, the same variable may refer to a set of nodes as well as to their IDs.
    Assume towards a contradiction that we remove at least $\ell := n / 2^{\sigma_r}$ sets $X_{I_1, p_1},\dots,X_{I_\ell, p_\ell}$,
    each of size $2^{\sigma_r}$, due to having identified some costly ID assignments $I_1,\dots,I_\ell$. Recall that $p_i$ is the partial port mapping associated with $I_i$ according to the inductive hypothesis of Lemma~\ref{lem:num_components}.
    
    According to the pruning process described above, we identify the next costly ID assignment $I_{i}$ after having removed the set $X_{I_{i-1}, p_{i-1}} \subseteq I_{i-1}$ of $2^{\sigma_r}$ IDs.
    It follows that the ID sets $X_{I_1, p_1},\dots,X_{I_\ell, p_\ell}$ are pairwise disjoint, which tells us that the set $J := \bigcup_{i=1}^{\ell} X_{I_i, p_i}$ is a valid ID assignment for $n$ nodes.
    By combining the port mappings $p_1,\dots,p_\ell$, we obtain a partial port mapping $p$ in a natural way: for $u \in X_{I_i, p_i}$, we define $p(u,k) := p_i(u,k)$.

    Let $E$ be the round $r+1$ execution prefix of the algorithm on $J$ and $p$, and let $E_{I_i}$ be the round $r+1$ execution prefix with IDs in $I_i$ and port mapping $p_i$.
    Since $X_{I_i, p_i}$ was one of the subsets of the decomposition induced by $I_i$, we know by the inductive hypothesis of Lemma~\ref{lem:num_components} that $X_{I_i, p_i}$ results in a set of components in execution $E_{I_i}$, i.e., there is no communication between the nodes with IDs in $X_{I_i, p_i}$ and the rest of the network.
    Consequently, $E_{I_i} \in \Exec_{r+1}(X_{I_i})$ and, by construction, $E$ and $E_{I_i}$ are indistinguishable for all nodes in $X_{I_i, p_i}$, for all $i$.
    It follows that the nodes in each set $X_{I_i, p_i}$ will also jointly send at least $\mu_{r+1}$ messages in round $r+1$ of $E$.
    Recalling \eqref{eq:mu}, this yields
    \begin{align*}
      \ell\cdot \mu_{r+1} = \frac{n}{2^{\sigma_r}} \lb( 2^{\sigma_r}\lb(2\, f(n)-1\rb)\rb) = 2\,n\,f(n)-n > n\,f(n)
    \end{align*}
    messages in $E$, where in the last inequality, we use the assumption that $f(n)>1$, contradicting the assumed bound on the message complexity of the algorithm.
  \end{proof}
  }

  Next, we describe the strategy of the adversary to ensure that, for any ID set chosen from $U_{r+1}$, the existing components do not expand too much.
  For any ID assignment $I \subseteq U_{r+1} \subseteq U_r$, consider the decomposition of $\mathcal{G}_{r}^{I, p}$ into $X_1,\dots,X_{n/2^{\sigma_r}}$ sets for a partial port mapping $p$ guaranteed by the inductive hypothesis.
  Since $I$ is not a costly ID assignment, we know that at most $\mu_{r+1}$, messages are sent by the nodes in any $X_i$ in round $r+1$.
  Any message that is sent over an already-used port by a node in $X_i$ will reach some other node in $X_i$ by construction, and thus we focus only on messages that are sent over previously unused ports in round $r+1$.

  Let $m_1\le \mu_{r+1}$ be the actual number of messages sent by nodes in $X_1$ during round $r+1$ over $m_1$ unused ports.
  We define
  \begin{align}
  t = 1 + \lceil \log_2 f(n) \rceil, \label{eq:t}
  \end{align}
  and connect these $m_1$ ports to arbitrary nodes in $\bigcup_{i=1}^{2^t} X_i$.
  We will argue below that there are sufficiently available ports in this set.
  Note that it is admissible for us to adaptively choose the port connections, as we are considering deterministic algorithms that works for all IDs assignments and, moreover, in the clean network model (see Section~\ref{sec:model}), a node $u \in X_1$ with an unused port $q$ does not know to which one of the nodes $q$ connects to until $u$ sends or receives a message across $q$.
  \onlyLong{
  In the following, we show that there are sufficiently many nodes in $\bigcup_{i=2}^{2^t} X_i$ to which the $m_1$ ports opened by nodes in $X_1$ can be connected to. Since $|X_i| = 2^{\sigma_r}$ by the inductive hypothesis, it follows from \eqref{eq:mu} that
  \begin{align}
    \left| \bigcup_{i=2}^{2^t} X_i \right|
    &= \sum_{i=2}^{2^t} \left| X_i \right| \notag \\
    & = (2^t-1) \, 2^{\sigma_r} \notag \\
    \ann{by \eqref{eq:t}} & \geq \lb(2\, f(n) - 1\rb) \, 2^{\sigma_r} \notag \\
    \ann{by \eqref{eq:mu}} & = \mu_{r+1}. \label{eq:ell}
  \end{align}
  }%
  \onlyShort{
  In the following, we show that there are sufficiently many nodes in $\bigcup_{i=2}^{2^t} X_i$ to which the $m_1$ ports opened by nodes in $X_1$ can be connected to. Since $|X_i| = 2^{\sigma_r}$ by the inductive hypothesis, it follows from \eqref{eq:mu} and \eqref{eq:t} that
  $
    \left| \bigcup_{i=2}^{2^t} X_i \right|
    = \sum_{i=2}^{2^t} \left| X_i \right|
     = (2^t-1) \, 2^{\sigma_r}
    \geq \lb(2\, f(n) - 1\rb) \, 2^{\sigma_r}
    = \mu_{r+1}.
  $
  }\xspace
  As there are no prior connections between the nodes in $X_1$ and $\bigcup_{i=2}^{2^t} X_i$ in $\mathcal{G}_{r}^{I, p}$ by assumption, the above bound tells us that there are sufficiently many nodes in $\bigcup_{i=2}^{2^t} X_i$ to which the $m_1$ ports opened by nodes in $X_1$ can be connected to.
  We define $X_1' = \bigcup_{i=1}^{2^t} X_i$, and we make this set $X_1'$ part of the decomposition of $\mathcal{G}_{r+1}^{I,p'}$ for some partial port mapping $p'$ that is compatible with $p$.
  Analogously, we can direct the $m_j \le \mu_{r+1}$ messages that are sent over previously unused ports by nodes in $X_j$, for $j \in [2,2^t]$, to nodes in $\bigcup_{i=1,i\ne j}^{2^t}X_i \subseteq X_1'$.

  We proceed similarly for the remaining sets:
  That is, we define $X_2' = \bigcup_{2^t+1}^{2^{2t}} X_i$ and connect the newly opened ports of the nodes in $X_{2^t+1}$ to nodes in $X_{2^t+2},\dots,X_{2^{2t}}$, and so forth.
  As a result, we obtain the required decomposition of $\mathcal{G}_{r+1}^{I,p'}$ into sets $X_1',\dots,X_{n/2^{t+\sigma_r}}'$.
  Note that it is possible to process all sets in this way since we assumed that $n$ is a power of $2$, and hence $n/2^t$ is an integer.

  To complete the proof, we need to show that each obtained $X_j'$ set satisfies (A) and (B).
  Property~(A) follows readily from the assumption that each set $X_i$ consists of components in $\mathcal{G}_{r}^{I,p}$ together with our strategy for connecting the newly opened ports in round $r+1$.
  In more detail, this ensures that a node in $X_j'$ does not have any edges to $\mathcal{G}_{r+1}^{I,p'} \setminus X_j'$.
  For Property~(B), we need to show that $|X_j'|$ contain exactly $2^{\sigma_{r+1}}$ nodes:
  According to the inductive hypothesis, $|X_i|=2^{\sigma_r}$ and, by the fact that the sets $X_i$ are pairwise disjoint, 
  for any $j$, we have that
  \begin{align}
  |X_j'|
    &= 2^{\sigma_r +t}  \notag\\
    \ann{by \eqref{eq:t}}
    &= 2^{\sigma_r + \lceil \log_2 f(n) \rceil + 1} \notag\\
    \ann{by \eqref{eq:sigma}}
    &= 2^{(\lceil \log_2 f(n) \rceil + 1)(r-1) + \lceil \log_2 f(n) \rceil + 1} \notag\\
    &= 2^{(\lceil \log_2 f(n) \rceil + 1)(r)} \notag\\
    \ann{by \eqref{eq:sigma}}
    &= 2^{\sigma_{r+1}}.\notag
  \end{align}
  This concludes the proof of Lemma~\ref{lem:num_components}.
\end{proof}

We now complete the proof of Theorem~\ref{thm:tradeoff}. Recall that the algorithm terminates by round $T(n)$ and that we assume towards contradiction that $T \le  \frac{\log_2 n - 1}{\log_2 f(n) + 1} + 1.$ Lemma~\ref{lem:num_components} implies that all components have size $2^{\sigma_T} \le 2^{(\log_2 f(n) +1)(T-1)} = 2^{\log_2 n -1} \le n/2$.
From Corollary~\ref{cor:no_static}, we know that the algorithm cannot terminate unless one of the components has a size of more than $n/2$. Therefore, we have arrived at a contradiction.

\subsection{An Improved Deterministic Algorithm}
\label{sec:improved_afekgafni}
\onlyShort{
In the full paper~\cite{DBLP:journals/corr/abs-2301-08235}, we describe a simple algorithm that can be viewed as an optimized variant of the deterministic synchronous algorithm in \cite{afek1991time} assuming simultaneous wake-up and prove the following result:
}%
\onlyLong{
We now describe a simple algorithm that can be viewed as an optimized variant of the deterministic synchronous algorithm in \cite{afek1991time} assuming simultaneous wake-up.

For a given integer parameter $k \ge 3$, the algorithm starts by executing $k-2$ iterations, each of which consists of two rounds.
Initially, every node is a \emph{survivor}.
In round $1$ of iteration $i$, each survivor sends its ID to $\lceil n^{\frac{i}{k-1}}\rceil$ other nodes that become \emph{referees}.
Then, in round $2$, each referee responds to the survivor with the highest ID and discards all other messages.
A node remains a survivor for iteration $i+1$ if and only if it received a response from every referee; otherwise it is \emph{eliminated}.
Finally, at the end of iteration $k-2$, we perform one iteration that consists only of the first round, in which all remaining survivors send a message to all other nodes (i.e., every node becomes a referee), and a survivor terminates as leader if its own ID is greater than all other IDs that it received.

To bound the remaining survivors at the end of the $i$-th iteration, observe that at most $\frac{n}{n^{i/(k-1)}} = n^{1 - i/(k-1)}$ of the survivors who sent out a message in this iteration could have received a response from all of their referees.
Thus, it follows that there remains only a single survivor at the end of the $(k-1)$-th iteration, who becomes leader.
Moreover, each survivor of iteration $i$ contacts $\lceil n^{(i+1)/(k-1)} \rceil$ referees in iteration $i+1$, which means that we send $O(n^{1 - i/(k-1) + (i+1)/(k-1)}) = O\lt( n^{1 + 1/(k-1)} \rt)$ messages per iteration, for a total message complexity of $O \lt( k\, n^{1 + 1/(k-1)} \rt)$.
Since every iteration except the final one require two rounds, the time complexity is $2(k-2) +1 = 2k - 3$.
In particular, for a time complexity of $\ell = 2k -3$ rounds, we obtain a message complexity of
\begin{align*}
O \lt( k\,n^{1 + 1/(k-1)} \rt)
  = O \lt( \ell\,n^{1 + \frac{1}{(\ell+3)/2-1}}  \rt)
  = O \lt( \ell\,n^{1 + \frac{2}{\ell+1}}  \rt).
\end{align*}
Thus we have shown the following:
}

\begin{theorem} \label{thm:improved_afekgafni}
  Consider the synchronous clique under simultaneous wake-up.
  For any odd integer $\ell \ge 3$, there exists a deterministic algorithm that terminates in $\ell$ rounds and sends $O \lt( \ell\,n^{1 + \frac{2}{\ell+1}} \rt)$ messages.
\end{theorem}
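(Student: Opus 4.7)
The plan is to present an iterated referee-elimination algorithm parameterized by an integer $k$ and choose $k$ so that the round count matches the prescribed odd $\ell \ge 3$. Specifically, set $k = (\ell+3)/2$, which is integral because $\ell$ is odd, and run $k-2$ iterations of two rounds each followed by a final single-round broadcast, so the total time is $2(k-2)+1 = \ell$. Initially every node is a \emph{survivor}. In iteration $i \in [1,k-2]$, every survivor sends its ID to $\lceil n^{i/(k-1)} \rceil$ arbitrarily chosen other nodes (its \emph{referees}); in the second round of the iteration every referee replies to the single sender whose ID is the largest it saw and discards all other requests. A sender remains a survivor for iteration $i+1$ if and only if it received a reply from every one of its referees. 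The final iteration consists of one round only: each remaining survivor sends its ID to all $n-1$ other nodes, and any survivor whose ID exceeds every ID it receives terminates as the leader.

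For correctness, the key observation is that the survivor holding the globally largest ID is never eliminated, since every referee it contacts must pick it (its ID dominates any other it sees), so after the final broadcast this node sees no larger ID and outputs itself as leader. To rule out any other node from doing so as well, I bound the number of survivors after iteration $i$: since each survivor requires all $\lceil n^{i/(k-1)} \rceil$ of its referees to select it, and each referee selects at most one survivor per iteration, a direct pigeonhole argument over the at most $n$ referees used gives at most $n / n^{i/(k-1)} = n^{1 - i/(k-1)}$ survivors entering iteration $i+1$. Plugging in $i = k-2$ yields at most $n^{2/(k-1)}$ survivors before the final broadcast, all of which learn the maximum ID among them; only the global maximum terminates as leader.

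For the message bound, in iteration $i \in [1,k-2]$ there are at most $n^{1 - (i-1)/(k-1)}$ survivors, each sending $\lceil n^{i/(k-1)} \rceil$ requests and receiving at most that many replies, for a per-iteration total of $O(n^{1 + 1/(k-1)})$ messages. The final iteration contributes $O(n^{2/(k-1)} \cdot n) = O(n^{1 + 1/(k-1)})$ messages as well (in fact less, since fewer survivors remain). Summing the $k-1$ iterations gives $O(k \cdot n^{1 + 1/(k-1)})$, and substituting $k = (\ell+3)/2$ so that $1/(k-1) = 2/(\ell+1)$ yields the claimed $O(\ell \cdot n^{1 + 2/(\ell+1)})$ bound.

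I do not anticipate a genuine obstacle here: the algorithm is a mild reparameterization of Afek--Gafni, and the simultaneous wake-up assumption removes the usual complications associated with asynchronous elimination. The one thing to verify carefully is the parameter choice, namely that oddness of $\ell$ makes $(\ell+3)/2$ an integer and that the algebra $2(k-2)+1 = \ell$ and $1/(k-1) = 2/(\ell+1)$ lines up; both are immediate.
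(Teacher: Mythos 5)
Your algorithm and analysis coincide essentially verbatim with the paper's: both run $k-2$ two-round referee-elimination iterations with $\lceil n^{i/(k-1)}\rceil$ referees in iteration $i$, finish with a one-round broadcast, and set $k=(\ell+3)/2$ so the round count is $2(k-2)+1=\ell$ and $1/(k-1)=2/(\ell+1)$. One small arithmetic slip worth fixing: after iteration $k-2$ the survivor bound is $n^{1-(k-2)/(k-1)}=n^{1/(k-1)}$ (not $n^{2/(k-1)}$), which is exactly what makes the final broadcast cost $O\bigl(n^{1/(k-1)}\cdot n\bigr)=O\bigl(n^{1+1/(k-1)}\bigr)$ as you conclude.
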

We point out that Theorem~\ref{thm:improved_afekgafni} improves over the algorithm of Afek and Gafni~\cite{afek1991time}, which, for $\ell$ rounds, achieves a message complexity of $O \lt( \ell\,n^{1 + 2/\ell} \rt)$.
In particular, for constant-time algorithms, we obtain a polynomial improvement in the message complexity.

\subsection{An Extreme Point of the Trade-off}

 \label{sec:unconditional}

By how much can we reduce the message complexity if we increase the time complexity?
We answer this question by showing
that any leader election algorithm must send at least $\Omega(n\log_2 n)$ messages, even if all the nodes wake up spontaneously at the same time.
This holds for all time-bounded algorithms where the termination time $T(n)$ is a function of $n$. %

\begin{theorem} \label{thm:lb_unconditional}
  Consider any deterministic algorithm that elects a leader in the synchronous clique in $T(n)$ rounds where the number of nodes $n$ is a sufficiently large power of two.
  If the IDs of the nodes are chosen from a set of size at least $n^{\log_2 n }(T(n))^{\log_2 n -1}$, then it must send at least $\Omega(n\log n)$ messages.
\end{theorem}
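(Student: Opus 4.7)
The plan is to extend the pruning framework from Lemma~\ref{lem:num_components} by replacing its per-round invariant with a per-doubling-phase invariant that holds throughout the entire execution. Since the argument now spans $T(n)$ rounds and $\log_2 n$ doubling phases rather than just $T(n)$ rounds, a super-polynomial ID universe is required to absorb the cumulative pruning cost; this accounts for the hypothesized size $n^{\log_2 n}(T(n))^{\log_2 n - 1}$.

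First, I would apply Corollary~\ref{cor:no_static} to the given ID universe to obtain a clean subset $U$ (losing at most $n\log_2 n$ IDs) such that no subset of size at most $n/2$ forms terminating components, which means the algorithm must grow some component to size greater than $n/2$ before it can terminate. Assume for contradiction that the algorithm sends fewer than $c\,n\log_2 n$ messages on every ID assignment from $U$, for a sufficiently small constant $c > 0$. I would then carry out an induction over doubling levels $i = 0, 1, \ldots, \log_2 n - 1$, maintaining a subuniverse $U_i \subseteq U$ and a partial port mapping $p_i$ such that, for every ID assignment $I \subseteq U_i$, the communication graph $\mathcal{G}_r^{I,p_i}$ contains no component of size $2^{i+1}$ for any $r \le T(n)$. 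The inductive step uses a pigeonhole-plus-pruning argument analogous to Lemma~\ref{lem:num_components}: the transition of a component from size $2^i$ to size $2^{i+1}$ must be triggered by messages sent during some round $r \in [1, T(n)]$, and since the total message budget is $o(n \log n)$, at level $i+1$ only few components can complete this doubling using $\Omega(2^i)$ messages. All ID assignments for which some component undergoes an ``expensive'' doubling at level $i+1$ are pruned, costing at most $n^{i+1}(T(n))^i$ IDs at level $i$: the factor $(T(n))^i$ captures the choice of doubling round for each of the prior $i$ levels, and the factor $n^{i+1}$ captures the IDs participating in the offending size-$2^{i+1}$ component.

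The final inductive step ($i = \log_2 n - 1$) produces an ID assignment in $U_{\log_2 n - 1}$ for which no component ever attains size $n/2$, contradicting the correctness of the algorithm via Corollary~\ref{cor:no_static}; summing $n^{i+1}(T(n))^{i}$ over $i$ shows that the initial universe of size $n^{\log_2 n}(T(n))^{\log_2 n - 1}$ is exactly what is needed to survive all $\log_2 n$ pruning steps. The main obstacle will be orchestrating the pruning so that the invariant maintained at level $i+1$ remains consistent with the invariant at level $i$: unlike Lemma~\ref{lem:num_components}, where all components progress in lockstep round by round, different components at doubling level $i$ may reach size $2^{i+1}$ in different rounds, forcing the bookkeeping to simultaneously track the per-component doubling round and the per-component participating ID set. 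A secondary challenge is to allocate the global message budget of $o(n \log n)$ carefully across the $\log_2 n$ phases, so that at every level the pigeonhole principle guarantees enough cheap doublings to prune; this likely requires an amortization step that charges each message to the unique level at which the doubling it participates in occurs.
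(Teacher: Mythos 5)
Your high-level strategy — prune ID assignments, merge isolated components of doubling sizes, pigeonhole over the merging round — matches the spirit of the paper's Lemma~\ref{lem:growing}, but you have skipped what is the central enabling move in the paper's proof of this theorem: Lemma~\ref{lem:simulation}, which reduces the multicast algorithm to a \emph{single-send} algorithm (one message per node per round) at the cost of inflating the time bound from $T(n)$ to $n\,T(n)$. Without this reduction, the inductive merging step fails. The paper's pigeonhole picks a round $r_{i+1}$ at which a large sub-family of components of size $2^i$ all open their $2^{i-1}$-th new port since $r_i$, and then routes the ports opened in round $r_{i+1}$ by two disjoint components $C$, $C'$ into each other, producing a merged component of size \emph{exactly} $2^{i+1}$. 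This is possible only because in the single-send model each component opens at most $|C|=2^i$ new ports in round $r_{i+1}$, which the $2^i$ nodes of $C'$ can absorb. In the multicast model, a single node may open far more than $2^i$ new ports in that round; the adversary would be forced to route the overflow outside $C\cup C'$, overshooting the target size $2^{i+1}$ and destroying the doubling-level invariant your induction rests on. You flag the difficulty that merges may occur at different rounds for different components, but the more basic obstruction is that in the multicast model the adversary cannot keep the merged-component size under control at all.

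This same gap misplaces the origin of the $n^{\log_2 n}$ factor in the ID-universe bound. For single-send algorithms the paper needs a universe of size only $n\,T(n)^{\log_2 n - 1}$: each level loses a factor of $T(n)$ to the pigeonhole over merging rounds and a factor of $2$ to the pairing. Substituting $T(n)\mapsto n\,T(n)$ for the simulating single-send algorithm then yields $n\cdot(n\,T(n))^{\log_2 n - 1}=n^{\log_2 n}(T(n))^{\log_2 n - 1}$. Your proposal instead assigns an $n^{i+1}$ factor to ``the IDs participating in the offending size-$2^{i+1}$ component'', but those components contain only $2^{i+1}$ IDs, not $n^{i+1}$ — the per-level factor of $n$ comes entirely from the time inflation. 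Finally, two pieces of the invariant are absent from your outline and are needed for the induction to go through: the capacity bound of Lemma~\ref{lem:growing}, part (a), which licenses the adversary to absorb the next $2^{i-1}-1$ ports opened by a size-$2^i$ component internally before a merge is forced, and the maintenance of a large \emph{family of pairwise-disjoint} $2^i$-subsets that each form a single isolated component. Your stated invariant — that $\mathcal{G}_r^{I,p_i}$ contains no component of size $2^{i+1}$ for any $r\le T(n)$ — already fails at $i=0$ for any algorithm that sends a message, and in any case cannot support the counting argument that charges $\Omega(n)$ newly opened ports to each of the $\log_2 n$ levels.
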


The main technical argument of our proof will focus on the restricted class of algorithms where each node sends at most one message per round, which we call \emph{single-send algorithms}.
The following lemma shows that single-send algorithms are equivalent to standard multicast algorithms in terms of their message complexity.

\begin{lemma} \label{lem:simulation}
  If there exists a multicast leader election algorithm that sends $M(n)$ messages and terminates in $T(n)$ rounds, then there exists a single-send leader election algorithm that sends at most $M(n)$ messages and terminates in at most $n\cdot T(n)$ rounds.
\end{lemma}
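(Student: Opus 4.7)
The plan is to simulate each round of the given multicast algorithm by $n$ consecutive rounds of a single-send algorithm, in which every message the multicast algorithm would have sent in round $r$ is serialized into distinct rounds of the simulation.

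More concretely, I would designate rounds $(r-1)\cdot n + 1, \ldots, r\cdot n$ of the single-send algorithm as the \emph{simulated round $r$}. During these $n$ rounds, each node $u$ fixes an arbitrary ordering of the (at most $n-1$) messages it would have transmitted in round $r$ of the multicast execution and sends the $j$-th such message over its corresponding port in round $(r-1)\cdot n + j$ of the simulation, while sending nothing in the remaining rounds. All incoming messages received during these $n$ rounds are buffered; at the end of simulated round $r$, node $u$ applies the original multicast algorithm's transition function to its state and the collected set of buffered messages, and computes the (ordered) list of messages it would send in round $r+1$. Termination is declared at the end of the simulated round in which the multicast algorithm would have terminated.

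To argue correctness, I would show by induction on $r$ that at the end of simulated round $r$, every node is in the exact same local state as it would be at the end of round $r$ of the multicast execution (under the same ID assignment and port mapping). The inductive step relies on the fact that the single-send model preserves the point-to-point connections: a message sent over port $i$ by $u$ in the single-send simulation is delivered to the same neighbor, and received over the same port, as in the multicast algorithm, because the port mapping is determined by the network and not by the algorithm. Thus the multiset of messages received by each node during simulated round $r$ is identical to the multiset it receives in round $r$ of the multicast execution.

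The message complexity bound is immediate: each message sent by the single-send algorithm corresponds bijectively to a message of the multicast algorithm, giving a total of at most $M(n)$ messages. The time bound of $n \cdot T(n)$ follows since we use $n$ single-send rounds per multicast round and the multicast algorithm terminates in $T(n)$ rounds. The main subtlety—and the only place where one must be careful—is ensuring that nodes do \emph{not} react prematurely to messages that arrive in the middle of a simulated round; this is handled by explicitly buffering received messages and only invoking the multicast transition function at the end of the $n$-round block, so that causality across simulated rounds is preserved exactly as in the original synchronous execution.
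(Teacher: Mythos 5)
Your proposal is correct and follows essentially the same approach as the paper's proof: serializing each multicast round into a block of $n$ rounds, buffering incoming messages, and invoking the original transition function only at the end of each block. The inductive correctness argument and the message/time accounting match the paper's reasoning step for step.
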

\onlyLong{
\begin{proof}
  We show how to simulate a given multicast algorithm $\mathcal{A}$ to obtain a new algorithm, called $\mathcal{S}$, that is a single-send algorithm.
  The idea behind the simulation is simple:
  We simulate each round $r \ge 1$ of the multicast algorithm $\mathcal{A}$, in an interval consisting of the sequence of rounds $(r-1)n+1,\dots,r\,n$ as follows:
  If, given a node $u$'s state at the start of round $r$, algorithm $\mathcal{A}$ requires $u$ to send messages $m_1,\dots,m_k$ during round $r$, the new algorithm $\mathcal{S}$ sends message $m_i$ in round $(r-1)n+i$.
  Since a node can send at most one message to any other node in a given round, we know that $k\le n-1$, which ensures that $u$ sends at most one message per round in algorithm $\mathcal{S}$.
  Moreover, any node $v$ that receives messages in some round $(r-1)n+j$, simply adds this message to a buffer and, at the end of round $r\, n$, it uses $\mathcal{A}$ to empty its buffer and process all messages accordingly.

  By a straightforward inductive argument, it follows that every node $u$ executing $\mathcal{S}$ processes the exact same set of messages at the end of round $r\, n$ that it receives in round $r$ when executing $\mathcal{A}$ and hence also performs the same state transitions.
  This implies the claimed time complexity of $\mathcal{S}$.
  The message complexity bound is immediate since $\mathcal{S}$ does not send any messages in addition to the ones produced by $\mathcal{A}$.
\end{proof}
}

The next lemma shows that the message complexity of leader election is high for single-send algorithms, which, together with Lemma~\ref{lem:simulation}, completes the proof of Theorem~\ref{thm:lb_unconditional}.
\onlyShort{
The proof of Lemma~\ref{lem:lb_single_send} is technically more involved and postponed to the full version~\cite{DBLP:journals/corr/abs-2301-08235}.
}%
\onlyLong{
The proof of Lemma~\ref{lem:lb_single_send} is technically more involved and postponed to Section~\ref{sec:proof_lb_single_send}.
}

\begin{lemma} \label{lem:lb_single_send}
  Consider a $T(n)$ time-bounded single-send algorithm $\mathcal{S}$ and an ID universe $U$ of size at least $n\,T(n)^{\log_2 n -1}$. %
  There exists an ID assignment $I \subseteq U$ such that $\mathcal{S}$ sends at least $\Omega(n \log n )$ messages under $I$.
\end{lemma}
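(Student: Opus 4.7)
The plan is to adapt the iterative pruning framework developed for Lemma~\ref{lem:num_components} to the single-send regime, with a refined accounting that extracts a full $\log n$ factor rather than the $\log n/\log\log n$ one would get by naively plugging $f(n)=\Theta(\log n)$ into Theorem~\ref{thm:tradeoff}. First I would apply Corollary~\ref{cor:no_static} to strip $n\log_2 n$ IDs from the given universe $U$, leaving a set of at least $n\,T(n)^{\log_2 n - 1}$ IDs with the property that no subset of $\le n/2$ IDs can form terminating components. I then suppose, for contradiction, that the single-send algorithm $\mathcal{S}$ sends fewer than $c\,n\log n$ messages on every ID assignment drawn from this pruned universe, for a sufficiently small absolute constant $c>0$.

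Next, I would build, by induction on $r=1,\dots,T(n)$, a decreasing chain $U=U_1\supseteq U_2\supseteq\cdots\supseteq U_{T(n)}$ together with compatible partial port mappings so that every ID assignment $I\subseteq U_r$ induces a round-$r$ communication graph whose components can be bundled into isolated blocks of size exactly $2^{\sigma_r}$, where $\sigma_r$ grows by at most one per round (the structural cap forced by single-send: a component can at most double per round, because its $s$ nodes emit at most $s$ messages in that round). The crucial deviation from Lemma~\ref{lem:num_components} is that I cannot bound the cost of a block by its round-$(r+1)$ message count alone: single-send caps that count at $2^{\sigma_r}$, which is simply too small to contradict a global budget of $o(n\log n)$ per round. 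Instead, I would call a block \emph{costly} if its nodes jointly emit, \emph{over the whole suffix} $[r+1,T(n)]$ of the execution, more than a threshold $\mu$ messages to nodes outside the block.

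The pruning step then follows the reassembly idea from Claim~\ref{cl:costly}: if too many pruning iterations were needed at level $r$, the disjoint costly blocks can be glued into a single valid $n$-ID assignment whose execution would provably send more than $c\,n\log n$ messages, contradicting our standing assumption. Because each costly block comes with a choice of \emph{merging round}, namely the round in $[r+1,T(n)]$ at which it first opens a port to the outside, and this choice must be fixed to specify the pruned sub-universe $U_{r+1}$, each level of the induction loses a multiplicative factor of $T(n)$ from the usable universe rather than the additive loss of $n$ in Theorem~\ref{thm:tradeoff}. Summed over the $\log_2 n - 1$ doublings required to reach component size $n/2$, this multiplicative loss accounts precisely for the $T(n)^{\log_2 n - 1}$ factor in the hypothesis. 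At the end of the induction, $U_{T(n)}$ still contains at least $n$ IDs, and on any such ID assignment $I$ the decomposition partitions the nodes into components of size at most $2^{\sigma_{T(n)}}\le n/2$; since $\mathcal{S}$ is assumed to terminate by round $T(n)$, this contradicts Corollary~\ref{cor:no_static}.

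The main obstacle will be the simultaneous tuning of two competing demands on the threshold $\mu$: it must be small enough that, in the next level, the remaining non-costly messages can be absorbed inside a newly formed block of size $2^{\sigma_{r+1}}$ using compatible port assignments (an analogue of the capacity analysis following Definition~\ref{def:capacity}); yet it must be large enough that the reassembly of $\Theta(n/2^{\sigma_r})$ pruned costly blocks beats the assumed $c\,n\log n$ global budget. In the single-round analysis of Lemma~\ref{lem:num_components} this balance was easy because of the global $n\,f(n)$ cap; in the single-send setting the right accounting requires aggregating across the $T(n)-r$ remaining rounds and carefully handling how a block's sent messages may reach \emph{multiple} later-level blocks, which is where the extra $T(n)$ factor per level is spent and where I expect the bulk of the technical work to lie.
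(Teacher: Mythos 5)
Your plan departs from the paper's proof in a way that I think opens several genuine gaps.

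The paper's proof of Lemma~\ref{lem:lb_single_send} (via Lemma~\ref{lem:growing}) does \emph{not} run a per-round pruning-and-gluing argument. It is indexed by doubling level $i=0,\dots,\log_2 n-1$, not by round. For each level it tracks, for each isolated component on a candidate ID subset, the specific round $r_I$ at which that component first accumulates $\frac{|C|}{2}$ newly-opened ports since the previous merge (this round exists because Corollary~\ref{cor:no_static} forbids terminating components, and the component's capacity $\frac{|C|}{2}-1$ keeps it isolated until then). A pigeonhole over $[T(n)]$ retains a $1/T(n)$ fraction of the candidate subsets that agree on a common round $r_{i+1}$; these are merged in pairs. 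This is where the $T(n)^{\log_2 n -1}$ headroom is spent. The conclusion is then \emph{constructive and additive}: assemble a single $n$-node assignment $I^*$ from two disjoint level-$(\log_2 n -1)$ subsets, and use property (b) of Lemma~\ref{lem:growing} to count that the $n/2^j$ level-$j$ blocks each open $2^{j-1}$ ports during the pairwise-disjoint round intervals $[r_{j-1}+1, r_j]$; summing over $j$ yields $\Omega(n\log n)$ with no contradiction hypothesis.

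Three things in your proposal do not go through. First, your definition of a costly block --- jointly emitting more than $\mu$ messages to the outside \emph{over the entire suffix} $[r+1,T(n)]$ --- is not a well-defined quantity: the moment a block opens a port to the outside, its subsequent behavior depends on the replies it receives, which depend on the global ID assignment and port mapping that the pruning process has not yet fixed. The paper sidesteps this precisely by cutting the execution off at the escape round and merging then. Second, the gluing contradiction of Claim~\ref{cl:costly} rests on indistinguishability between the glued and the isolated executions, and that indistinguishability holds only while the blocks stay isolated --- i.e., over a \emph{single} round in the original argument. Once you aggregate over a suffix during which blocks escape, the glued execution diverges and you can no longer sum the per-block lower bounds; you are stuck with the per-round cap of $2^{\sigma_r}$ messages per block, which you correctly identified is too weak. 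Third, your final step asserts $\sigma_{T(n)}\le\log_2 n -1$ so that components stay small at termination, but you simultaneously say $\sigma_r$ can grow by one per round; for a general $T(n)$ (arbitrary function of $n$) nothing prevents $\sigma_{T(n)}$ from blowing past $\log_2 n$, and you never explain what keeps the growth slow. The paper avoids this entirely by not indexing by round and by never needing the ``components stay small at termination'' contradiction --- the $\Omega(n\log n)$ bound falls out of a direct port count. To salvage your approach you would have to replace the suffix-costly/gluing mechanism with the escape-round pigeonhole and the per-interval port count, at which point you would essentially have reconstructed the paper's Lemma~\ref{lem:growing}.
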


\onlyLong{
\subsubsection{Proof of Lemma~\ref{lem:lb_single_send}} \label{sec:proof_lb_single_send}

\begin{lemma} \label{lem:growing}
  Consider an ID universe $U'$ of size at least $n\,T(n)^{\log_2 n -1}+n\, log(n)$ and let $U \subseteq U'$ be a set of IDs of the kind guaranteed  by Corollary~\ref{cor:no_static}.
  For every integer $i=0,\dots,\log_2 n - 1$, there exists a round
  $r_i$ and a set family $U_i \subseteq {U \choose 2^i}$ of $2^i$-element subsets of $U$ such that, for every set $I \in U_i$,
  every execution prefix $E$ in $\Exec_{r_i}(I)$, there exists a partial port mapping $p$ such that $E$
   forms a component $C$ of $2^i$ nodes in $\mathcal{G}_{r_i+1}^{\tilde{I},p}$, for all ID assignment $\tilde{I}$ that contains $I$, with the following properties:
  \begin{compactenum}
    \item[(a)] $C$ has capacity at least $\frac{|C|}{2}-1 = 2^{i-1}-1$ at the start of round $r_{i}+1$, if $i \geq 1$;
    \item[(b)] the nodes in $C$
    have opened $\frac{|C|}{2} = 2^{i-1}$
    previously unused ports during rounds $[r_{i-1}+1,r_i]$, if $i \ge 1$;
    \item[(c)]  $U_i$ contains at least $\frac{n}{2^i}T(n)^{\log_2 n -i -1}$ pairwise disjoint sets.
  \end{compactenum}
\end{lemma}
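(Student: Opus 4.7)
The plan is to prove Lemma~\ref{lem:growing} by induction on $i$, relying on Corollary~\ref{cor:no_static} to force every $I \in U_i$ (which has size $\le n/2$) to eventually open a port outside itself, and then pigeonholing over the round at which this happens to find many sub-component pairs that can be merged simultaneously.

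For the base case $i = 0$, I would set $r_0 = 0$ and let $U_0$ be any family of $n\,T(n)^{\log_2 n - 1}$ disjoint singletons in $U$; properties (a) and (b) are vacuous for singletons and (c) is immediate from $|U| \ge n\,T(n)^{\log_2 n - 1}$. For the inductive step, fix $I \in U_i$. By the inductive hypothesis, the nodes of $I$ form a size-$2^i$ component $C_I$ of capacity at least $2^{i-1}-1$ at round $r_i+1$. Since $|I| \le n/2$ and $I \subseteq U$, Corollary~\ref{cor:no_static} prevents $I$ from forming terminating components, so if we continue the execution from round $r_i + 1$ with the adversary routing every newly opened port to an unused internal destination within $C_I$ as long as capacity allows, there must be a least round $\rho_I > r_i$ at which the adversary is forced to direct a port outside $I$. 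Applying pigeonhole to the values $\rho_I \in (r_i, T(n)]$ produces a subfamily $U_i^\star \subseteq U_i$ of size at least $|U_i|/T(n) \ge \frac{n}{2^i}\,T(n)^{\log_2 n - i - 2}$ sharing a common round $r^\star$; set $r_{i+1} := r^\star$. Partition $U_i^\star$ into arbitrary disjoint pairs $(I_1,I_2)$. For each pair, extend the port mappings by routing (i) every newly opened port of $I_j$ during $[r_i+1, r^\star-1]$ to an unused internal destination of $I_j$ (possible because $\rho_{I_1}=\rho_{I_2}=r^\star$), and (ii) in round $r^\star$, every port forced out of $I_j$ to a node in the partner half. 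Feasibility of the cross-routing follows from the single-send assumption (each $I_j$ sends at most $2^i$ messages in round $r^\star$) together with the $\ge n - 2^i$ unused ports still available at every node of $I_j$. Placing $I_1 \cup I_2$ into $U_{i+1}$ yields at least $\frac{n}{2^{i+1}}\,T(n)^{\log_2 n - i - 2}$ pairwise disjoint $2^{i+1}$-element sets, establishing (c).

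The main obstacle is the bookkeeping behind properties (a) and (b) for the merged component $C_{i+1} := I_1 \cup I_2$. For (b), the newly opened ports of $C_{i+1}$ during $[r_i+1, r_{i+1}]$ split into internal openings within each $I_j$ during the isolation interval (which a careful adversary can keep small by re-using already-opened ports whenever the algorithm's port choices permit) and the cross openings created in round $r^\star$ (bounded by single-send and distributed evenly between the two halves); together these should sum to $|C_{i+1}|/2 = 2^i$. For (a), each node $u \in I_j \subset C_{i+1}$ retains at least $2^{i-1}-1$ un-communicated partners inside $I_j$ by the inductive hypothesis and, with an even cross-distribution, gains only a small number of new neighbours in the partner half, leaving at least $2^i-1$ un-communicated partners across $C_{i+1}$. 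The delicate step is simultaneously coordinating the adversary's internal routing with the bound on cross-edges so that (a) and (b) telescope cleanly from level $i$ to level $i+1$ while staying compatible with whatever messages the algorithm actually sends.
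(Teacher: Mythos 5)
Your overall strategy — induction on $i$, invoking Corollary~\ref{cor:no_static} to force every $I\in U_i$ to eventually open an external port, pigeonholing over the critical round to isolate a large subfamily $U_i'$ sharing a common round $r_{i+1}$, and pairing members of $U_i'$ to build $U_{i+1}$ — is the same as the paper's, and the count in (c) works out identically. However, the part you yourself flag as ``the delicate step'' is exactly where your reasoning goes wrong, and the paper's proof of (a) and (b) is qualitatively different from what you sketch.

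For (a), you claim each $u\in I_j$ ``retains at least $2^{i-1}-1$ un-communicated partners inside $I_j$ by the inductive hypothesis.'' This is false at the start of round $r_{i+1}+1$: the inductive hypothesis gives capacity $2^{i-1}-1$ at the start of round $r_i+1$, but during $[r_i+1,r_{i+1}-1]$ the adversary has routed up to $|C|/2-1=2^{i-1}-1$ newly opened ports internally, which can consume all of that slack. The internal capacity within $I_j$ may therefore be zero when merging happens. The correct argument ignores internal capacity entirely and counts only against the partner component: $u\in C=C_{I_1}$ has never communicated with any of the $2^i$ nodes of $C'=C_{I_2}$, and by single-send $u$ opens at most one new port in round $r_{i+1}$, so at most one cross-edge is added per node. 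Hence $u$ has $\geq 2^i-1$ un-communicated nodes in $C'$, giving capacity $\geq 2^i-1=|D|/2-1$ for $D=C\cup C'$.

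For (b), the framing ``a careful adversary can keep [internal openings] small by re-using already-opened ports whenever the algorithm's port choices permit'' mischaracterizes who controls what: the algorithm decides whether a node opens a fresh port; the adversary only chooses where it connects. Moreover (b) is a \emph{lower} bound on opened ports (it feeds the $\Omega(n\log n)$ message count), so ``keeping the count small'' is the wrong direction. The paper sidesteps the bookkeeping by defining $r_{i+1}=r_I$ directly as the earliest round at which $C$ has cumulatively opened at least $|C|/2$ previously unused ports since $r_i$; the existence of such a round is exactly what Corollary~\ref{cor:no_static} together with the capacity bound from (a) guarantees. Then (b) holds for each of $C$ and $C'$ by construction, and summing over the two halves gives $|D|/2$ for the merged component.
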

\begin{proof}
For the base case, we define $r_0 := 0$ and $U_0 := U$.
Note that the ``end of round $0$'' refers to the start of the (first) round $1$ of the algorithm, and hence $\mathcal{G}_1$ is the empty graph consisting of $n$ isolated vertices, each of which forms a singleton component.
Thus Properties~(a)-(c) hold.

For the inductive step, suppose that (a)-(c) hold for $i$, for some $0 \le i\le \log_2 n - 2$, which means that rounds $r_0,\dots,r_i$ are already defined.
Consider any $I \in U_i$ and  any execution $E_I \in \Exec_{r_i}(I)$. Let $C$ be the component formed in $E_I$ in round $r_i$.
By Property~(a), $C$ has a capacity of at least $\frac{|C|}{2}-1$ at the start of round $r_i+1$, and Lemma~\ref{lem:capacity} guarantees that it will be possible to connect the next $\frac{|C|}{2}-1$ previously unused ports that the nodes in $C$ open during rounds $[r_i+1,r_{i+1}]$ to other nodes within $C$.
Conceptually, opening these ports corresponds to adding $\frac{|C|}{2}-1$ directed edges to $C$ in the communication graph.
Since $2^i \le \frac{n}{2}$, Corollary~\ref{cor:no_static} tells us that $I$ does not form a terminating component and hence the nodes in $C$ cannot terminate without opening at least one port that is not connected to another node in $C$. %
It follows that there exists some earliest round $r_I$ such that, for all $r \in [r_i+1,r_I-1]$, the total number of newly opened ports used by the nodes in $C$ is
some $W\leq\frac{|C|}{2}-1$ and each node in some nonempty subset $S' \subseteq V(C)$
such that $|S'| \geq \frac{|C|}{2} -W $, opens one
new port in round $r_I$.
In other words, $r_I$ is the earliest round in which the nodes in $C$ have (collectively) opened at least $\frac{|C|}{2}$ ports since $r_i$.

Let $f : U_i \to [T(n)]$ be the mapping that assigns $r_I$ to each $2^i$-element subset $I \in U_i$, and define $g: [T(n)] \to [|U_i|]$ be the mapping such that $g(r)$ is equal to the number of $2^i$-element subsets $I \in U_i$ such that $f(I)=r$.
We define
\onlyLong{\begin{align*}
  r_{i+1} := \argmax_{r\le T(n)}\set{g(r)}.
\end{align*}}%
\onlyShort{
  $
  r_{i+1} := \argmax_{r\le T(n)}\set{g(r)}.
  $
}\xspace
Intuitively speaking, when considering all possible rounds in which the algorithm opens the $\frac{|C|}{2}$-th port since the start of round $r_i+1$, round $r_{i+1}$ is the one that occurs most frequently, breaking ties arbitrarily.
Define $U_i' \subseteq U_i$ to be the family of all $2^i$-subsets that open their $\frac{|C|}{2}$-th new port since round $r_i$ exactly in round $r_{i+1}$.

We now show how to construct isolated executions that yield components of size $2^{i+1}$ by merging the $2^i$ components in $U_i'$ in round $r_{i+1}$ as follows:
Let $U_{i+1}$ be the resulting family of $2^{i+1}$-element subsets obtained by taking the union of all possible pairs of disjoint $2^i$-subsets of $U_i'$.

For any $I \in U_i'$, let $E_I$ be an execution in $\Exec_{r_i}(I)$, and $S_I$ be the set of nodes that open a new port in round $r_{i+1}$ in $E_I$.
Now consider disjoint sets $I,I' \in U_i'$ and let $C$ and $C'$ be their components in $E_I$ and $E_{I'}$, respectively.
As we focus on single-send algorithms, we know that, in round $r_{i+1}$, at most $k \le 2^i$ nodes $u_1,\dots,u_k$
each open a new port in execution $E_I$ and, similarly, at most $k' \le 2^i$ nodes $v_1,\dots,v_{k'} \in S_{I'}$, each open a new port in execution $E_{I'}$.
Fix any one-to-one mappings $g : \set{u_1,\dots,u_k} \to I'$ and $h : \set{v_1,\dots,v_{k'}} \to I$.
We connect the port opened by $u_i$ to the node with ID $g(u_i)$ and, similarly, connect the port opened by $v_i$ to $h(v_i)$, yielding the (merged) component $D = C \cup C'$ of size $2^{i+1}$ at the end of round $r_{i+1}$ with IDs $I \cup I'$.
As a result, we have extended the $r_i$-round executions $E_I$ and $E_{I'}$ to an $r_{i+1}$-round execution in $\Exec_{r_{i+1}}
({I \cup I'})$.

To see why Property~(a) holds for  the merged component $D$ with IDs in $(I\cup I') \in U_{i+1}$, recall that we add at most $1$ incident edge per node in the communication graph between components $C$ and $C'$ in round $r_{i+1}$ and these are the first edges interconnecting $C$ and $C'$.
In particular, any node $u \in C$ has at least $2^{i}-1$ unused ports that can be connected to nodes in $C'$ and the same is true for nodes in $C'$.
This means that every node in $D$ must have at least
\onlyLong{\begin{align*}
|C|-1=|C'|-1=2^i-1 = \frac{|D|}{2}-1
\end{align*}}%
\onlyShort{
$
|C|-1=|C'|-1=2^i-1 = \frac{|D|}{2}-1
$}\xspace
unused ports, which proves the claimed capacity bound for the merged component $D$.

Property~(b) follows since the nodes in $C$ as well as the ones in $C^{'}$ have opened $\frac{|C|}{2}$ new ports during $[r_i+1,r_{i+1}]$ according to our port mapping described above.

Finally, to show that Property ~(c) holds for $U_{i+1}$, a counting argument reveals that
\onlyLong{\begin{align*}
	|U_i'| \ge \frac{|U_i|}{T(n)}.
\end{align*}}%
\onlyShort{
$
	|U_i'| \ge \frac{|U_i|}{T(n)}.
$
}\xspace
Since $U_i$ contains at least $\frac{n}{2^i}T(n)^{\log_2 n -i -1}$ pairwise disjoint sets by induction, the lower bound on $|U_i'|$ shows that $U_i'$ contains at least $\frac{n}{2^i}T(n)^{\log_2 n -i -2} $ pairwise disjoint sets.
Note that any set of four pairwise disjoint subsets $X_1, X_2, X_3, X_4$ in $U_i'$ will give a pair of disjoint subsets $X_1 \cup X_2, X_3 \cup X_4$ in $U_{i+1}$.
Hence, $U_{i+1}$ contains at least $\frac{n}{2^{i+1}}T(n)^{\log_2 n -i -2} $ pairwise disjoint sets of size $2^{i+1}$.
\end{proof}

We now complete the proof of Lemma~\ref{lem:lb_single_send}:
According to Lemma~\ref{lem:growing}, the family $U_{\log_2 n - 1}$ contains at least $2$ disjoint sets $I'$ and $I''$, each of size $\frac{n}{2}$.
Consequently, $I^* = I' \cup I''$ is a valid ID assignment.
By construction, $I^*$ consists of the union of two sets in $U_{\log_2 n - 1}$, which in turn each consist of the union of sets in $U_{\log_2 n - 2}$ and so forth.
This means that, for every round $r_j$ ($j \in [0,\log_2 n - 1]$) guaranteed by Lemma~\ref{lem:growing}, there exist $\frac{n}{2^j}$ sets $I_1^*,\dots,I^*_{n/2^j}$ of size $2^j$ such that $I^* = \bigcup_{k=1}^{n/2^j} I_k^*$, and, for all $k$, any restricted execution prefix in $\Exec_{r_j}(I_k^*)$ will satisfy Property~(b). %
Now, for any restricted execution prefix $E_{I_1^*}, \ldots, E_{I_{n/{2^j}}^*}$ in
$\Exec(I_1^*),\dots,\Exec(I_{n/{2^j}}^*)$ respectively,
consider the $n$-node execution $E^*$ obtained by running the executions $E_{I_1^*}, \ldots, E_{I_{n/{2^j}}^*}$ in parallel up to round $r_{j}$.
By construction,
there will be no messages sent between the components and hence a node $u$ with some ID $x \in I_k^*$ will behave exactly in the same way as it does in $E_{I_k^*}$, because $E_{I_k^*}$ and $E^*$ are indistinguishable for $u$ until round $r_j$.
Consequently, we obtain $n/2^j$ components $C_1,\dots,C_{n/2^j}$ at the end of round $r_j$, and Property~(b) is guaranteed to hold for each of these components even in execution $E^*$.
This shows that $2^{j-1} \frac{n}{2^j} = \frac{n}{2}$ new ports are opened in the interval $[r_{j-1}+1,r_j]$, for every $j\ge 1$.
Summing up over all $j=1,\dots,\log_2 n - 1$, it follows that $\Omega(n \log n)$ ports are opened in total throughout execution $E^*$.
}

\subsubsection{An Algorithm with $o(n\log_2 n)$ Message Complexity and Sublinear Time for Small ID Universes} \label{sec:algo}

\onlyShort{
In the full paper~\cite{DBLP:journals/corr/abs-2301-08235}, we show that requiring a sufficiently large ID space in Theorem~\ref{thm:lb_unconditional} is indeed necessary, by giving a deterministic algorithm that, for an ID universe of size $O(n)$, elects a leader in sublinear time and sends $o(n \log_2 n)$ messages.
}
\onlyLong{
We now show that requiring a sufficiently large ID space in Theorem~\ref{thm:lb_unconditional} is indeed necessary, by giving a deterministic algorithm for the case where the range of IDs of the nodes is restricted to the set $\{1,\dots,n\,g(n)\}$, where $g(n) \ge 1$ is any integer-valued function of $n$.
In the trivial case where the IDs are from $1$ to $n$, then the algorithm that chooses $1$ to be the ID of the leader is optimal.
In the following algorithm, we  generalize this observation.
We use $id(u)$ to denote the ID of a node $u$. Parameter $d\le n$ controls the trade-off between the time and message complexity.

\begin{algorithm}
\begin{algorithmic}[1]
	 \For{each $u \in V$ execute the following steps in parallel}
	     \For{round $i = 1$ to $i=\lceil \frac{n}{d} \rceil$}
	         \If{ $id(u) \in [(i-1) \, d \, g(n)+1, i \, d \, g(n) ] $}
	           \State Node $u$ sends its ID to all other nodes.
           \EndIf
           \State If a node receives messages in round $i$, then it selects the smallest ID from all received IDs in round $i$ (including its own value, if it sent one) as the leader.
      \EndFor
    \EndFor
\end{algorithmic}
\caption{Deterministic Algorithm for Small ID Universes}\label{alg:g}
\end{algorithm}

\begin{theorem}
    Suppose that nodes are chosen from the set $\set{1,\dots,n\,g(n)}$, where $g(n)\ge 1$ is any integer-valued function of $n$. For any $d \leq n$, Algorithm \ref{alg:g} outputs a leader within $\lceil \frac{n}{d} \rceil$ rounds with message complexity $n \, d \, g(n)$.
    In particular, for an ID universe of size $O(n)$, we obtain a sublinear time algorithm that sends $o(n \log n)$ messages.
\end{theorem}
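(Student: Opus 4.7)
The plan is to verify the three claims (correctness, time, messages) by a direct argument based on the fact that the intervals $I_i := [(i-1)dg(n)+1, \, idg(n)]$ for $i \in [1, \lceil n/d \rceil]$ partition the ID universe $\{1,\dots,ng(n)\}$. First, for correctness, I would observe that every node's ID lies in exactly one such interval. Let $\mu$ be the globally smallest ID, and let $i^*$ be the index of the interval containing $\mu$. By minimality of $\mu$, no node broadcasts in any round $r < i^*$ (a broadcaster in round $r$ would have an ID in $I_r$ and hence strictly smaller than $\mu$). In round $i^*$ the owner of $\mu$ broadcasts $\mu$ to everyone, so every other node receives $\mu$ and returns the smallest received ID, while the owner itself selects $\mu$ via the ``including its own value'' clause. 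Hence every node elects $\mu$.

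For the time bound, $\mu \le ng(n)$ forces $i^* \le \lceil ng(n)/(dg(n))\rceil = \lceil n/d \rceil$. For the message bound, I would argue that the algorithm may (and should) be assumed to halt at each node the moment it selects a leader; equivalently, once every node has already chosen $\mu$ at the end of round $i^*$, no node has any incentive to enter round $i^*+1$. Under this implicit early termination, the only broadcasts that actually take place are the ones in round $i^*$, and the set of broadcasters is exactly $\{u : \id(u) \in I_{i^*}\}$, a set of size at most $|I_{i^*}| = dg(n)$. Each broadcaster sends $n-1$ messages, giving a total of at most $dg(n)(n-1) \le ndg(n)$ messages.

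For the ``in particular'' consequence, an ID universe of size $O(n)$ corresponds to $g(n)=O(1)$; I would then choose $d = \lfloor \log\log n\rfloor$ (any function in $\omega(1)\cap o(\log n)$ works), yielding time $\lceil n/d \rceil = \Theta(n/\log\log n) = o(n)$ and message complexity $O(ndg(n)) = O(n\log\log n) = o(n\log n)$. The only delicate point in the whole argument is the implicit early termination: a literal reading of the pseudocode lets each node broadcast in its own assigned round, which would inflate the total to $\Theta(n^2)$; pointing out that a node can safely stop as soon as it has chosen a leader (since no round after $i^*$ can produce a smaller ID) collapses the count to the one-round contribution and is the only step that needs a sentence of justification.
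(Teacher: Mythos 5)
Your proof is correct and follows essentially the same approach as the paper: identify the round $i^*$ containing the globally smallest ID, observe that no node broadcasts in any earlier round, and count the at most $dg(n)$ broadcasters in round $i^*$. You are right that the claimed message bound requires the convention that nodes stop once they have selected a leader; the paper's proof uses this silently, and your flagging it is a small but genuine improvement in rigor. Your explicit choice $d=\lfloor\log\log n\rfloor$ also tightens the ``in particular'' clause: the paper only verifies the $o(n\log n)$ message bound for $d=o(\log n)$ and leaves the sublinear-time requirement (which additionally needs $d=\omega(1)$) implicit.
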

}

\onlyLong{
\begin{proof}
	Let $u$ be the node with the minimum ID and $i^*$ is such
	that $id(u) \in [ (i^*-1) \, d \, g(n)+1, i^*\, d \, g(n)]$.
	Then at round $i^*$, each node with ID in
	$[ (i^*-1) \, d \, g(n)+1, i^*\, d \, g(n)]$
	will send its ID to all other nodes.
	At the end of round $i^*$, all nodes will receive the same set of IDs and select node $u$ as the leader.

	Since all nodes' IDs are less than or equal to $n\,g(n)$,
 $i^*$ is at most $\lceil \frac{n}{d} \rceil$. Hence, within $\lceil \frac{n}{d} \rceil$ rounds, the algorithm terminates.
	Moreover, there are at most $d \, g(n)$ nodes with IDs in $[ (i^*-1) \, d \, g(n)+1, i^*\, d \, g(n)]$, hence, there are at most $d \, g(n)$ nodes that send their IDs to all other nodes at round $i^*$, giving the message complexity of $n \, d \, g(n)$. %
    In particular, when $g(n)$ is a constant and $d$ is $o(\log(n))$, then the message complexity is $o(n\log(n))$.
\end{proof}
}

\subsection{Las Vegas Algorithms} \label{sec:las_vegas}

As elaborated in more detail in Section~\ref{sec:intro}, Kutten et al.~\cite{kutten2015sublinear} give a randomized Monte Carlo algorithm that successfully solves leader election with high probability using only $O(\sqrt{n}\log ^{3/2}n)$ messages while terminating in just $2$ rounds.
Their algorithm is for implicit leader election (that is, not every node needs to know the name of the leader eventually \cite{lynch1996distributed}), and is a Monte Carlo one, i.e., fails with small probability of error.
One may ask whether those two assumptions were necessary. First, we comment that it is trivial to turn their Monte Carlo algorithm into a Las Vegas explicit one. (Their algorithm elects a leader in the second round; one can add a step where the
leader announces its identity to everybody at the third round: a node who does not receive an announcement in the third round restarts the algorithm.) Unfortunately, this obvious transformation of their algorithm increases the message complexity to $O(n)$.
\onlyLong{Below, we show that this is necessary, even for implicit leader election, i.e., if the nodes do not need to learn the identity of the leader.}%
\onlyShort{In the full paper~\cite{DBLP:journals/corr/abs-2301-08235}, we give a simple indistinguishability argument to prove the following:}

\begin{theorem} \label{thm:las_vegas}
  Any randomized Las Vegas leader election algorithm requires $\Omega(n)$ messages in expectation.
  Moreover, there exists an algorithm that with high probability, achieves $O(n)$ messages and terminates in $3$ rounds.
\end{theorem}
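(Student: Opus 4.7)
My plan is to prove the two parts of Theorem~\ref{thm:las_vegas} separately. I start with the lower bound, for which I will use an isolation-based argument. For each identifier $x$ in the ID universe $U$, define $\tilde\beta_x$ to be the probability, taken over the internal coin tosses of a node equipped with ID $x$, that when the node is run in an execution where it receives no messages it also never sends any. I claim that $|\{x \in U : \tilde\beta_x > 0\}| \le n$. If this were false, one could take an $n$-element subset $I$ of such IDs and run the algorithm on $I$; with probability $\prod_{x \in I}\tilde\beta_x > 0$ every node's coin toss would fall in its ``silent set,'' the actual execution would produce no messages at all, and then each node $x$ would output ``leader'' with some probability $q_x$ depending only on its ID. Las~Vegas correctness conditional on this all-silent event demands that exactly one of $|I|$ independent Bernoulli trials with parameters $(q_x)_{x \in I}$ succeeds, i.e.
\[
  \sum_{x \in I} q_x \prod_{y \in I,\, y \neq x}(1-q_y) = 1,
\]
which forces a unique $x^\ast \in I$ with $q_{x^\ast}=1$ and $q_y=0$ for every other $y \in I$; but then two different $n$-subsets of $\{x : \tilde\beta_x > 0\}$ impose incompatible ``winners,'' contradicting Las~Vegas correctness.

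Granted this structural claim, and assuming the ID universe satisfies $|U| \ge 2n$, there exist at least $n$ IDs $x$ with $\tilde\beta_x = 0$; pick any such $n$-element set as the ID assignment $I$, together with an arbitrary port mapping. For each $x \in I$, the event ``$x$ neither sends nor receives a message'' can only occur when $x$'s coins fall into its isolated-silent set, whose measure is $\tilde\beta_x=0$; hence this event has probability zero. Therefore almost surely every node in $I$ is incident to at least one message, and since every message contributes $2$ to the sum $\sum_{x \in I}(\#\text{sent by }x + \#\text{received by }x)$, the total number $M$ of messages satisfies $2M \ge n$ almost surely, so $\EE[M] \ge n/2 = \Omega(n)$.

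For the matching upper bound, my plan is to append a verification round to the two-round Monte~Carlo algorithm of Kutten et al.~\cite{kutten2015sublinear}, which elects a unique leader with probability $1 - n^{-\Omega(1)}$ while sending $O(\sqrt n \log^{3/2} n)$ messages. In round~$3$, every node that currently believes itself to be the leader broadcasts a single ``I~am leader'' message to all other nodes; each node accepts the claimed leader iff it has received exactly one such message. If any node detects a violation (zero or multiple such broadcasts), the algorithm declares failure and falls back to a deterministic algorithm such as the $O(n\log n)$-message algorithm of~\cite{korach1984tight}, which yields the Las~Vegas guarantee. Since failure occurs with probability only $n^{-\Omega(1)}$, with high probability the algorithm terminates in $3$ rounds and sends $O(n)$ messages (the verification round contributes at most $n-1$ additional messages).

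The main obstacle lies in the lower bound, namely in handling the delicate interaction between $x$'s coins and the coins/port-mapping of the other nodes: a~priori, the event ``$x$ receives nothing'' depends on every other node's behaviour, so conditioning on it is not obviously independent of~$\tilde\beta_x$. The workaround is to notice that, conditional on $x$ receiving nothing, the behaviour of $x$ becomes a deterministic function of $x$'s own coins, so the joint event ``$x$ sends nothing and $x$ receives nothing'' equals ``$x$'s coins make it silent in isolation'' intersected with ``no other node sends to $x$,'' and the first factor has probability exactly~$\tilde\beta_x$. On the algorithmic side, the subtle point is that the candidate leader from the Monte~Carlo core must not irrevocably commit before the round-$3$ verification succeeds, so that every node (including the candidate) can consistently fall back in the rare failure case without breaking Las~Vegas correctness.
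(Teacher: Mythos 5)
Your lower-bound argument is correct but takes a genuinely different route from the paper's. The paper conditions on a constant-probability event $E$ that few messages are sent, extracts a ``silent half'' $S$ of the network, and applies a pigeonhole argument over \emph{three} mutually disjoint $n$-subsets of the ID universe: two of them must agree on the majority value of the indicator ``$S$ contains a leader,'' and pasting together the corresponding silent halves yields either two leaders or no leader. Your argument instead attaches a scalar $\tilde\beta_x$ to each \emph{individual} ID and shows that at most $n$ IDs can be ``lazy'' ($\tilde\beta_x>0$), by observing that an all-silent execution forces each node's output to be an intrinsic Bernoulli variable $q_x$ and that exactly-one-leader-almost-surely over independent Bernoullis is only possible when the $q_x$'s are $0/1$ with a unique $1$; overlapping $n$-subsets of the lazy set then produce contradictory winners. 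Choosing $n$ non-lazy IDs and using the equivalence ``$x$ receives nothing and sends nothing'' $=$ ``$x$'s coins lie in its isolated-silent set'' $\cap$ ``nothing is sent to $x$'' gives almost surely $2M\ge n$, a per-execution bound that is in fact stronger than the expectation bound the theorem asks for, and it needs only $|U|\ge 2n$ rather than the $3n$ disjoint IDs the paper's pigeonhole requires. The handling of the conditional-independence subtlety (that ``$x$ receives nothing'' decouples $x$'s behavior from everyone else's coins) is exactly the step that makes the per-ID definition well-posed, and you identify it correctly.

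Your upper bound is the same ``announce-and-verify'' construction the paper sketches, with one small bookkeeping slip: as written, a self-proclaimed leader $u$ does not count its own proclamation, so if two candidates $u,v$ both announce, each of them sees exactly one incoming announcement and ``accepts'' the other, while all remaining nodes see two and declare failure --- an inconsistent outcome. The fix is trivial (have a self-proclaimed candidate include its own announcement in the count, equivalently broadcast to all $n$ nodes including itself), after which the verification is globally consistent: $0$ or $\ge 2$ announcements are detected by everyone. Your choice to fall back to a deterministic $O(n\log n)$-message algorithm rather than restarting, and your observation that the Monte~Carlo candidate must not commit irrevocably before the verification round, are both sound and arguably cleaner than the paper's one-line ``restart'' remark.
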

\onlyLong{
\begin{proof}
  The upper bound follows immediately from the previous discussion.

  For the lower bound, suppose that there exists a Las Vegas randomized algorithm that solves implicit leader election, i.e., exactly one node becomes leader whereas everyone else becomes non-leader, with an expected message complexity of $o(n)$. %
  Let $E$ be the event that the algorithm sends $o(n)$ messages, and notice that $E$ is guaranteed to happen with constant probability, for every  assignment.
  Now consider some ID assignment $I$ on a network with an even number of nodes.
  Conditioned on $E$, there exists a set of $n/2$ nodes with IDs $S \subseteq I$ such that no node in this set receives or sends any message from any other node throughout the execution.
  By the correctness of the algorithm, all nodes in $S$ must eventually terminate.
  Let $L_S$ be the indicator random variable that is $1$ if and only if some node in $S$ becomes leader, i.e., $L_S=0$ if none of them does.
  Clearly, $\Pr\lt[ L_S \!=\! 1 \mid E \rt] + \Pr\lt[ L_S \!=\! 0 \mid E \rt] = 1$ and thus, one of the two events must occur with probability at least $\tfrac{1}{2}$.

  Now consider two additional ID assignments $I'$ and $I''$ such that $I$, $I'$, and $I''$ are mutually disjoint.
  By a similar argument as before, there exists a set $n/2$ nodes with IDs $S' \subseteq I'$ such that none of these nodes receives or sends any messages if event $E_{I'}$ happens, and we can also identify such a subset $S'' \subseteq I'$, conditioned on $E_{I''}$.
  We define the random variables $L_{S'}$ and $L_{S''}$ in exactly the same way as we did for $S$.

  Since we have 3 ID assignments, it follows that there must be two of them, say $I$ and $I'$, such that $\Pr\lt[ L_{S} \!=\! b \mid E \rt] \ge \tfrac{1}{2}$ and $\Pr\lt[ L_{S'} \!=\! b \mid E \rt] \ge \tfrac{1}{2}$, for some $b \in \set{0,1}$.
  Consider first the case that that $b=1$.

  Now consider the execution of the algorithm on the ID assignment $S \cup S'$.
  Since this execution is indistinguishable for nodes with IDs in $S$ and $S'$ from the executions with IDs $I$ and $I'$, respectively, it follows that the state transitions of every node have the same probability distribution in both executions.
  Consequently, the event that neither the nodes with IDs in $S$ nor the ones with IDs in $S'$ send or receive any messages has nonzero probability to occur.
  Moreover, the events $L_S \!=\! 1$ and $L_{S'} \!=\! 1$ are independent, and thus the algorithm elects $2$ leaders with nonzero probability, which provides a contradiction.

  A similar argument yields a contradiction when $b = 0$, here, this argument shows that the event where there is no leader in the execution $S \cup S'$ has nonzero probability.
\end{proof}
}%

\section{The Synchronous Clique under Adversarial Wake-up} \label{sec:adversarial}

We now consider the synchronous clique under \emph{adversarial wake-up}, which means that the adversary chooses a nonempty subset of nodes that start the execution in round $1$. At any later round, the adversary may choose additional nodes that are not awake (if such exist) and wake them up too. Any node not woken up by the adversary is asleep initially, and only wakes up at the end of a round if it received a message in that round.
To make the proofs easier to read, we actually make the additional unnecessary assumption that the adversary only wakes up nodes in round 1. For the specific algorithm we present in this section, it is easy to get rid of this assumption. For the lower bound, since it holds under the unnecessary assumption, it certainly holds without it.

\onlyLong{We first give a simple randomized $2$-round leader election algorithm that succeeds with probability $1 - \epsilon - \tfrac{1}{n}$, has $O \lt( n^{3/2}\log \lt(\tfrac{1}{\epsilon}\rt) \rt)$ message complexity in expectation and never sends more than $O\lt(n^{3/2}\log n\rt)$ messages with high probability.
Then, we show that this bound is tight, by proving that $\Omega\lt( n^{3/2} \rt)$ is a lower bound on the message complexity for randomized $2$-round algorithms that succeed with sufficiently large probability.
In fact, this lower bound holds not only for leader election, but even just for the problem of waking-up every node in the network.
}%

\subsection{An Optimal $2$-Round Algorithm with $O(n^{3/2}\log n)$ Message Complexity} \label{sec:adversarial_two_round_algo}
\label{sub:async2round}
Fix some parameter $\epsilon \ge \frac{1}{\poly(n)}$; the resulting algorithm will succeed with probability at least $1 - \epsilon - \tfrac{1}{n}$.
A node that is awoken by the adversary sends $\lceil\sqrt{n}\rceil$ messages over uniformly at random sampled ports (without replacement).
In round $2$, any node that was awoken by the receipt of a round-$1$ message (i.e., \emph{not} by the adversary) becomes a candidate with probability $\frac{\log(1/\epsilon)}{\lceil\sqrt{n}\rceil},$ whereas non-candidate nodes immediately become non-leaders.
Every candidate samples a random rank from $[n^4]$ and sends its rank to all other nodes in round $2$.
At the end of round $2$, a candidate becomes leader if and only if all rank messages that it has received carried a lower rank than its own.
\onlyShort{
In the full paper~\cite{DBLP:journals/corr/abs-2301-08235}, we prove the following theorem:
}

\begin{theorem} \label{thm:adversarial_two_round_algo}
  Consider the synchronous clique under adversarial wake-up and fix any small $\epsilon \ge \frac{1}{\poly(n)}$.
  There is a $2$-round randomized leader election algorithm that succeeds with probability at least $1 - \epsilon - \tfrac{1}{n}$ and has an expected message complexity of $O \lt( n^{3/2}\log\lt(\frac{1}{\epsilon}\rt) \rt)$.
  Moreover, it sends at most $O(n^{3/2}\log n)$ messages (whp).
\end{theorem}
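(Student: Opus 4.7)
The plan is to separately bound expected message complexity, worst-case message complexity whp, and success probability. Let $s$ denote the number of nodes awoken by the adversary and let $W$ denote the (random) set of non-adversary nodes that receive at least one round-$1$ message; clearly $|W|\le n$.

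For the message bounds, first I would observe that round~$1$ uses at most $s\,\lceil\sqrt{n}\rceil\le n^{3/2}$ messages deterministically. Conditioned on $W$, the number $C$ of candidates in round~$2$ is a sum of $|W|$ independent Bernoulli trials with success probability $\log(1/\epsilon)/\lceil\sqrt{n}\rceil$, so $\EE[C]\le\sqrt{n}\,\log(1/\epsilon)$. Since each candidate sends $n-1$ rank messages, the expected round-$2$ cost is $O(n^{3/2}\log(1/\epsilon))$, giving the claimed expected-complexity bound. For the whp bound, using $\epsilon\ge 1/\poly(n)$ yields $\log(1/\epsilon)=O(\log n)$, and a standard Chernoff bound applied to $C$ (conditioned on $W$) gives $C=O(\sqrt{n}\log n)$ with probability $1-1/\poly(n)$, hence $O(n^{3/2}\log n)$ total messages whp.

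For success, two sub-events must both occur: $C\ge 1$ (at least one candidate exists) and the maximum rank among candidates is attained by a unique node. For the first, I would first argue $|W|\ge\lceil\sqrt{n}\rceil$ by a case analysis on $s$: when $s=1$, the single awoken node samples $\lceil\sqrt{n}\rceil$ distinct ports, each leading to a non-adversary node; for larger $s$, a balls-into-bins concentration argument shows $|W|$ is typically larger. Conditioned on $|W|\ge\lceil\sqrt{n}\rceil$, we have $\Pr[C=0\mid W]\le\bigl(1-\log(1/\epsilon)/\lceil\sqrt{n}\rceil\bigr)^{|W|}\le e^{-\log(1/\epsilon)}=\epsilon$. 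For the uniqueness of the maximum, a union bound over $\binom{C}{2}$ pairs, each colliding with probability $1/n^4$, gives failure probability $O(C^2/n^4)=O(1/n)$ on the whp event from the previous paragraph that $C=O(\sqrt{n}\log n)$. Combining the two failure events yields overall success probability $\ge 1-\epsilon-1/n$.

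The main obstacle is establishing the lower bound on $|W|$ uniformly over all adversary choices of $s$: when $s$ approaches $n$, many round-$1$ messages are absorbed by other adversary-awoken nodes and $|W|$ can shrink, jeopardizing the existence-of-candidate guarantee. Handling this edge case cleanly, for instance by also letting adversary-awoken nodes participate as candidates in round~$2$, or by running a dedicated sub-protocol among the adversary-awoken nodes, is the only non-routine ingredient; the remaining steps are standard Chernoff and union-bound estimates.
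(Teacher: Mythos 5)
Your proof follows the same structure as the paper's: bound round-$1$ messages deterministically by $n^{3/2}$, use linearity of expectation for the candidate count $C$, apply a Chernoff bound for the whp claim, and union-bound the two failure modes (no candidate, rank collision). Your whp estimate $C = O(\sqrt{n}\log n)$ is the one that is actually needed; the paper's text says the candidate count is $O(\log n)$ at this point, which appears to be a typo, since only $O(\sqrt{n}\log n)$ candidates yields the stated $O(n^{3/2}\log n)$ message bound.

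The obstacle you flag at the end, that $|W|$ can shrink when the adversary wakes up nearly all nodes, is genuine, and the paper's own proof glosses over it: it simply asserts that ``at least $\lceil\sqrt{n}\rceil$ nodes will be awoken by a message and hence try to become candidate,'' which fails under a literal reading of the algorithm if, say, all $n$ nodes are adversary-awoken, because then no node is awoken \emph{by the receipt of a message} and the candidate pool is empty. Your proposed repair, letting any node that receives a round-$1$ message attempt candidacy regardless of whether it was also adversary-awoken, is exactly the right one, and once adopted the lower bound on the candidate pool becomes immediate without any case analysis on $s$ (and without the balls-into-bins step you mention in the middle, which as you later notice points in the wrong direction for large $s$): fix a single adversary-awoken node $u$; since $u$ samples $\lceil\sqrt{n}\rceil$ ports without replacement, its messages reach $\lceil\sqrt{n}\rceil$ distinct nodes, all of whom join the candidate pool, so $\Pr[C=0] \le (1-\log(1/\epsilon)/\lceil\sqrt{n}\rceil)^{\lceil\sqrt{n}\rceil} \le \epsilon$ uniformly over the adversary's choice of $s$. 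With this adjustment your argument is complete and matches the paper's.
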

\onlyLong{
\begin{proof}
  Since the adversary wakes up at least one node at the start of round $1$, at least $\lceil \sqrt{n} \rceil$ nodes will be awoken by a message and hence try to become candidate at the start of round $2$.
  The probability that none of the nodes becomes a candidate is at most
  \[
    \lt( 1 - \frac{\log(1/\epsilon)}{\lceil \sqrt{n} \rceil} \rt)^{\lceil \sqrt{n}\rceil}
    \le
    \exp \lt( - \log \lt( \frac{1}{\epsilon} \rt) \rt)
    = \epsilon.
  \]
  Consider some candidate $u$. The probability that all of the at most $n-1$ other candidates sample a rank different than $u$'s is at least $\lt(1 - \frac{1}{n^4}\rt)^n \ge e^{(-2/n^3)} \ge 1 - \tfrac{2}{n^3}$.
  Taking a union bound implies that all candidates have distinct ranks with probability at least $1 - \tfrac{2}{n^2} \ge 1 - \tfrac{1}{n}$.
  Thus, the probability that there is at least one candidate and that all candidates have unique ranks is at least $1 - \epsilon - \tfrac{1}{n}$.
  Conditioned on these two events, it is clear that each candidate learns the ranks of all candidates by the end of round $2$, and hence there is exactly one leader.

  For the message complexity bound, we observe that, in the worst case, every node is awake in round $1$ and sends $O(\sqrt{n})$ messages.
  In round $2$, the expected number of candidates is $\sqrt{n}\log \lt( \frac{1}{\epsilon} \rt)$, and
  each of these candidates sends $n-1$ messages, which shows that the expected message complexity is $O\lt(n^{3/2} + n^{3/2}\log \lt( \frac{1}{\epsilon} \rt)\rt)$.

  Since $\epsilon \ge \frac{1}{\poly(n)}$ and hence $\log \lt( \frac{1}{\epsilon} \rt) = O(\log n)$, a standard Chernoff bound shows that the number of candidates is $O(\log n)$, and hence we obtain the claimed high-probability bound of $O(n^{3/2}\log n)$ on the message complexity.
\end{proof}
}

\subsection{A Tight Lower Bound for Randomized Algorithms} \label{sec:adversarial_lb}

We now prove a lower bound that holds not only for leader election, but even just for the problem of waking-up every node in the network.
In the proof, we assume that nodes are anonymous and do not have IDs.
However, recall that it is straightforward for a randomized algorithm to generate IDs that are unique with high probability, by instructing each node to independently sample a random ID from the range $[n^4]$ (see the proof of Theorem~\ref{thm:adversarial_two_round_algo} for a detailed argument).

\begin{theorem} \label{thm:adversarial_lb}
  Consider the synchronous clique, where an arbitrary nonempty subset of nodes is awoken by the adversary at the start of round $1$.
  Any randomized $2$-round algorithm that wakes up every node with  probability at least $1 - \epsilon$, has an expected message complexity of $\Omega(n^{3/2})$, for any positive constant $\epsilon < 1$.
\end{theorem}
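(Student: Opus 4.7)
The plan is to invoke Yao's minimax principle and to parameterise any candidate algorithm by
\[
  s_1 \;:=\; \Ee[\text{round-}1\text{ messages sent by a single adversarially-awoken node}],
\]
which is well-defined because the network is anonymous and every initially-awake node runs the same randomised subroutine. I will then consider three adversary strategies---$A_1$ which wakes a single node, $A_n$ which wakes every node, and an intermediate $A_K$ with $K = \lfloor n/s_1\rfloor$ nodes---and show that for any value of $s_1$ chosen by the algorithm, at least one of these three forces an expected message complexity of $\Omega(n^{3/2})$. Throughout, the port mapping is taken to be a uniformly random permutation from each node's perspective, which is admissible by Yao's principle.

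Under $A_1$ only the unique adversarial node $u$ sends round-$1$ messages, waking a set $W$ of other nodes with $\Ee[|W|] = s_1$. Let $s_2^{A_0}$ denote the expected round-$2$ messages of a node in the local state ``adversarially awoken, zero round-$1$ receipts'' (so $u$), and let $s_2^{M_1}$ denote the same for the state ``message awoken, exactly one round-$1$ receipt'' (so each $w \in W$); the expected round-$2$ total under $A_1$ is then $s_2^{A_0} + s_1 \cdot s_2^{M_1}$. Each round-$2$ message hits an unknown destination with probability $\Theta(1/n)$, so for each sleeping node $v$ Jensen's inequality yields $\Pr[v \text{ not woken}] \ge (1 - 1/n)^{\Ee[X]}$, where $X$ is the total round-$2$ count. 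A union bound over the $n-1$ sleeping nodes then forces the coverage inequality
\[
  s_2^{A_0} + s_1 \cdot s_2^{M_1} \;\ge\; c\, n \log n
\]
for some constant $c = c(\epsilon) > 0$.

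For $A_K$ with $K = \lfloor n/s_1\rfloor$, the Poissonization parameter $\mu = (K-1)s_1/(n-1)$ is $\Theta(1)$, so a fraction $\approx e^{-\mu}$ of the $K$ adversarial nodes receive zero round-$1$ messages and therefore occupy exactly the same local state $A_0$ as $u$ does under $A_1$; similarly a fraction $\approx \mu e^{-\mu}$ of the $\Theta(n)$ sleeping nodes are hit by exactly one round-$1$ message and thus occupy state $M_1$. Because the algorithm is anonymous, each such node's expected round-$2$ contribution is still $s_2^{A_0}$ or $s_2^{M_1}$ respectively, so the expected round-$2$ total under $A_K$ is at least $(K/e)\, s_2^{A_0} + (n/e)\, s_2^{M_1}$. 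Multiplying through by $s_1$, using $K s_1 = n$, and applying the $A_1$ coverage inequality gives
\[
  s_1 \cdot \bigl(\text{expected round-}2\text{ messages under } A_K\bigr) \;\ge\; \tfrac{n}{e}\bigl(s_2^{A_0} + s_1 \cdot s_2^{M_1}\bigr) \;\ge\; \tfrac{c}{e}\, n^2 \log n,
\]
so the expected round-$2$ count under $A_K$ is at least $\Omega(n^2 \log n / s_1)$.

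Finally, $A_n$ trivially forces $n \cdot s_1$ round-$1$ messages, so the worst-case expected message complexity across the three adversaries is at least $\max\bigl(n s_1,\; n^2 \log n / s_1\bigr)$, which is minimised at $s_1 = \Theta(\sqrt{n \log n})$ with value $\Omega(n^{3/2}\sqrt{\log n}) \subseteq \Omega(n^{3/2})$, as required. The main obstacle will be to make the Poissonization-plus-state-replication step rigorous in the presence of an adaptive round-$2$ strategy: the algorithm may behave arbitrarily on nodes that observe many round-$1$ receipts, but the lower bound only needs the $\Theta(1)$-fractions of nodes that happen to occupy the fixed local states $A_0$ and $M_1$ under $A_K$ to behave identically to their $A_1$ counterparts, which is guaranteed by anonymity together with the independence of per-node randomness.
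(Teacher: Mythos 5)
Your high-level strategy---force the algorithm's behaviour to replicate the single-root execution inside a multi-root execution via local-state indistinguishability, then pay for each replica---is essentially the same idea as the paper's proof, where $\Theta(\sqrt{n})$ roots are woken, most of them are shown to be \emph{undisturbed} (i.e., in what you call state $A_0$), and the undisturbed children of each such root are shown to collectively send $\Omega(n)$ round-$2$ messages because they cannot distinguish the execution from the single-root one. Your parameterization by $s_1$ and the choice $K=\lfloor n/s_1\rfloor$ is a cleaner way of presenting the tradeoff than the paper's implicit $s_1=o(\sqrt{n})$ (established in its Lemma~\ref{lem:all_few} by essentially invoking the all-awake adversary). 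However, the proof as written has two substantive gaps.

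\textbf{The $\log n$ factor is wrong and the union bound goes the wrong direction.} To get a lower bound on the round-$2$ effort, you need a \emph{lower} bound on the probability that some node stays asleep. A union bound gives an \emph{upper} bound on that probability and is useless here; the correct observation is simply that $\Pr[\text{some node not woken}]\ge\Pr[v_0\text{ not woken}]$ for any fixed $v_0$. That yields the requirement $(1-1/n)^{\Ee[X]}\le\epsilon$, i.e., $\Ee[X]\ge\Omega(n\log(1/\epsilon))=\Omega(n)$ for \emph{constant} $\epsilon$, not $\Omega(n\log n)$. Your stronger claim cannot hold: the final bound you derive, $\Omega(n^{3/2}\sqrt{\log n})$, contradicts Theorem~\ref{thm:adversarial_two_round_algo}, which for any constant $\epsilon$ achieves expected message complexity $O(n^{3/2}\log(1/\epsilon))=O(n^{3/2})$. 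With the corrected $\Omega(n)$ coverage bound your pipeline still gives $\max(ns_1,\,n^2/s_1)\ge n^{3/2}$, which is all the theorem needs.

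\textbf{The decomposition $\Ee[\text{round-$2$ under }A_1]=s_2^{A_0}+s_1\,s_2^{M_1}$ does not hold in general.} The number $X$ of round-$1$ messages the root sends and the \emph{content} of those messages both depend on the root's coin flips, so each child's round-$2$ behaviour (and hence its conditional expectation given state $M_1$) is correlated with $X$. For instance, the root could encode $X$ in its message and have each child scale its round-$2$ effort inversely with $X$, so that $\Ee[\sum_{i\le X} Y_i]$ is much smaller than $\Ee[X]\cdot\Ee[Y]$. In that case the inequality $s_2^{A_0}+s_1\,s_2^{M_1}\ge cn$ would not be entailed by the coverage requirement. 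The paper's proof avoids this pitfall by never isolating per-node conditional expectations: it bounds the \emph{joint} round-$2$ contribution of the undisturbed children of a given root directly (Equation~\eqref{eq:child_msgs}), conditioned on the success event and on the root sending $o(n)$ messages itself, and then transfers this joint bound to the $\Lambda$-execution via indistinguishability of the entire $\Gamma$-neighbourhood of the root. You would need an analogous ``joint'' version of the coverage inequality---e.g., defining a quantity $\sigma_2(A_1) := s_2^{A_0} + \Ee[\sum_{i\le X}Y_i]$ and carrying it through the Poissonization---rather than a product of marginals.

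A smaller point: the Poissonization step is fine if you do it via linearity of expectation (conditioning each node's contribution on the event of being in state $A_0$ or $M_1$, both of which are independent of the node's own coins), so no concentration is actually needed there, which matches the paper's purely expectation-based computation.
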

We point out that Theorem~\ref{thm:adversarial_lb} extends to algorithms in the \emph{asynchronous} clique that terminate within $2$ units of time.
Moreover, it improves significantly even over the previously-best lower bound of $\Omega(n^{5/4})$ messages~\cite{afek1991time} for \emph{deterministic} $2$-round algorithms under adversarial wake-up; see Section~\ref{sec:related} for more details about their result.

\subsubsection{Proof of Theorem~\ref{thm:adversarial_lb}}
Since the proof is somewhat involved, let us first explain its highlights informally.
The proof exploits the fact that limiting the time to $2$ rounds, makes it difficult for a node to learn how many other nodes are awake, and hence it may need to send as many messages (on average) as it would in an execution where it is the only one who was woken up.
Intuitively, this is why we are able to demonstrate a gap between the message complexity for a small constant time and the case of a large constant time. That is,
for say, 9 rounds algorithms, $O(n)$ messages are possible \cite{kutten2020singularly}. This gap stands in contrast to the results shown by Afek and Gafni~\cite{afek1991time} for deterministic algorithms. There, for \emph{any constant} time complexity of $t$ rounds, the messages complexity was $\Omega(n^{1+t/2})$ messages. (Recall also that we have shown in Subsection \ref{sub:async2round} above an optimal $O(n^{3/2})$ messages randomized algorithm even for the 2-rounds case.)
The source of the high complexity in the randomized case of 2 rounds algorithms seems different than the one exploited in the lower bound proof of \cite{afek1991time}.
(Moreover, formalizing the argument for randomized algorithms that may fail with some small probability turns out to be more involved.)
Our high-level strategy is as follows:
We call the nodes that are awake initially \emph{roots} and any node that receives a message from a root becomes a \emph{child} of that root (and also of any other root that sent it a message).
We first show that root nodes cannot afford to send more than $o(\sqrt{n})$ messages, and then use this fact to show that many root nodes and child nodes will operate ``undisturbed'', in the sense that they do not receive any additional message on top of what they would have received in an execution where only a single root was awake.
We show that, in expectation, this requires the child nodes to send at least $\Omega(n^{3/2})$ messages to ensure every other node wakes up by the end of round $2$.

We now present the formal argument.
Assume towards a contradiction that there exists a $2$-round algorithm with a message complexity of $\frac{n^{3/2}}{f(n)}$, for some function $f(n) = \omega(1)$.

\paragraph{Events and Notation:}
Our proof relies on the occurrence of certain probabilistic events for which we introduce some notation.
\begin{itemize}[leftmargin=1em]
  \item For any node $u$, let random variable $M(u,r)$ denote the set of round $r$ messages sent by $u$ annotated by the respective port numbers, and let $m(u,r) = |M(u,r)|$.
  We define $m(r) = \sum_{u} m(u,r)$ to be the total number of ports over which a message is sent in round $r$.

  \item Suppose that the adversary wakes up a set of nodes $R=\set{u_1,\dots,u_\ell}$.
  We say that each $u_i$ is a \emph{root}, and define $u_i$'s \emph{children} to be the set of nodes $\{v_{i,1},\dots,v_{i,M(u_i,1)}\}$ that were asleep until receiving a message from $u_i$ in round $1$.
  Note that it is possible that $v_{i,j}$ is the child of multiple roots.

  \item $\Few_{u}$ captures the event that node $u$ sends few messages in round $1$. Formally, $\Few_{u}$ happens iff node $u$ is a root and sends at most $\frac{\sqrt{n}}{g(n)}$ messages, for a function $g(n) = \omega(1)$, where $g(n) \ge f(n)$.
  We also define $\Few_{R} = \bigwedge_{u \in R}\Few_u$, for the set of nodes $R$.

  \item Event $\Undist_{v}$ (``$v$ is \emph{undisturbed}'') occurs iff one of the following cases is true:
    \begin{enumerate}
      \item $v$ is a root and does not receive any messages in round $1$.
      \item $v$ is a child of some root $u_i$ and only receives a message in round $1$ from $u_i$. %
    \end{enumerate}
\end{itemize}

Let us start the proof of Theorem~\ref{thm:adversarial_lb} by showing several bounds on the probability of the events introduced above. \onlyShort{(See the full paper~\cite{DBLP:journals/corr/abs-2301-08235} for omitted proofs.)}\xspace Throughout this section, we define $\Lambda$ to be an execution in which the adversary awakens a set of $\lceil \sqrt{n} \rceil$ root nodes $R$, and $R' \subseteq R$  the subset of roots that remain undisturbed in round $1$ of $\Lambda$, i.e., $\bigwedge_{u_i \in R'} \Undist_{u_i}$.

\begin{lemma} \label{lem:all_few}
  Let $R$ be the set of roots and suppose that $|R| = \Theta(\sqrt{n})$.
  There exists a function $g(n) = \omega(1)$ such that
  $\Pr\lt[ \Few_R\  \rt] \ge 1 - O\lt( \tfrac{1}{g(n)} \rt)$.
\end{lemma}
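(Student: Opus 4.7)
The plan is to exploit the anonymity of the algorithm and apply Markov's inequality together with a union bound. First, I would observe that since nodes are anonymous and, in round~$1$, no communication has yet occurred, the number of messages $X_u := m(u,1)$ that a node $u$ would send if awoken depends solely on $u$'s own random tape (and the common initial knowledge of $n$). Consequently, when considering which of the $n$ identically-programmed nodes happen to be roots, the random variables $\{X_u\}_{u\in V}$ are independent and identically distributed, with some common expectation $\mu$ that does not depend on the adversarial choice of $R$ (since nothing about $R$ has been observed by any node before round~$1$ ends).

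Second, I would bound $\mu$ by instantiating the assumed message-complexity bound on the hypothetical adversarial strategy that wakes up all $n$ nodes in round~$1$. Since the bound of $n^{3/2}/f(n)$ must hold for every adversarial wake-up choice, the expected total number of round-$1$ messages under this strategy is exactly $n\mu$, and hence
\[
  n\mu \;\le\; \frac{n^{3/2}}{f(n)}, \qquad\text{i.e.,}\qquad \mu \;\le\; \frac{\sqrt{n}}{f(n)}.
\]

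Third, returning to the execution $\Lambda$ with $|R|=\Theta(\sqrt n)$, I would apply Markov's inequality to each $X_u$ to obtain
\[
  \Pr\!\lt[ X_u > \tfrac{\sqrt n}{g(n)} \rt] \;\le\; \frac{\mu\, g(n)}{\sqrt n} \;\le\; \frac{g(n)}{f(n)},
\]
and then take a union bound over the $\Theta(\sqrt n)$ roots to get $\Pr[\neg \Few_R]\le O(\sqrt n\, g(n)/f(n))$. The function $g(n)=\omega(1)$ is then chosen to grow sufficiently slowly relative to $f(n)$ so that $\sqrt n\,g(n)^2 = O(f(n))$; this yields the claimed bound $\Pr[\Few_R]\ge 1 - O(1/g(n))$.

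The main obstacle will be rigorously justifying the i.i.d.\ structure of the $X_u$ in the anonymous model, despite the adversary's power over the port mapping. The key observation is that in round~$1$ each node has seen no messages and knows only the common parameter $n$ together with its local port labels $\{1,\dots,n-1\}$; thus the decision of whether to transmit on a given local port is a deterministic function of the node's private random tape alone, and the adversarial port mapping only affects which node \emph{receives} each message, not how many are \emph{sent}. Combined with the independence of the random tapes across nodes, this yields the desired i.i.d.\ structure, after which the analysis is a textbook Markov-plus-union-bound computation.
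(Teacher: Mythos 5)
Your proof matches the paper's high-level structure: exploit anonymity and the independence of round-$1$ behavior to show that the per-node round-$1$ message count is identically distributed across nodes (and identically distributed whether only the roots or all $n$ nodes are awake), derive $\mu \le \sqrt{n}/f(n)$ from the execution $\Gamma$ where all $n$ nodes are roots, and finish with a union bound over the $\Theta(\sqrt n)$ roots. The first two of these steps are sound and correctly identified.

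However, there is a genuine gap in your final step. Markov's inequality only gives $\Pr\lt[ X_u > \sqrt{n}/g(n) \rt] \le \mu\, g(n)/\sqrt{n} \le g(n)/f(n)$, and the union bound then yields $\Pr\lt[\neg\Few_R\rt] \le O\lt(\sqrt{n}\,g(n)/f(n)\rt)$. To obtain $\Pr\lt[\Few_R\rt]\ge 1 - O(1/g(n))$ you assert that one can choose $g(n)=\omega(1)$ with $\sqrt{n}\,g(n)^2 = O(f(n))$. But the surrounding proof is a contradiction argument in which $f(n)$ is only assumed to be $\omega(1)$; if, say, $f(n) = \log n$, then $\sqrt{n}\,g(n)^2 = O(\log n)$ forces $g(n) = O\lt(\sqrt{\log n}/n^{1/4}\rt) = o(1)$, which contradicts the requirement $g(n)=\omega(1)$. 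In other words, your constraint is satisfiable only when $f(n)=\omega(\sqrt{n})$, which is not given. The Markov tail bound $g(n)/f(n)$ is simply too weak to absorb a union bound over $\Theta(\sqrt{n})$ roots.

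The paper takes a slightly different route at exactly this point: rather than applying Markov at the threshold $\sqrt{n}/g(n)$ and then trying to fix $g(n)$ afterward, it first argues (using the execution $\Gamma$ and the assumed $o(n^{3/2})$ message budget) that the per-node tail probability $\Pr\lt[\neg\Few_{u_i}\rt]$ must already be $o(1/\sqrt{n})$, and then chooses $g(n)$ so that $\Pr\lt[\neg\Few_{u_i}\rt] \le \tfrac{1}{\sqrt{n}\,g(n)}$ with $g(n)=\omega(1)$. With that stronger intermediate bound in hand, the union bound over $\Theta(\sqrt{n})$ roots cleanly produces $\Pr\lt[\neg\Few_R\rt] = O(1/g(n))$. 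The intermediate bound $o(1/\sqrt{n})$ is strictly stronger than the $g(n)/f(n)$ your Markov step provides, and obtaining something of that strength is the missing idea in your write-up. (Whether the paper's own one-line justification of that tail bound is itself airtight is debatable, but that is a separate matter from the gap in your argument as written.)
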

\onlyLong{
\begin{proof}
  Let $\Gamma$ be the execution where every node is a root, i.e., awoken in round $1$.
  Since $M(u_i,1)$ only depends on $u_i$'s private coin flip and, in particular, is independent of the round-$1$ computation of other nodes, it
  follows that $\Pr\lt[ \Few_{u_i}\rt] = \Pr\lt[ \Few_{u_j}\rt]$ for all nodes $u_i$ and $u_j$, and for both executions $\Lambda$ and $\Gamma$.
  Assume towards a contradiction that $\Pr\lt[ \neg\Few_{u_i} \rt] \ge \frac{c}{\sqrt{n}}$, for some constant $c>0$.
  Since $n$ nodes are awake, it follows that the expected value of $m(1)$ is $\Omega \lt( n^{3/2} \rt)$ in both executions $\Gamma$ and $\Lambda$, which is a contradiction.

  So far, we have shown that $\Pr\lt[ \neg\Few_R \rt] \le \frac{1}{\sqrt{n}\,g(n)}$, for some function $g(n) = \omega(1)$ and, as argued above, this also holds in the execution $\Lambda$ where only the nodes in $R$ are roots.
  It follows that, in execution $\Lambda$,
  \begin{align*}
    \Pr\lt[ \Few_R \rt]
    = 1 - \Pr\lt[ \exists u \in R\colon \neg\Few_u \rt]
    \ge 1 - \tfrac{|R|}{\sqrt{n}\,g(n)}
    \ge 1 - O\lt( \tfrac{1}{g(n)} \rt).
  \end{align*}
\end{proof}
}

\begin{lemma} \label{lem:undisturbed_roots}
  Let $R'\subseteq R$ be the set of undisturbed roots, i.e., it holds that $\bigwedge_{u_i \in R'} \Undist_{u_i}$.
  Then,
  \begin{enumerate}
    \item[(A)] $\Pr\lt[ \Undist_{u_i}\ \middle|\ \Few_R \rt]
      \ge 1 - O \lt( \tfrac{1}{g(n)} \rt),$ and
    \item[(B)] $\Pr\lt[ |R'| \ge \tfrac{1}{2}|R|\ \middle|\ \Few_R \rt] \ge 1 - O \lt( \tfrac{1}{g(n)} \rt).$
  \end{enumerate}
\end{lemma}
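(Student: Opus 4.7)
The approach is to exploit port symmetry in the anonymous $\ktzero$ model. Since a root cannot distinguish its $n-1$ ports at the start of round~$1$, I can without loss of generality pre-compose the algorithm with a local uniformly random permutation of each node's port numbers; this leaves all complexity measures and success guarantees unchanged but guarantees that, conditionally on $m(u_j,1)$, the set of ports over which $u_j$ transmits in round~$1$ is a uniformly random subset of $[n-1]$ of that size. Under this reduction, for any fixed other node $u_i$, the probability that $u_j$ transmits over the (unique) port of $u_j$ that is mapped to $u_i$ equals $\EE[m(u_j,1)]/(n-1)$.

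For part~(A), fix a root $u_i\in R$. Because all non-roots are asleep in round~$1$, the event $\neg\Undist_{u_i}$ coincides with the event that some $u_j\in R\setminus\{u_i\}$ transmits a round-$1$ message on the port from $u_j$ to $u_i$. The round-$1$ message sets $M(u_j,1)$ depend on disjoint collections of private coins, so the events $\Few_{u_j}$ are mutually independent and $\Few_R$ factors across roots; in particular, conditioning on $\Few_R$ is the same as conditioning on $\Few_{u_j}$ alone when looking at $u_j$'s behavior. Conditioning on $\Few_{u_j}$ caps $m(u_j,1)$ by $\sqrt{n}/g(n)$, and port-symmetry then gives
\[
  \Pr\bigl[u_j \text{ transmits to } u_i\,\bigm|\,\Few_{u_j}\bigr]
  \ \le\ \frac{\sqrt{n}/g(n)}{n-1}
  \ =\ O\!\left(\frac{1}{g(n)\sqrt{n}}\right).
\]
A union bound over the $|R|-1 = O(\sqrt{n})$ other roots yields $\Pr[\neg\Undist_{u_i}\mid\Few_R] = O(1/g(n))$, which is exactly (A).

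Part~(B) follows from (A) by a first-moment plus Markov argument. Linearity of expectation combined with (A) gives
\[
  \EE\bigl[\,|R\setminus R'|\,\bigm|\,\Few_R\bigr]
  \ =\ \sum_{u_i\in R}\Pr[\neg\Undist_{u_i}\mid\Few_R]
  \ =\ O\!\left(\frac{|R|}{g(n)}\right)
  \ =\ O\!\left(\frac{\sqrt{n}}{g(n)}\right),
\]
so Markov's inequality applied to the non-negative random variable $|R\setminus R'|$ yields $\Pr[\,|R\setminus R'|\ge |R|/2\mid\Few_R\,] = O(1/g(n))$, which is equivalent to the desired bound on $|R'|$.

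The main obstacle is justifying the port-symmetry reduction rigorously: the adversary fixes the port mapping, so one has to argue that a local uniformly random relabeling of ports inside the algorithm produces the same joint distribution over events such as ``$u_j$ transmits on the port to $u_i$'' as the original algorithm does under the worst-case mapping (equivalently, that it suffices to prove the lower bound against a uniformly random port mapping). This equivalence is specific to the anonymous $\ktzero$ model and must be invoked carefully, but once it is in place the remaining steps are a clean combination of independence across roots, a union bound, and Markov's inequality.
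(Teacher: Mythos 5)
Your proof is correct and follows essentially the same route as the paper: bound the per-root probability of disturbing $u_i$ by roughly $\tfrac{\sqrt{n}/g(n)}{n-1}$, union-bound over the $\Theta(\sqrt{n})$ other roots to obtain (A), then pass from (A) to (B) by linearity of expectation and Markov's inequality (the paper writes the bound as a product $\bigl(1-\tfrac{1}{\sqrt{n}\,g(n)}\bigr)^{|R|}$ but the estimate is the same up to constants). The one place you go beyond the paper is in explicitly grounding the claim that round-$1$ message destinations are uniformly distributed: the paper simply asserts ``since the port connections are chosen uniformly at random,'' whereas you invoke the $\ktzero$ anonymity/port-symmetrization reduction to justify this against an arbitrary algorithm, and you correctly flag that this reduction (equivalently, that it suffices to prove the lower bound for a uniformly random port mapping) is the one step that must be argued with care.
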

\onlyLong{
\begin{proof}
  Since we condition on event $\Few_R$, we know that every  root sends at most $\sqrt{n}/g(n)$ messages in round $1$, and thus the total number of messages sent in round $1$ is at most $\frac{|R|\sqrt{n}}{g(n)}$.
  Consider any root $u_i \in R$.
  Since the port connections are chosen uniformly at random, the probability that any message of another root $u_j$ reaches $u_i$ is at most $\frac{1}{\sqrt{n}\,g(n)}.$
  Thus,
  \begin{align*}
    \Pr\lt[ \Undist_{u_i} \ \middle|\ \Few_R\rt]
    \ge
    \lt( 1 - \frac{1}{\sqrt{n}\,g(n)} \rt)^{|R|}
    \ge
    \exp \lt( - \frac{2|R|}{\sqrt{n}\,g(n)} \rt)
    \ge
    1 - O \lt( \frac{1}{g(n)} \rt),
  \end{align*}
  which proves (A).
  For (B), note that
  $\EE\lt[ |R| - |R'| \ \middle|\ \Few_R \rt]
  \le
    O \lt( \frac{|R|}{g(n)} \rt),$
  and, by applying Markov's Inequality, we derive that
  $\Pr\lt[ |R| - |R'| \ge \tfrac{1}{2}|R| \rt]
    \le O \lt( \frac{1}{g(n)} \rt).
  $
\end{proof}
}

We are now ready to derive a lower bound on the message complexity of the assumed algorithm.
To this end, we lower-bound the total expected message complexity by only counting the number of messages sent by the children of each undisturbed root $u_i \in R'$ in round $2$.
We have
\begin{align}
  \EE\lt[ m(2) \rt]
  &\ge
    \EE\lt[
      \sum\onlyShort{\nolimits}_{u_i \in R'}
      \sum\onlyShort{\nolimits}_{j=1}^{m(u_i,1)}
      m(v_{i,j},2)
    \rt]\ \notag\\
  &\ge
    \Pr\lt[ \Few_R \rt]
    \EE\lt[
      \sum\onlyShort{\nolimits}_{u_i \in R'}
      \sum\onlyShort{\nolimits}_{j=1}^{m(u_i,1)}
      m(v_{i,j},2)
    \ \middle|\
      \Few_R
    \rt].\notag\\
  \ann{by Lemma~\ref{lem:all_few}}
  &\ge
    \lt(1 - o(1)\rt)
    \EE\lt[
      \sum\onlyShort{\nolimits}_{u_i \in R'}
      \sum\onlyShort{\nolimits}_{j=1}^{m(u_i,1)}
      m(v_{i,j},2)
    \ \middle|\
      \Few_R
    \rt].\notag
\end{align}
Next, we condition on the event that $|R'| \ge \tfrac{1}{2}|R|$ and, without loss of generality, assume that $R' = \set{u_1,\dots,u_{\ell}}$, where $\ell = \lfloor |R|/2 \rfloor$.
We get
\begin{align}
  \EE\lt[ m(2) \rt]
  &\ge
    (1 - o(1))
    \Pr\lt[ |R'| \!\ge\! \tfrac{1}{2}|R| \ \middle|\ \Few_R  \rt]\notag\\
    &\cdot
    \sum_{i=1}^\ell
    \EE\lt[
      \sum_{j=1}^{m(u_i,1)}
      m(v_{i,j},2)
    \ \middle|\
      |R'| \!\ge\! \tfrac{1}{2}|R|,
      \Few_R,
      \Undist_{u_i}
    \rt]. 
    \notag\\
  \ann{by Lem.~\ref{lem:undisturbed_roots}}
  &\ge
  (1 - o(1))\notag\\ 
  &\cdot\sum_{i=1}^\ell
  \EE\lt[
    \sum_{j=1}^{m(u_i,1)}
    m(v_{i,j},2)
  \ \middle|\
    |R'| \!\ge\! \tfrac{\sqrt{n}}{2},
    \Few_R,
    \Undist_{u_i}
  \rt]. \label{eq:exp_bnd6}
\end{align}
Note that the conditioning on $\Undist_{u_i}$ is implied by the fact that the outer sum ranges only over undisturbed roots.
Let $C_i'$ be the set of undisturbed child nodes of root $u_i$.
We make use of the following lemma\onlyLong{, whose proof we postpone to Section~\ref{sec:good}:}\onlyShort{:}

\newcommand{\lemmaGood}{%
Consider execution $\Lambda$.
For every root $u_i \in R$, we have
\[
\Pr\lt[  |C_i'| \!\ge\! \tfrac{1}{2}m(u_i,1),  m(u_i,2) \!=\! o(n)  \ \middle|\ |R'|\!\ge\! \tfrac{\sqrt{n}}{2}, \Few_R,
  \Undist_{u_i}
 \rt]
\ge 1 - o(1).
\]
}
\begin{lemma} \label{lem:good}
  \lemmaGood
\end{lemma}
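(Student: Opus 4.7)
The plan is to establish the two conjuncts inside the joint probability separately and then apply a union bound. Let $\mathcal{E} := \{|R'|\ge \tfrac{\sqrt{n}}{2}\}\wedge \Few_R \wedge \Undist_{u_i}$ denote the given conditioning. The first goal is to show $\Pr[|C_i'| \ge \tfrac{1}{2}m(u_i,1) \mid \mathcal{E}] \ge 1-o(1)$, and the second is $\Pr[m(u_i,2) = o(n) \mid \mathcal{E}] \ge 1-o(1)$; combining them via a union bound finishes the lemma.

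For the first bound, under $\Few_R$ the $|R|-1 = O(\sqrt{n})$ roots other than $u_i$ together send at most $O(n/g(n))$ round-$1$ messages. Since the port mapping is uniformly random, the probability that any fixed node receives a message from some root $\ne u_i$ is $O(1/g(n))$. Conditioning on $\Undist_{u_i}$ and on $u_i$'s round-$1$ coin flips pins down the identities of $u_i$'s children; hence the expected number of $u_i$'s children that are disturbed is at most $O(m(u_i,1)/g(n))$. Markov's inequality shows that more than half of them are disturbed with probability $O(1/g(n)) = o(1)$, and the further conditioning on $|R'|\ge\tfrac{\sqrt{n}}{2}$ costs only an additive $o(1)$ by Lemma~\ref{lem:undisturbed_roots}(B).

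The second bound is driven by symmetry across roots. Because the random port mapping is a uniformly random bijection and the algorithm cannot distinguish a priori among the elements of $R$, the conditional distribution of $m(u_i,2)$ given $\Few_R \wedge \Undist_{u_i}$ is the same for every root $u_i \in R$. Combining this with Lemma~\ref{lem:all_few}, Lemma~\ref{lem:undisturbed_roots}, and the assumed message complexity bound $\EE[m(2)] \le n^{3/2}/f(n)$ yields
\[
  \tfrac{n^{3/2}}{f(n)} \;\ge\; \EE[m(2)] \;\ge\; (1-o(1))\,|R|\,\EE\lt[ m(u_i,2) \ \middle|\ \Few_R, \Undist_{u_i} \rt].
\]
Dividing by $|R|=\sqrt{n}$ gives $\EE[m(u_i,2)\mid \Few_R,\Undist_{u_i}] \le (1+o(1))\,n/f(n)$, and Markov's inequality with threshold $n/\sqrt{f(n)}$ (which is $o(n)$ since $f(n)=\omega(1)$) yields $\Pr[m(u_i,2) \ge n/\sqrt{f(n)} \mid \Few_R,\Undist_{u_i}] = o(1)$. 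The extra conditioning on $|R'| \ge \tfrac{\sqrt{n}}{2}$ again costs only $o(1)$ by Lemma~\ref{lem:undisturbed_roots}(B).

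The main obstacle is justifying the symmetry step used in the second bound — namely, showing that the marginal law of $m(u_i,2)$ conditioned on $\Few_R \wedge \Undist_{u_i}$ really is the same across all roots. This follows from a relabeling argument on the uniformly random port mapping, but care is needed to incorporate the non-symmetric event $|R'|\ge\tfrac{\sqrt{n}}{2}$ only \emph{after} the symmetry step, via Lemma~\ref{lem:undisturbed_roots}(B), so that it does not bias the marginals across roots.
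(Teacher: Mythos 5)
Your proof is correct and takes essentially the same route as the paper's, which splits the claim via the chain rule into two sub-lemmas (Lemma~\ref{lem:root_msgs} for the bound on $m(u_i,2)$, Lemma~\ref{lem:undisturbed_children} for the undisturbed children) and relies on the same two ingredients you use: exchangeability of undisturbed roots to transfer the global $o(n^{3/2})$ message budget into a per-root bound on $m(u_i,2)$, and the observation that under $\Few_R$ the few round-$1$ messages rarely reach any fixed child of $u_i$. The only cosmetic differences are that you combine the two parts by a union bound rather than the chain rule, and you add the conditioning on $|R'|\ge\tfrac{\sqrt{n}}{2}$ after the symmetry step, whereas the paper conditions on the full event $\mathcal{F}_i$ throughout and implicitly uses that the behavior of every undisturbed root in round $2$ depends only on its own coins.
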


Let $\mathcal{E}_i$ denote the event
\onlyLong{\begin{align}%
  \lt( |C_i'| \!\ge\! \tfrac{1}{2}m(u_i,1) \rt) \land  \lt(  m(u_i,2) \!=\! o(n) \rt) \land (|R'|\!\ge\! \tfrac{\sqrt{n}}{2} ) \land  \Few_R
  \land  \Undist_{u_i}
  \label{eq:cond}
\end{align}}%
\onlyShort{
$
  \lt( |C_i'| \!\ge\! \tfrac{1}{2}m(u_i,1) \rt) \land  \lt(  m(u_i,2) \!=\! o(n) \rt) \land (|R'|\!\ge\! \tfrac{\sqrt{n}}{2} ) \land  \Few_R
  \land  \Undist_{u_i}
$
}\xspace
where we assume (w.l.o.g.) that $C_i' = \set{v_{i,1},\dots,v_{i,k}}$, for some $k \ge \tfrac{1}{2}m(u_i,1)$.
By restricting the innermost summation to range only over the undisturbed children in \eqref{eq:exp_bnd6}, we get
\begin{align}
  \EE\lt[ m(2) \rt]
  &\ge
    (1 - o(1))\notag\\ 
    \ &\cdot\Pr\lt[  |C_i'| \!\ge\! \tfrac{m(u_i,1)}{2},  m(u_i,2) \!=\! o(n)\ \middle|\  |R'|\!\ge\! \tfrac{\sqrt{n}}{2}, \Few_R\rt]\notag\\ 
    \quad &\cdot\sum_{i=1}^\ell
    \EE\lt[
      \sum_{j=1}^{k}
      m(v_{i,j},2)
    \quad \middle|\
      \mathcal{E}_i
    \rt]
  \notag\\
  \ann{by Lem~\ref{lem:good}}
  &\ge
  \begin{multlined}[t]
    (1 - o(1))
    \sum\onlyShort{\nolimits}_{i=1}^\ell
    \EE\lt[
      \sum\onlyShort{\nolimits}_{j=1}^{k}
      m(v_{i,j},2)
    \ \middle|\
      \mathcal{E}_i
    \rt].
  \end{multlined} \label{eq:exp_bnd10}
\end{align}

Let $\Corr$ be the event that the algorithm succeeds in waking up all nodes.
Notice that
\begin{align}
  \Pr\lt[ \Corr\ \middle|\ \mathcal{E}_i \rt]
  &\ge \Pr\lt[ \Corr \rt]
  -
  \Pr\lt[ \neg \mathcal{E}_i \rt] \notag\\ 
  &\ge
    \lt( 1 - \epsilon \rt)
    - o(1).\label{eq:corr}
\end{align}

Continuing from the right-hand side of \eqref{eq:exp_bnd10}, we get
\begin{align}
  \EE\lt[ m(2) \rt]
  &\ge
    (1 - o(1)) \notag\\ 
   &\quad\cdot\Pr\lt[ \Corr \ \middle|\ \mathcal{E}_i \rt]
    \sum\onlyShort{\nolimits}_{i=1}^\ell
    \EE\lt[
      \sum\onlyShort{\nolimits}_{j=1}^{m(u_i,1)}
      m(v_{i,j},2)
    \ \middle|\
      \Corr, \mathcal{E}_i
    \rt], \notag\\ 
  \ann{by \eqref{eq:corr}}
  &\ge
  \begin{multlined}[t]
    (1 - \epsilon - o(1))
    \sum\onlyShort{\nolimits}_{i=1}^\ell
    \EE\lt[
      \sum\onlyShort{\nolimits}_{j=1}^{m(u_i,1)}
      m(v_{i,j},2)
    \ \middle|\
      \Corr, \mathcal{E}_i
    \rt].
  \end{multlined} \label{eq:exp_bnd8}
\end{align}
Let $\Gamma$ be the execution where only $u_i$ is awake in round $1$.
Due to conditioning on $\mathcal{E}_i$\onlyShort{,}\onlyLong{(defined in \eqref{eq:cond}),}\xspace it holds that
\onlyLong{\begin{align*}
  m(u_i,1) + m(u_i,2) \le \frac{\sqrt{n}}{g(n)} + o(n) = o(n),
\end{align*}}%
\onlyShort{
$
  m(u_i,1) + m(u_i,2) \le \frac{\sqrt{n}}{g(n)} + o(n) = o(n),
$
}\xspace
and hence, conditioned on $\Corr$, we know that
\begin{align}
  \sum\onlyShort{\nolimits}_{j=1}^{k} m(v_{i,j},2) \ge n - o(n) \ge \frac{n}{2} \label{eq:child_msgs}
\end{align}
in $\Gamma$, i.e., the children of $u_i$ become responsible for waking up $\Omega(n)$ nodes in round $2$.
Since $u_i \in R'$, we know that $u_i$ is undisturbed, and thus the distribution of $m(u_i,2)$ is the same in both executions $\Gamma$ and $\Lambda$ (in which there are $\Theta(\sqrt{n})$ roots).
Furthermore, the summation in \eqref{eq:child_msgs} ranges only over the undisturbed child nodes, which means that $m(v_{i,j},2)$ has the same distribution in $\Gamma$ and $\Lambda$.
Therefore, the bound~\eqref{eq:child_msgs} must also hold in $\Lambda$, for the children of each $u_i \in R'$.
Plugging \eqref{eq:child_msgs} into \eqref{eq:exp_bnd8} and recalling that $\ell = \Theta \lt( \sqrt{n} \rt)$, we conclude that
\onlyLong{%
\begin{align*}
  \EE\lt[ m(2) \rt]
  \ge
    (1 - \epsilon - o(1))
    \frac{\ell\,n}{2}
  = \Omega \lt( n^{3/2} \rt).
\end{align*}}%
\onlyShort{
$
  \EE\lt[ m(2) \rt]
  \ge
    (1 - \epsilon - o(1))
    \frac{\ell\,n}{2}
  = \Omega \lt( n^{3/2} \rt).
$
}\xspace

\onlyLong{
\subsubsection{Proof of Lemma~\ref{lem:good}} \label{sec:good}
\begin{replemma}{lem:good}
  \lemmaGood
\end{replemma}
The following Lemmas~\ref{lem:root_msgs} and \ref{lem:undisturbed_children} together with a simple application of the chain rule will complete the proof of Lemma~\ref{lem:good}.

\begin{lemma} \label{lem:root_msgs}
  Let $u_i \in R$ be a root in execution $\Lambda$.
  Then,
  \begin{align*}
    \Pr\lt[ m(u_i,2) = o(n)
    \ \middle|\
      |R'| \ge \tfrac{\sqrt{n}}{2}, \Few_R, \Undist_{u_i}
    \rt] \ge 1 - o(1).
  \end{align*}
\end{lemma}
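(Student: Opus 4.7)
The plan is to combine a symmetry argument across the $\Theta(\sqrt{n})$ roots with the assumed total message budget of $n^{3/2}/f(n)$, and then apply Markov's inequality and Bayes' rule. The crucial observation is that, because the algorithm is anonymous and every root in $R$ starts round $1$ in the same ``just woken by the adversary'' state, the joint distribution of $\lb(m(u,2)\cdot\mathbf{1}[\Undist_u]\rb)_{u\in R}$ should be exchangeable once the port mapping is sampled symmetrically. Since we are proving a lower bound, averaging over port mappings is admissible: the assumed expected message-complexity bound holds for every fixed port mapping, and therefore also after taking an additional expectation over a uniform port mapping.

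Under this exchangeability, I obtain for every root $u_i \in R$,
\[
\sqrt{n}\cdot\EE\lt[m(u_i,2)\cdot\mathbf{1}[\Undist_{u_i}]\rt] \;=\; \EE\lt[\sum_{u_j\in R} m(u_j,2)\cdot\mathbf{1}[\Undist_{u_j}]\rt] \;\le\; \EE[m(2)] \;\le\; \tfrac{n^{3/2}}{f(n)},
\]
so $\EE\lt[m(u_i,2)\cdot\mathbf{1}[\Undist_{u_i}]\rt] \le n/f(n)$. Markov's inequality applied at threshold $n/\sqrt{f(n)}$ then yields
\[
\Pr\lt[m(u_i,2)\ge n/\sqrt{f(n)},\; \Undist_{u_i}\rt] \;\le\; 1/\sqrt{f(n)} \;=\; o(1).
\]

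To convert this into the stated conditional statement, I would observe that the event $A := \{|R'|\ge \sqrt{n}/2\}\cap \Few_R \cap \Undist_{u_i}$ is contained in $\Undist_{u_i}$, so the joint probability above also upper-bounds $\Pr[m(u_i,2)\ge n/\sqrt{f(n)},\, A]$. Since Lemmas~\ref{lem:all_few} and~\ref{lem:undisturbed_roots} together imply $\Pr[A] \ge 1 - o(1)$, a single application of Bayes' rule gives $\Pr[m(u_i,2)\ge n/\sqrt{f(n)} \mid A] \le o(1)/(1-o(1)) = o(1)$. As $f(n) = \omega(1)$ gives $n/\sqrt{f(n)} = o(n)$, this is exactly $\Pr[m(u_i,2) = o(n) \mid A] \ge 1 - o(1)$, as required.

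The main obstacle I anticipate is making the exchangeability step rigorous. Under a worst-case port mapping, a particular root could be structurally distinguished and send many more round-$2$ messages than its peers, so one has to average over port mappings carefully (or, equivalently, consider a symmetric group action that permutes roots together with their port assignments). Once this is done, any permutation of $R$ becomes a measure-preserving transformation of the joint probability space, the tuple $\lb(m(u,2)\cdot\mathbf{1}[\Undist_u]\rb)_{u\in R}$ becomes exchangeable, and the Markov/Bayes calculation above goes through verbatim.
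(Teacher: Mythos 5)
Your proposal is correct and matches the paper's approach at its core: both arguments exploit the symmetry of the roots in $R$ to relate the per-root round-$2$ message count to the total message budget of $o(n^{3/2})$, and then apply a Markov-type argument. The paper organizes this slightly differently---it defines $\mathcal{F}_i$, sets $p = \Pr[m(u_i,2)=\Omega(n)\mid\mathcal{F}_i]$, and derives $\EE[m(2)]\ge\Pr[\mathcal{F}_i]\cdot\EE[m(2)\mid\mathcal{F}_i]=\Omega(p\,n^{3/2})$ by a contradiction argument, where the symmetry enters as ``conditioned on $\mathcal{F}_i$, the distribution of $m(u_j,2)$ is the same for every $u_j\in R'$''---whereas you bound the unconditional quantity $\EE[m(u_i,2)\cdot\mathbf{1}[\Undist_{u_i}]]\le n/f(n)$, apply Markov, and finish with a Bayes step. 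These are two arrangements of the same ingredients, so the substance is identical. One small remark: the worry you raise about exchangeability under a worst-case port mapping is unnecessary. Since the nodes are anonymous, the port mapping is fixed obliviously (before any randomness is drawn), and each awake node sends over uniformly random ports, the marginal distribution of $(m(u_i,2),\,\mathbf{1}[\Undist_{u_i}])$ is the same for every root $u_i\in R$ under \emph{any} fixed oblivious port mapping---no averaging over port mappings is needed, and indeed the paper does not perform one. All you actually use is that these marginals are identical (not full joint exchangeability), which follows immediately from anonymity and the simultaneous wake-up of $R$.
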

\begin{proof}

  Let event $\mathcal{F}_i = (|R'| \ge \tfrac{\sqrt{n}}{2}) \land \Few_R \land  \Undist_{u_i}$.
  Note that
  \begin{align*}
    \Pr\lt[ (|R'| \!\ge\! \tfrac{\sqrt{n}}{2}) \land \Few_R \land  \Undist_{u_i} \rt]
    &=
    \Pr\lt[
        |R'| \!\ge\! \tfrac{\sqrt{n}}{2}
      \ \middle|\
        \Undist_{u_i},
        \Few_R
    \rt]
    \Pr\lt[
      \Undist_{u_i}
    \ \middle|\
      \Few_R
    \rt]
    \Pr\lt[ \Few_R \rt]\notag\\
    &\ge
    \Pr\lt[
        |R'| \!\ge\! \tfrac{\sqrt{n}}{2}
      \ \middle|\
        \Few_R
    \rt]
    \Pr\lt[
      \Undist_{u_i}
    \ \middle|\
      \Few_R
    \rt]
    \Pr\lt[ \Few_R \rt],\notag
  \end{align*}
  where the last inequality follows because removing the conditioning on $\Undist_{u_i}$ can only decrease the probability that we have at least $|R'| \ge \frac{\sqrt{n}}{2}$ many undisturbed roots.
  By applying Lemmas~\ref{lem:undisturbed_roots} and \ref{lem:all_few}, it follows that
  \begin{align}
    \Pr\lt[ (|R'| \!\ge\! \tfrac{\sqrt{n}}{2}) \land \Few_R \land  \Undist_{u_i} \rt]
    \ge
      1 - O \lt( \frac{1}{g(n)} \rt). \label{eq:bnd_fi}
  \end{align}

  To complete the proof, we argue that $u_i$ must send many messages in round $2$, if $\mathcal{F}_i$ happens.
  Let
  \[
  p = \Pr\lt[ m(u_i,2) \!=\! \Omega(n)\ \middle|\ \mathcal{F}_i \rt],
  \]
  and note that our goal is to show that $p = o(1)$.
  By assumption, all nodes in $R'$ are undisturbed in round $1$ of $\Lambda$, and hence the distribution $m(u_i,2)$ is the same for all $u_i \in R'$.
  It follows that the expected message complexity is at least
  \begin{align}
    \EE\lt[ m(2)\ \middle|\ \mathcal{F}_i \rt]
    &\ge
      \EE\lt[ \sum_{u_i \in R'} m(u_i,2)\ \middle|\
        \mathcal{F}_i
      \rt] \notag\\
    &\ge
      \sum_{u_i \in R'}
      \EE\lt[
        m(u_i,2)\ \middle|\
        \mathcal{F}_i
      \rt] \notag\\
    &\ge
      p \cdot
      \sum_{u_i \in R'}
      \EE\lt[
          m(u_i,2)\
        \middle|\
          m(u_i,2) = \Omega(n),
        \mathcal{F}_i
      \rt] \notag\\
     &=
      \Omega \lt( p\,n\,|R'| \rt)\notag\\
     &= \Omega \lt( p\, n^{3/2} \rt). \label{eq:exp_bnd12}
  \end{align}
  It follows that
  \begin{align*}
  \EE\lt[ m(2) \rt]
    &\ge
      \Pr\lt[ \mathcal{F}_i \rt]
      \EE\lt[ m(2) \ \middle|\ \mathcal{F}_i \rt] \\
    \ann{by \eqref{eq:bnd_fi} and \eqref{eq:exp_bnd12}}
    &\ge
      \lt( 1 - \frac{1}{g(n)} \rt)\Omega \lt( p\, n^{3/2} \rt) \notag\\
    &=
      \Omega \lt( p\, n^{3/2} \rt),\notag
  \end{align*}
  and since the algorithm has an assumed message complexity of $o \lt( n^{3/2} \rt)$, this implies $p = o(1)$, as required.
\end{proof}

\begin{lemma} \label{lem:undisturbed_children}
  Let $C_i'$ be the set of undisturbed child nodes of root $u_i$.
  It holds that
  \begin{align*}
    \Pr\lt[
      |C_i'| \ge \tfrac{1}{2}m(u_i,1)
    \ \middle|\
      m(u_i,2) = o(n),
      |R'| \ge \tfrac{\sqrt{n}}{2}, \Few_R, \Undist_{u_i}
    \rt]
    \ge
    1 - O \lt( \tfrac{1}{g(n)} \rt).
  \end{align*}
\end{lemma}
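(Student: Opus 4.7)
A child $v_{i,j}$ of $u_i$ belongs to $C_i'$ precisely when no root in $R \setminus \{u_i\}$ sent it a message in round~$1$. My plan is to upper-bound the expected number of \emph{disturbing} round-$1$ messages --- those sent by some root $u_{j'} \ne u_i$ to a child of $u_i$ --- and then invoke Markov's inequality; since each disturbed child must be hit by at least one such message, the number of disturbed children is upper-bounded by the number of disturbing messages.

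Fix an integer $k \ge 1$ and condition additionally on $m(u_i,1) = k$ (the case $k = 0$ is vacuous). By $\Few_R$ together with $|R| = \Theta(\sqrt{n})$, the other roots collectively send at most $|R| \cdot \sqrt{n}/g(n) = O(n/g(n))$ round-$1$ messages. Since the port connections are uniformly random and $\Undist_{u_i}$ forces every such message to avoid $u_i$, each message from a root $u_{j'} \ne u_i$ lands on a uniformly random element of $V \setminus \{u_{j'},u_i\}$, and so hits one of the $k$ children of $u_i$ with probability at most $k/(n-2) \le 2k/n$. Linearity of expectation gives
\[
  \EE\lt[\text{\# disturbing messages} \ \middle|\  m(u_i,1) = k,\ \text{conditioning}\rt] \le \frac{n}{g(n)} \cdot \frac{2k}{n} = O\lt(\frac{k}{g(n)}\rt),
\]
and Markov's inequality then yields that fewer than $k/2$ disturbing messages occur with probability $1 - O(1/g(n))$, whence $|C_i'| \ge k - k/2 = k/2$ with the same probability. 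Averaging over the conditional distribution of $m(u_i,1)$ produces the claimed bound.

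The main subtlety I expect is justifying that adding the conditioning $m(u_i,2) = o(n)$ on top of the round-$1$ events does not distort the conditional distribution of the other roots' round-$1$ message destinations. Under $\Undist_{u_i}$, node $u_i$ receives no round-$1$ message, so $m(u_i,2)$ is a function only of $u_i$'s private coins and its own round-$1$ behavior, and is therefore conditionally independent of the other roots' round-$1$ random choices given the round-$1$ events already in the conditioning. Consequently, the expected count of disturbing messages is unaffected by this extra conditioning. Spelling out this conditional independence cleanly --- in particular, disentangling the joint law of port mappings and algorithm randomness --- is the step I anticipate will need the most care.
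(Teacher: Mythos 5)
Your approach is essentially the same as the paper's: bound the expected number of round-$1$ messages from other roots that land on $u_i$'s children, using $\Few_R$ to cap the total message count and the randomness of port assignments to cap the per-message hit probability, then apply Markov. The paper counts disturbed children (each with disturbance probability $O(1/g(n))$) rather than disturbing messages, but these are interchangeable.

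One small imprecision: you assert that, under the full conditioning, each round-$1$ message from $u_{j'}$ "lands on a uniformly random element of $V\setminus\{u_{j'},u_i\}$," but the event $|R'| \ge \tfrac{\sqrt{n}}{2}$ biases message destinations \emph{away} from undisturbed roots and hence \emph{towards} non-root nodes such as the children; the conditional distribution is not uniform over $V\setminus\{u_{j'},u_i\}$. The paper handles this explicitly by the pessimistic assumption that all messages avoid \emph{every} root, giving a per-message-per-target probability of at most $\tfrac{2}{n}$. Your relaxation $\tfrac{k}{n-2}\le\tfrac{2k}{n}$ happens to absorb this bias since $|R|=\Theta(\sqrt{n})$, so the final estimate is unaffected, but the intermediate uniformity claim should be replaced by the pessimistic worst-case count. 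On the other hand, you correctly flag the conditional-independence point needed to discard the $m(u_i,2)=o(n)$ conditioning --- a step the paper leaves implicit --- and your justification (under $\Undist_{u_i}$, $m(u_i,2)$ depends only on $u_i$'s own coins and round-$1$ behavior) is the right one.
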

\begin{proof}
  Let $v_{i,j}$ be a child of root $u_i \in R'$ in execution $\Lambda$.
  We first show that
  \begin{align}
    \Pr\lt[ \neg\Undist_{v_{i,j}}
      \ \middle|\
      m(u_i,2) = o(n),
      |R'| \ge \tfrac{\sqrt{n}}{2}, \Few_R, \Undist_{u_i}
    \rt]
    \le
    O\lt(\tfrac{1}{g(n)}\rt). \label{eq:prob_bnd1}
  \end{align}

  Recall that any root node in $R'$ does not receive messages in round $1$. To account for the conditioning on the event $|R'| \ge \frac{1}{2}|R|$,
  we pessimistically assume that $|R'| = |R|$, which means that all messages sent in round $1$ must go to
  $n - |R| = n - \Theta(\sqrt{n}) \ge \tfrac{1}{2}n$
  non-root nodes, one of which is $v_{i,j}$, as this assumption can only decrease the probability of $\Undist_{v_{i,j}}$.
  It follows that the probability that some root sends a round-$1$ message to $v_{i,j}$ is at most $\frac{2\sqrt{n}}{n\,g(n)}$.
  Thus the probability of $\neg\Undist_{v_{i,j}}$ is at most
  \begin{align*}
    &\phantom{\le}\
      1 - \lt( 1 - \frac{2}{\sqrt{n}\,g(n)} \rt)^{|R|} \\
    \ann{since $1 - x \ge e^{-2x}$, for $x<\tfrac{1}{2}$}
    &\le
      1 - \exp \lt( - \frac{4|R|}{\sqrt{n}\,g(n)} \rt) \\
    \ann{since $1 - x \le e^{-x}$}
    &\le
      1 - \lt( 1 - \frac{4|R|}{\sqrt{n}\,g(n)} \rt) \\
    &=
      O\lt(\tfrac{1}{g(n)}\rt),
  \end{align*}
  which proves \eqref{eq:prob_bnd1}.
  It follows that the expected number of disturbed children of $u_i$ is at most
  $m(u_i,1) \lt( 1 - O \lt( \tfrac{1}{g(n)} \rt) \rt),$
  and Markov's Inequality ensures that the probability of
  \[
  m(u_i,1) - |C'| \ge \tfrac{1}{2}m(u_i,1)
  \]
  is at most $O \lt( \tfrac{1}{g(n)} \rt)$.
  Conversely, we have $|C'| \ge \tfrac{1}{2}m(u_i,1)$ with probability at least $1 - O\lt( \tfrac{1}{g(n)} \rt)$.
\end{proof}
}%

\section{A Communication-Time Tradeoff in the Asynchronous Clique} \label{sec:async}

In this section, we present a randomized leader election algorithm that achieves the first trade off between messages and time in the asynchronous model.

Indeed, in addition to the adversarial delay of messages, we also allow an adversarial wake up.\footnote{Recall that in the synchronous case we addressed the two kinds of wake up separately---we assumed simultaneous wake-up in Section \ref{sec:sync-clique} since this made our lower bounds harder to show, and addressed the synchronous clique with adversarial wake-up in Section~\ref{sec:adversarial} since this allowed for an even higher lower bound.}
(Note that the lower bound of Section \ref{sec:adversarial} applies here too, since this section deals with adversarial wake up too).
\enlargethispage{\baselineskip}
We assume that the adversary must choose the port mapping \emph{obliviously}, i.e., before the first node is awoken and hence the port connections are independent of the random bits used by the nodes.
However, the adversary may adaptively schedule the steps taken by the nodes: in fact, after fixing the port mappings, we allow the adversary to inspect all random bit strings before deciding which node takes the next step and at what time.
Equivalently to the synchronous model, the \emph{message complexity} of an asynchronous algorithm is the worst-case total number of messages sent during its execution.
Note that we assume that each communication link guarantees FIFO message delivery.
We follow the standard of defining the \emph{asynchronous time complexity} as the worst-case total number of time units since the first node was awoken until the last message was received, whereas a \emph{unit of time} refers to an upper bound on the transmission time of a message.

When a node $u$ wakes up, either by the adversary or due to receiving a message, it samples uniformly at random $\Theta(n^{1/k})$ ports for sending its wake-up message, where $k$ is a parameter that controls the tradeoff between messages and time.
Then, $u$  becomes a \emph{leader candidate} with probability $\Theta(\log n / n)$, and, if it is a candidate, chooses a number called \emph{rank} from a range of a polynomial size that is, a rank that is
unique with high probability.
If $u$ is a candidate, it samples $\Theta(\sqrt{n\log n})$ ports (without replacement) uniformly at random and sends a $\texttt{compete}$ message that carries its rank; the receiving nodes will serve as $u$'s \emph{referees}.
Each node $v$ keeps track of the ``winning'' candidate's rank $\winner$ that
$v$ has seen so far: if $v$ is a candidate, it will store its own rank in its variable $\winner$ (which is empty initially) upon becoming candidate.
Had this been a synchronous network, the referee would have known when all the $\texttt{compete}$  messages arrived, and would have been able to reply \texttt{``you win!''} to the highest ranked node (possibly itself).
Due to the asynchronous arrival of the $\texttt{compete}$ messages, each referee responds \texttt{``you win!''} already to the very first $\texttt{compete}$ message it receives from some node $u$, if $v$ is not a candidate; in that case, $v$ also stores the rank of $u$ in $\winner$.
However, the above ``winnings'' may be revoked later.
That is, whenever $v$ receives a compete message with rank $\rho$ from $u$ when  $\winner$ is not empty, node $v$ immediately replies with a $\texttt{you lose!}$-message if $\rho \le \winner$.
Otherwise,
$v$ consults with the node $w$ whose current rank $v$ stores in $\winner$ to see whether $w$
has already decided to become leader. (Node $w$ may be $v$ itself.)
Assuming that $w$ has not become elected
(for instance,
because $w$ is still waiting to hear back from its other referees), then $w$ itself drops out of the competition upon receiving $v$'s message.
Node $v$, in turn, will consider $u$ to be the current winning candidate, update the value in $\winner$ accordingly, and send a $\texttt{you win!}$ message to $u$.
Eventually, the candidate that receives $\texttt{you win!}$ messages from all its referees and has not yet dropped out of the competition becomes leader.
\onlyLong{We provide the full pseudo code in Algorithm~\ref{alg:async}.}%

\begin{theorem} \label{thm:async}
  For any integer $k$ where $2 \le k \le O\lt( \frac{\log n}{\log\log n} \rt)$, there exists an asynchronous leader election algorithm that, with high probability, elects a unique leader in at most $k+8$ units of time and sends at most $O\lt(n^{1+1/k}\rt)$ messages.
\end{theorem}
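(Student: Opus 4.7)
The plan is to analyze the algorithm in three stages: a wake-up propagation phase, a candidate-selection phase, and a competition phase. First, I would establish that, starting from the at least one adversarially-awakened node, the set of awake nodes grows by a factor of $\Theta(n^{1/k})$ per time unit (since each newly awakened node immediately forwards wake-up messages to $\Theta(n^{1/k})$ uniformly random ports). A concentration argument, combined with the fact that only a vanishing fraction of ports can repeat or collide with already-awake nodes until nearly the end, shows that after $k$ time units all $n$ nodes are awake with high probability. This consumes $k$ time units and contributes $O\bigl(n \cdot n^{1/k}\bigr) = O(n^{1+1/k})$ messages, which will be the dominant term.

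Next, I would show that the candidate set satisfies two key properties with high probability: (i) the number of candidates is $\Theta(\log n)$, by a Chernoff bound applied to the $n$ independent Bernoulli trials with parameter $\Theta(\log n / n)$; and (ii) all candidate ranks are distinct, by a union bound over the $\binom{\Theta(\log n)}{2}$ pairs against the polynomial-size rank space. I would also prove a referee-overlap lemma: for any two candidates, their independently sampled sets of $\Theta(\sqrt{n\log n})$ referees intersect in $\Omega(\log n)$ nodes with high probability (the expected intersection is $\Theta(\log n)$, and a standard concentration yields the bound; a union bound over the $O(\log^2 n)$ candidate pairs then makes this simultaneous).

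For correctness of the competition, let $u^\star$ denote the unique maximum-rank candidate. I would argue, tracing the evolution of each referee's $\winner$ variable, that (a) $u^\star$ is never dropped, since any $\texttt{compete}$ message arriving at one of its referees carries a strictly smaller rank and is answered by a $\texttt{you lose!}$; and (b) every other candidate $u$ shares some referee $v$ with $u^\star$ by the overlap lemma, and when $v$ receives $u^\star$'s higher-rank $\texttt{compete}$ (whether before or after $u$'s), the consultation step causes $u$ to drop out before it can commit as leader. FIFO delivery on each link and the fact that $u^\star$ only ever receives $\texttt{you win!}$ responses imply it eventually becomes the unique leader.

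Finally, I would bound the time of the competition phase by a constant and verify the message budget. Each candidate picks referees upon waking, so all $\texttt{compete}$ messages are sent by time $k$ and delivered by time $k+1$. A careful walk through the protocol---initial $\texttt{compete}$ ($+1$), possible consultation round-trip from referee to current putative winner ($+2$), revocation response back to the original referee ($+1$), the referee's updated $\texttt{you win!}$ to the new candidate ($+1$), plus a couple of extra slack units---fits inside $8$ additional time units; the main subtlety is that consultations can form a short chain only of length $O(1)$ because each chain strictly increases the stored rank along a fixed referee and this can only happen a handful of times before the eventual winner's message arrives. The competition messages total $O(\log n \cdot \sqrt{n\log n}) = o(n^{1+1/k})$, so the overall message complexity is dominated by the wake-up phase. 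The main obstacle I anticipate is formalizing the constant-time bound on the competition despite the chained consultation/revocation traffic; this is where FIFO delivery, the uniqueness of ranks, and the monotonicity of $\winner$ values along each referee do the heavy lifting.
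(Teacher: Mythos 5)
Your high-level decomposition into a wake-up phase and a competition phase matches the paper, and your ingredients for the competition analysis (Chernoff for the number of candidates, union bound for rank uniqueness, birthday-style overlap of referee sets, and using $\winner$-monotonicity and FIFO to bound consultation time) are all the right ones. However, there are two substantive gaps.

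First, the wake-up analysis is too optimistic and glosses over the central technical difficulty. Saying ``the set of awake nodes grows by a factor of $\Theta(n^{1/k})$ per time unit, so after $k$ units everyone is awake'' conflates asynchronous \emph{time} with wake-up \emph{steps}: in the asynchronous model the adversary controls delays, so you cannot simply iterate a per-time-unit multiplicative bound. The paper instead tracks the set of \emph{covered} nodes (awake or addressed by an in-transit message), shows an \emph{additive} increase of $\Theta(n^{1/k})$ per \emph{wake-up step} using a negatively-correlated Chernoff bound, and then builds a cover tree whose depth bounds the elapsed time. Moreover, the geometric growth argument necessarily stalls once a constant fraction of nodes is covered (collisions dominate), so the paper argues only up to $n/16$ covered nodes (taking $\le k+2$ time units) and then uses a separate coupon-collector-style step to wake the rest in $2$ more units, yielding $k+4$ for wake-up. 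Your claim of $k$ time units for full wake-up would fail as stated; you happen to recover $k+8$ overall only because your $8$-unit competition bound compensates (the paper bounds the competition by $4$ units).

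Second, your correctness step (a) --- that the maximum-rank candidate $u^\star$ is never dropped because its referees reply $\texttt{you lose!}$ to any lower-rank $\texttt{compete}$ --- is false as stated. A referee of $u^\star$ that receives a lower-rank $\texttt{compete}$ \emph{before} seeing $u^\star$'s will store that lower rank in $\winner$ and reply $\texttt{you win!}$; if that lower-rank candidate manages to collect all of its $\texttt{you win!}$ responses and commit before the consultation resolves, the referee will then send $\texttt{you lose!}$ to $u^\star$, so $u^\star$ \emph{can} be dropped. The paper sidesteps this entirely by proving ``at most one leader'' via contradiction: if two candidates both became leader they would share a referee (by the overlap bound), and a case analysis of that referee's $\winner$ updates and consultations shows the shared referee cannot have granted both. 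You would need either that argument or an explicit ``at least one leader'' argument to make your version go through.
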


\begin{algorithm*}[th]
\begin{algorithmic}[1]
\State Initially, every node is \emph{undecided} and eventually may \emph{decide} to either become leader or non-leader.
\State Each node $u$ has a variable called \emph{winner-rank} $\winner$, which is empty initially, and is used to store the rank of received messages (explained below). In addition, it keeps track of the corresponding node that sent the rank currently stored in winner-rank.
\If{an asleep node $u$ receives a message or is awoken by the adversary}
  \State $u$ sends a message $\msg{\texttt{wake up!}}$ over $\Theta(n^{1/k})$ ports chosen uniformly at random.
\EndIf
\State $u$ becomes a \emph{candidate} with probability $\frac{4\log n}{n}$.
\If{node $u$ is a candidate}
  \State Node $u$ chooses a \emph{rank} $\rho_u$ uniformly at random from $[n^4]$ and stores it in $\winner$.
  \State $u$ samples a set $R_u$ of $\lceil 4\sqrt{n\log n} \rceil$ ports uniformly at random (without replacement).
  \State $u$ sends $\msg{\rho_u,\texttt{compete}}$ over the ports in $R_u$, and considers the nodes in $R_u$ its \emph{referees}.
  \If{$u$ has received $\msg{\texttt{you win!}}$ from all its referees and is still undecided}
    \State $u$ decides to be leader and informs all nodes, who change their statuses to
    non-leader.
    \State \Comment{A node responds to received $\msg{\rho,\texttt{compete}}$-messages even if it has already decided.}
  \EndIf
  \If{$u$ receives a $\msg{\texttt{you lose!}}$ message}
    \State $u$ decides to become non-leader. \label{line:resign1}
  \EndIf
\EndIf
\State Every (decided or undecided) node $v$ processes a received $\msg{\rho_u,\texttt{compete}}$ message as follows:
\If{$v$'s winner-rank $\winner$ is empty}
  \State $v$ stores $\rho_u$ in $\winner$, sends $\msg{\texttt{you win!}}$ to $u$, and $v$ becomes non-leader.
\ElsIf{$\rho_u$ is not larger than $v$'s winner-rank $\winner$}
  \State $v$ sends $\msg{\texttt{you lose!}}$ to $u$, causing $u$ to become non-leader. %
  \label{line:resign2}
\Else \Comment{$\rho_u$ is larger than $v$'s current value of $\winner$}
  \State Let $w$ be the node associated with the value in $\winner$ of $v$.
  \State $v$ asks $w$ whether it is already a leader.
  \If{$w$ has not yet become leader at the time it receives $v$'s request}
    \State $w$ becomes non-leader and informs $v$.
    \State Upon receiving $w$'s answer, $v$ responds to $u$ by sending $\msg{\texttt{you win!}}$.
    \State $v$ stores $u$'s rank in $\winner$.
  \Else
    \State $w$ informs $v$ that it has already become leader.
    \State Upon receiving $w$'s answer, $v$ sends $\msg{\texttt{you lose!}}$ to $u$, causing $u$ to become non-leader. \label{line:resign3}
  \EndIf
\EndIf
\end{algorithmic}
\caption{An asynchronous leader election algorithm that takes a tradeoff parameter $k$, where $2 \le k \le O\lt( {\log n}/{\log\log n} \rt)$.}
\label{alg:async}
\end{algorithm*}

Let us illustrate the tradeoff achieved by Theorem~\ref{thm:async} by considering the two extremes: For $k = 2$, the algorithm terminates in at most $10$ units of time  and sends $O(n^{3/2})$ messages, thus matching the lower bound of Theorem~\ref{thm:adversarial_lb}, whereas for $k=\Theta(\log n / \log\log n)$, we obtain a time complexity of $O(\log n)$ and a message complexity of $O(n\log n)$.
\onlyShort{ }

In the remainder of this section, we prove Theorem~\ref{thm:async}. For the analysis, we consider the execution as being split into two phases: a \emph{wake-up phase}, where the goal is to wake up every node in the network eventually and an \emph{election phase}, in which the nodes determine the leader.
Since the network is asynchronous, it can happen that some nodes enter (or even complete) their election phase, while other nodes are still asleep.

\subsection{Analysis of the Wake-Up Phase} \label{sec:wakeup}

\begin{lemma} \label{lem:wakeup}
  Assume that $2 \le k = O\lt( \frac{\log n}{\log \log n} \rt)$ and
  consider the asynchronous model.
  If each node sends $\gamma n^{1/k}$ messages initially over uniformly at random sampled ports, where $\gamma$ is a sufficiently large constant, then each node wakes up within $k+4$ units of time with high probability.
\end{lemma}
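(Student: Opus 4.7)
The plan is to track the set $A_t$ of nodes awake by the end of time unit $t$, starting from $A_0$ containing at least one node woken by the adversary. The analysis splits into an \emph{expansion phase} of roughly $k$ time units during which $|A_t|$ grows by a multiplicative factor of $\Theta(n^{1/k})$ per unit, followed by a short \emph{saturation phase} of $O(1)$ additional time units that handles the last sleepers. Since the worst case for the adversary is to deliver every message as late as possible, we may pessimistically assume each message takes exactly one time unit and analyze the resulting synchronous ``flood.''

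For a single expansion step, suppose at the start of a time unit we have $a = |A_t|$ awake nodes, each about to send $\gamma n^{1/k}$ independently uniformly sampled wake-up messages. As long as $a \le n^{1-1/k}/(4\gamma)$, the total of $a \gamma n^{1/k}$ messages is at most $n/4$, so a fresh asleep node is hit by any one specific message with probability at least $3/(4(n-1))$. Because the ports chosen by different senders are independent, and the ports chosen within a single sender are sampled without replacement (hence negatively associated), a Chernoff-type bound for sums of negatively associated indicators yields that at least $\tfrac{1}{2}\,a\gamma n^{1/k}$ distinct asleep nodes wake up, with failure probability at most $\exp(-\Omega(a\gamma n^{1/k}))$. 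Iterating this bound for $k-1$ consecutive time units, starting from $|A_0|\ge 1$, we get $|A_{k-1}| \ge \bigl(\tfrac{1}{2}\gamma n^{1/k}\bigr)^{k-1} \ge n^{1-1/k}/(4\gamma)$ once $\gamma$ is a sufficiently large constant.

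Once $|A_t| \ge n^{1-1/k}/(4\gamma)$, the following unit dispatches at least $n/4$ messages into the clique, so each remaining sleeper is hit with probability at least $1 - (1-1/(n-1))^{n/4} \ge 1 - e^{-1/5}$, which drives $|A_{t+1}| \ge n/8$ by a second Chernoff bound. In the very next unit at least $\tfrac{\gamma}{8}\,n^{1+1/k}$ messages are sent, so a specific sleeper remains asleep with probability at most $\exp(-\tfrac{\gamma}{8}\,n^{1/k})$. Here we use the hypothesis $k \le O(\log n/\log\log n)$, which gives $n^{1/k} = \Omega(\log n)$, and thus this probability is at most $n^{-c}$ for any desired constant $c$ by choosing $\gamma$ large enough. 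A union bound over the at most $n$ sleepers shows that the entire network is awake by time $k+4$, with the slack of a few extra rounds absorbing constant factors lost in the Chernoff bounds of the expansion phase and the initial round of saturation.

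The main obstacle is controlling dependence: within one time unit the events ``sleeper $v$ gets hit'' for different $v$ are not independent, because they are coupled through the same pool of $a\gamma n^{1/k}$ random port choices. The standard remedy is to invoke that (i) uniform sampling of ports without replacement at a single sender is negatively associated, (ii) indicator sums of independent negatively-associated vectors are negatively associated, and (iii) Chernoff bounds extend to such sums. With this in place, each round fails with probability at most $n^{-\Omega(1)}$, and a union bound over the $k+4 = O(\log n)$ rounds preserves the high-probability guarantee. A second minor technicality is that the adversary may wake some set of nodes simultaneously in time unit $0$ rather than a single node; since this can only speed up the process, it is enough to analyze the worst case $|A_0|=1$ assumed above.
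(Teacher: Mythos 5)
Your approach takes a genuinely different route from the paper. The paper's proof does not reduce to a synchronous process. Instead it introduces the notion of \emph{covered nodes} $C_t$ (nodes that are awake or have an in-transit message addressed to them), proves a per-\emph{wake-up-step} growth bound ($|C_{t_i+1}| \ge |C_{t_i}| + c\,n^{1/k}$, Lemma~\ref{lem:first_phase}) via a generalized Chernoff inequality, and then builds a \emph{cover tree} $\mathcal{T}_{t_*}$ whose non-leaf vertices have $\Theta(n^{1/k})$ children; a counting argument on the tree's depth (Lemma~\ref{lem:wakeup_tree_depth}) shows that the first $n/16$ nodes are covered by messages sent before time $k+2$, after which two further units suffice (Lemma~\ref{lem:final_step}). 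Your proof instead reduces the asynchronous process to a synchronous BFS-style flood and directly argues geometric growth of $|A_t|$ per time unit. Both exploit the same underlying geometric expansion, but the paper's covered-nodes/cover-tree machinery is purpose-built to avoid having to argue that any particular adversarial schedule is worst-case; it reasons about arbitrary schedules directly.

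The main place where your proposal is under-justified is exactly there: the reduction ``deliver every message as late as possible is worst-case'' is asserted in one clause but is the load-bearing step of your whole argument, and it is precisely the step that the paper's approach is designed to circumvent. The claim is in fact true — for the fixed random out-neighbor graph $G$ induced by the port choices, any schedule with delays in $(0,1]$ gives $T(v)\le \dist_G(\text{root},v)$, with equality under uniform unit delays, and since the adversary wants to maximize $\max_v T(v)$ the max-delay schedule is optimal — but this needs to be stated and proved as a lemma, including the observation (which you do make) that the adversary waking up more than one node, or waking nodes later, is monotone helpful. A secondary imprecision: you write ``$a=|A_t|$ awake nodes, each about to send $\gamma n^{1/k}$ messages,'' but in the algorithm a node sends its wake-up messages exactly once, upon waking. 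The quantity that sends in time unit $t$ is the set of \emph{newly} awake nodes $B_t = A_t\setminus A_{t-1}$, not all of $A_t$. Since your growth factor per step is $\Theta(n^{1/k}) \gg 2$, one has $|B_t|=(1-o(1))|A_t|$, so the constants are off but the argument survives; still, the quantity being tracked should be $|B_t|$. Finally, the first time unit requires no concentration at all (the single root opens $\gamma n^{1/k}$ distinct ports, hence wakes $\gamma n^{1/k}$ distinct nodes deterministically); starting the Chernoff argument only from the second step avoids worrying about the event $\exp(-\Omega(\gamma n^{1/k}))$ when $a=1$.

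Modulo those two items — the unproved worst-case-schedule reduction and the $A_t$ versus $B_t$ bookkeeping — the geometric-growth and saturation calculations are sound, the use of negative association for balls-into-bins concentration is the right tool, and the hypothesis $k=O(\log n/\log\log n)$ (i.e.\ $n^{1/k}=\Omega(\log n)$) is invoked in the right place.
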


In the remainder of this section, we prove Lemma~\ref{lem:wakeup}.
We call a point in time $t_i$ a \emph{wake-up step of node $u_i$}, if $u$ is awoken by the adversary at time $t_i$, or, because $u$ received a message from another node at time $t_i$.
Without loss of generality, we assume that exactly one node wakes up in each wake-up step and that the adversary wakes up only one node initially.
It is straightforward to verify that our analysis also holds for the case where the adversary wakes up multiple nodes simultaneously, since every node behaves the same way upon wake-up.

\begin{definition}[Covered Nodes] \label{def:covered_nodes}
For any given point in time $t$, we define $C_t$ to be the set of \emph{covered nodes}, which contains every node $u$ that is either awake at time $t$ or, if $u$ is still asleep, then $u$ must be %
the destination of some in-transit message that was sent at some time $t' < t$.
We use $\bar{C}_{t_i}$ to denote the complement of $C_{t_i}$, which contains every node that is still asleep and does not have any in-transit message addressed to it at time $t_i$.
\end{definition}

\begin{lemma} \label{lem:first_phase}
  Suppose that the $i$-th wake-up step occurs at time $t_i$.
  There exists a constant $c\ge 1$ such that, with high probability, for all $i\ge 1$, if $|C_{t_i}| \le \frac{n}{16}$, then,
  \begin{align}
    |C_{t_i+1}| \ge |C_{t_i}| + c\, n^{1/k}. \label{eq:sigma_growth}
  \end{align}
\end{lemma}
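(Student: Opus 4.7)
The plan is to analyze a single wake-up step in isolation, show that it adds $\Omega(n^{1/k})$ fresh nodes to $C$ with high probability, and then take a union bound over all at most $n$ wake-up steps. The oblivious-adversary assumption is the key enabler: since the port mapping is fixed before any random bits are drawn, when $u_i$ samples $\gamma n^{1/k}$ ports uniformly at random without replacement, the resulting destinations are a uniformly random subset of size $\gamma n^{1/k}$ of $u_i$'s $n-1$ neighbors, independent of everything that happened before $u_i$ woke up.

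Fix a wake-up step $t_i$ with $|C_{t_i}|\le n/16$, and let $u_i$ be the node that wakes up. Then $|\bar C_{t_i}|\ge 15n/16 - 1$, so each of $u_i$'s $\gamma n^{1/k}$ random destinations lands in $\bar C_{t_i}$ with marginal probability at least (say) $14/15$. Let $N_i$ count how many of these destinations are in $\bar C_{t_i}$; the variable $N_i$ is hypergeometric with mean $\Omega(n^{1/k})$. A standard Chernoff bound for the hypergeometric distribution then gives
\[
\Pr\!\left[ N_i \ge c\, n^{1/k} \right]\ \ge\ 1 - e^{-\Omega(n^{1/k})},
\]
for some absolute constant $c\ge 1$ (by choosing $\gamma$ large enough). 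Once $u_i$ transmits, each of these $N_i$ previously uncovered destinations has an in-transit message addressed to it, so by Definition~\ref{def:covered_nodes} it becomes covered immediately and remains covered thereafter. Since previously covered nodes stay covered, this yields $|C_{t_i+1}|\ge |C_{t_i}| + N_i \ge |C_{t_i}| + c\,n^{1/k}$.

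Finally, I would union-bound over all wake-up steps. The hypothesis $k\le O(\log n/\log\log n)$ gives $n^{1/k}\ge \Omega(\log n)$, so the per-step failure probability is at most $n^{-\Omega(1)}$; summing over the at most $n$ wake-up steps preserves high probability. The main subtlety I expect is handling the adaptive scheduling: the adversary may inspect past random choices before deciding which node to wake up next, which might seem to break the independence needed for Chernoff. The fix is to observe that, conditional on the identity of $u_i$ and on the entire history up to $t_i$, the $\gamma n^{1/k}$ ports sampled by $u_i$ are \emph{fresh} independent uniform samples from $u_i$'s port set, so the hypergeometric estimate holds in every conditional universe and one can take the union bound inside the conditioning. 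A minor bookkeeping remark is that subtracting $u_i$ itself (and possibly a few already-sampled duplicates) from the count only weakens the bound by a constant, which is absorbed into $c$.
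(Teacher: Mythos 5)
Your proof is correct and follows essentially the same route as the paper's: fix the wake-up step, count how many of $u_i$'s $\gamma n^{1/k}$ distinct random ports land in the uncovered set $\bar C_{t_i}$ (a sampling-without-replacement count with mean $\Omega(n^{1/k})$), apply a concentration bound that respects the dependence induced by sampling without replacement, and union-bound over the $O(n)$ wake-up steps using $n^{1/k} = \Omega(\log n)$. The only cosmetic difference is in how the concentration step is packaged: you invoke a Chernoff bound for the hypergeometric, whereas the paper shows the indicators $Z_i$ satisfy $\Pr\bigl[\bigwedge_{i\in S} Z_i\bigr] \le \delta^{|S|}$ and applies the generalized Chernoff bound of Impagliazzo--Kabanets; these are interchangeable here. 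One small slip: since ports are sampled without replacement there are no ``already-sampled duplicates'' to subtract, and $u_i$ has no port to itself, so that bookkeeping remark is unnecessary (though harmless). Your paragraph on adaptive scheduling is also in line with the paper's implicit reasoning: conditioning on the history through $t_i$ fixes $C_{t_i}$, and $u_i$'s port samples remain fresh uniform draws under that conditioning because they depend only on $u_i$'s private randomness, which the history has not yet consumed.
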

\begin{proof}
  Let $u_i$ be the node that woke up at time $t_i$, which means that $u_i$ sends $\gamma n^{1/k}$ messages over uniformly at random chosen ports, where $\gamma$ is a sufficiently large constant.
Fix some enumeration $m_1,\dots,m_{\gamma n^{1/k}}$ of $u_i$'s messages, and, for each $m_i$, define the indicator random variable $Z_i$ such that $Z_i = 1$ if and only if $m_i$ is sent to ${C}_{{t_i}}$. 
Since $u_i$ sends $m_i$ over a port that is distinct from the ports used for $m_1,\dots,m_{i-1}$ and there are $n-1$ ports in total, it follows that 
\begin{align}
 \Pr\big[ \bigwedge_{i = 1}^\ell Z_i = 1\  \big]
  &=
   \prod_{i = 1}^\ell
    \Pr\big[ Z_i = 1\ \big|\  \bigwedge_{j=1}^{i-1} Z_j = 1 \big] \notag\\ 
  &\le
  \prod_{i = 1}^\ell
   \frac{|{C}_{t_i}| - (i-1)}{n-1 -(i-1)} \notag\\ 
	\ann{since $i-1 \le \gamma n^{1/k}$}
  &\le
  \prod_{i = 1}^\ell
   \frac{|{C}_{t_i}|}{n-1 - \gamma n^{1/k}} \notag\\ 
	\ann{since $|{C}_{t_i}| \le n/16$ and $\gamma n^{1/k}+1 \le n/16$}
  &\le
   \lt( \frac{1}{15}  \rt)^{\ell}\label{eq:prob_bnd}
\end{align}
We make use of the following generalized Chernoff bound:
  \begin{lemma}[implicit in Theorem~1.1 in \cite{impagliazzo2010constructive}] \label{lem:chernoff_generalized}
    Let $X_1,\dots,X_N$ be (not necessarily independent) Boolean random variables and suppose that, for some $\delta \in [0,1]$, it holds that, for every index set $S \subseteq [N]$, $\Pr\lt[\bigwedge_{i \in S} X_i\rt] \le \delta^{|S|}$.
    Then, for any $\beta \in [\delta,1]$, we have $\Pr\lt[\sum_{i=1}^N X_i \ge \beta N\rt] \le e^{- 2N(\beta - \delta)^2}$.
  \end{lemma}
According to \eqref{eq:prob_bnd}, we can instantiate Lemma~\ref{lem:chernoff_generalized} for the random variables $Z_1,\dots,Z_{\gamma n^{1/k}}$ by setting $\delta = \frac{1}{15}$
  and $\beta = \tfrac{1}{2}$, which ensures that $\beta - \delta \ge \tfrac{1}{4}$.
  It follows that
  \begin{align*}
    \Pr\Big[ \sum_{i=1}^{\gamma n^{1/k}} Z_i \ge \frac{\gamma}{2} n^{1/k}\Big]
    \le
      e^{-2\gamma(\beta - \delta)^2 n^{1/k}}
    \le
      e^{-\frac{\gamma n^{1/k}}{8}}.
  \end{align*}
  Since $n^{1/k} = \Omega(\log n)$,  it holds with high probability 
  that at least ${\tfrac{\gamma}{2} n^{1/k}}$ nodes in $\bar{C}_{t_i}$ will receive a message from $u_i$ and hence are added to $C_{t_i+1}$.
  Finally, we take a union bound over the at most $O(n)$ wake-up steps for which the premise $|{C}_{t_i}| \le n/16$ holds, which shows that \eqref{eq:sigma_growth} holds for all of them, thereby completing the proof of Lemma~\ref{lem:first_phase}.
\end{proof}

So far, we have shown that the number of nodes that eventually wake up grows by an additive $\Theta(n^{1/k})$ factor at each wake-up step.
We now leverage this result to show that the growth is indeed geometric as expected.
To this end, consider the time $t_*$ of the latest wake-up step such that $|C_{t_*}| \le \frac{n}{16}$ holds.
We define the \emph{cover tree} $\mathcal{T}_{t_*}$ on the nodes in $C_{t_*}$, which has as root the node that was awoken by the adversary and a node $u$ is the parent of node $v$ in $\mathcal{T}_{t_*}$ if and only if $v$ is woken up by a message sent by $u$ at some time $t'<t_*$. 
\begin{lemma} \label{lem:tree}
  Each node in the network appears at most once in $\mathcal{T}_{t_*}$, and $V(\mathcal{T}_{t_*})=C_{t_*}$. The edges of $\mathcal{T}_{t_*}$ form a directed tree where each non-leaf node has at least $c_1 n^{1/k}$ and at most $\gamma n^{1/k}$ children.
Moreover, any node that has not yet awoken by time $t_*$, but to which a message was sent before time $t_*$, is a leaf in $\mathcal{T}_{t_*}$.
\end{lemma}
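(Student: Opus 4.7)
\textbf{Proof plan for Lemma~\ref{lem:tree}.}
The plan is to first verify that the object $\mathcal{T}_{t_*}$ is well-defined as a rooted directed tree on the vertex set $C_{t_*}$, then bound the out-degrees, and finally observe that still-sleeping covered nodes must be leaves by construction. For the well-definedness step, I would observe that under the simplifying assumption that exactly one node takes a wake-up step at a time, every node $v \in C_{t_*} \setminus \{r\}$ (where $r$ is the root, i.e., the initially awoken node) enters $C$ at a unique time, namely the moment the first message addressed to $v$ was sent; let $u$ be the sender of that message and declare $u$ the parent of $v$. Since a node is added to $C$ only once and its parent is determined at that instant, each node appears at most once, and the ``covered-time'' of a parent strictly precedes that of its child (the parent must have been awake to send the message and therefore already in $C$), which rules out cycles. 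Combined with the existence and uniqueness of a parent for every non-root node of $C_{t_*}$, this yields a rooted directed tree with vertex set $C_{t_*}$.

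Next, I would bound the number of children of a given node $u$. The upper bound $\gamma n^{1/k}$ is immediate: $u$ can only become someone's parent by sending a message, and upon wake-up it sends exactly $\gamma n^{1/k}$ messages (and it sends none at any later point in the wake-up phase). For the lower bound, notice that a non-leaf must have been awoken before time $t_*$; say it wakes up at step $t_i \le t_*$. By the choice of $t_*$ we have $|C_{t_i}| \le n/16$, so Lemma~\ref{lem:first_phase} gives (with high probability, via the union bound already taken there)
\[
|C_{t_i+1}| - |C_{t_i}| \ge c\, n^{1/k}.
\]
The key observation is that the nodes accounting for this increment are precisely those in $\bar{C}_{t_i}$ that received one of $u$'s freshly-sent messages: each such node had no in-transit message addressed to it before $u$'s step, so $u$ is by construction its first sender and hence its parent in $\mathcal{T}_{t_*}$. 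Therefore $u$ has at least $c_1 n^{1/k}$ children (with $c_1 = c$).

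Finally, the ``moreover'' clause is essentially by definition: if a node $w \in C_{t_*}$ has not yet awoken by time $t_*$, then it has never sent any message, and therefore can be no one's parent in $\mathcal{T}_{t_*}$, i.e., $w$ is a leaf. Conversely, together with the lower-bound argument above this shows that the leaves of $\mathcal{T}_{t_*}$ are exactly the still-sleeping covered nodes.

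I do not expect a genuine obstacle here: the content is really a bookkeeping reformulation of Lemma~\ref{lem:first_phase}. The only subtlety is being careful about \emph{when} the parent assignment is made (at the moment the child enters $C$, not at the moment the child actually wakes up), since otherwise one might wrongly try to assign a sleeping covered node a parent from a later wake-up step; fixing this convention up front, and handling the minor edge case where the adversary wakes multiple nodes at time $0$ by giving each such node itself as its own root (creating a forest that is trivially converted to a tree by attaching a virtual root, or equivalently treating Lemma~\ref{lem:wakeup} per component), cleanly resolves it.
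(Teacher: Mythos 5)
Your proof is correct and follows essentially the same path as the paper's: acyclicity from linearity of time, the upper degree bound from the per-node message count, the lower degree bound from Lemma~\ref{lem:first_phase}, and the leaf characterization from the observation that sleeping nodes send nothing. The one genuine departure is the parent relation. The paper declares $u$ the parent of $v$ when $u$'s message is the one that actually \emph{wakes} $v$, whereas you declare $u$ the parent when $u$'s message is the first one \emph{sent to} $v$, i.e., when $u$ covers $v$. Your convention is the cleaner one for the lower-bound step: under the paper's convention, a node that $u$ covers at its wake-up step could be woken instead by a later-sent message from another node that arrives first, so the $c\,n^{1/k}$ newly covered nodes are not automatically children of $u$ in the paper's tree; the paper's one-line justification (``apply Lemma~\ref{lem:first_phase} to each non-leaf'') glosses over this, while your ``covered-time'' convention makes it immediate by construction. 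Every property the lemma asserts, and every downstream use of it in Lemmas~\ref{lem:wakeup_tree_depth} and \ref{lem:final_step}, holds under either convention, so the substitution is harmless and arguably a small improvement.
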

\begin{proof}
  By definition, each node in $\mathcal{T}_{t_*}$ (except the root) is awoken due to a message sent at some time before $t_*$, and linearity of time ensures that the tree is acyclic.
	Thus, every node in the network shows up at most once in $\mathcal{T}_{t_*}$.
  Moreover, if a node $v$ is a leaf, then it must be woken up by a message that was sent but not yet received before time $t_*$, as otherwise $v$ itself would have added (at least) $c_1 n^{1/k}$ children to $\mathcal{T}_{t_*}$ according to Lemma~\ref{lem:first_phase}.
From Definition~\ref{def:covered_nodes} is straightforward to verify that a node is in $\mathcal{T}_{t_*}$ if and only if it is in $C_{t_*}$.
  The upper bound on the degree is immediate from the fact that a node sends $\gamma n^{1/k}$ messages upon wake-up, whereas the lower bound follows by applying Lemma~\ref{lem:first_phase} to each non-leaf node in $\mathcal{T}$.
\end{proof}

\begin{lemma} \label{lem:wakeup_tree_depth}
  With high probability, every node in the tree $\mathcal{T}_{t_*}$ is woken up by a messages that was sent before time $k+2$.
\end{lemma}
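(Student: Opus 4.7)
The plan is to prove the stronger statement that $t_* < k$, which immediately implies the lemma because, by the definition of $\mathcal{T}_{t_*}$, every edge corresponds to a wake-up message sent at some time $t' < t_*$, and hence $t' < k < k+2$.

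The key idea is to show that the set $A_t$ of awake nodes at time $t$ grows geometrically in $t$. First, since every message sent at time $t'$ is delivered by time $t' + 1$, every in-transit recipient at time $t$ becomes awake by time $t+1$, giving $A_{t+1} \supseteq C_t$ and hence $|A_{t+1}| \ge |C_t|$. Second, applying Lemma~\ref{lem:first_phase} once per wake-up step (using the convention that exactly one node wakes up per step, so that $|A_t|$ wake-up steps have occurred by time $t$), one obtains $|C_t| \ge c\,n^{1/k}\cdot |A_t|$ provided the premise $|C|\le n/16$ remains valid throughout the interval $[0,t]$. Combining these yields the recursion $|A_{t+1}| \ge c\,n^{1/k}\cdot |A_t|$, which iterated from $|A_0| = 1$ gives $|A_t| \ge (c\,n^{1/k})^t$.

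To conclude, suppose for contradiction that $|C_t| \le n/16$ for every $t \in [0,k]$. Then the above recursion holds throughout, yielding $|A_k| \ge (c\,n^{1/k})^k \ge n$ (using $c \ge 1$ from Lemma~\ref{lem:first_phase}), which contradicts $|A_k| \le |C_k| \le n/16$. Thus there must exist some $t \le k$ with $|C_t| > n/16$, and since $|C|$ is monotone non-decreasing in time, this forces $t_* < k$ by the definition of $t_*$ as the time of the latest wake-up step with $|C_{t_*}| \le n/16$, as desired.

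The main obstacle is reconciling the wake-up-step-indexed statement of Lemma~\ref{lem:first_phase} with the time-indexed recursion used here: one must ensure that the high-probability event of Lemma~\ref{lem:first_phase} (which is itself a union bound over the wake-up steps satisfying $|C|\le n/16$) covers all the steps needed throughout $[0,k]$, and that the transition around $t_*$, where the cover bound is crossed, is handled cleanly by the monotonicity of $|C|$ together with the fact that the step immediately after $t_*$ already yields $|C| > n/16$.
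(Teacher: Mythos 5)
Your proof is correct in outcome and takes a genuinely different route from the paper's. The paper argues indirectly by considering a maximally-delayed root-to-leaf path in $\mathcal{T}_{t_*}$: assuming some leaf is woken by a message sent at time $\ge k+2$, it shows every root-to-leaf path must have $\ge k+1$ internal nodes, each of out-degree $\ge c_1 n^{1/k}$ by Lemma~\ref{lem:tree}, forcing $|\mathcal{T}_{t_*}| > n$. You instead track the awake set $A_t$ directly and establish geometric growth $|A_{t+1}| \ge c\,n^{1/k}\,|A_t|$, which forces $|A_k| \ge n$ and hence $t_* < k$ — a slightly sharper conclusion and a cleaner derivation that avoids the paper's maximality and second-path bookkeeping. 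Both proofs rest on the same union-bounded high-probability event from Lemma~\ref{lem:first_phase}.

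One technical wrinkle: the inequality $|C_t| \ge c\,n^{1/k}\,|A_t|$ fails at times that coincide with a wake-up step, notably at $t=0$ where $|C_0|=1 < c\,n^{1/k} = c\,n^{1/k}\,|A_0|$; Definition~\ref{def:covered_nodes} counts only messages sent at times \emph{strictly before} $t$, so the messages of a node that wakes exactly at time $t$ do not yet contribute to $|C_t|$. The clean fix is to bypass $|C_t|$ and obtain $|A_{t+1}| \ge c\,n^{1/k}\,|A_t|$ directly: each $u \in A_t$ woke at some $t_u \le t$ and, by Lemma~\ref{lem:first_phase}, sent at least $c\,n^{1/k}$ messages to nodes uncovered at time $t_u$; these recipient sets are pairwise disjoint across the $u$'s since coverage is monotone, and every such message is delivered by $t_u+1 \le t+1$, so all recipients lie in $A_{t+1}$. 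With that patch, and the union bound over wake-up steps that you already flag, the argument is sound.
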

\begin{proof}
  Let $u_1$ be the root of the tree. %
  Assume towards a contradiction that there exists a path $p = u_1,u_2,\dots,u_\ell$ in $\mathcal{T}_{t_*}$ such that the message that wakes up the leaf $u_\ell$ is sent at time $t_\ell \ge k+2$,
  and suppose that path $p$ is chosen such that $t_\ell$ is the latest time for all paths satisfying the property. 
  Since each message takes at most $1$ unit of time, it follows that $\ell \ge k+3$.
  (Note that we assume that the root $u_1$ is awoken at time $0$.)
  Now consider any path $q$ starting from the root $u_1$ and continuing over a distinct child $u_2' \ne u_2$.
  Let $u_m'$ be the last node on path $q$, i.e., $u_m'$ is a leaf.

  We first argue  that the message that wakes up $u_m'$ must be sent at some time \[t' \ge t_\ell - 1 \ge k+1.\]
  Recall that $\mathcal{T}_{t_*}$ contains all nodes that are either awake or to which a message is in-transit at any time before $t_*$, which, in particular, includes time $t_\ell$, and assume towards a contradiction that $t' < t_\ell - 1$.
  By assumption, $u_m'$ must wake up by time 
  \[t'+1 < t_\ell \le t_*.\]
  Since $t_\ell \le t_*$, it follows that $|C_{t_\ell}| \le |C_{t_*}| \le \tfrac{n}{16}$, and hence Lemma~\ref{lem:first_phase} ensures that $u_m'$ sends messages to previously asleep nodes by time $t' +1 < t_*$, contradicting the assumption that $u_m'$ is a leaf.

  Thus, we can assume that $t' \ge t_\ell - 1 \ge k+1$, which implies that $q$ contains at least $k+2$ nodes, and $k+1$ of them must be non-leaf nodes.
  Therefore, any path from the root consists of at least $k+1$ non-leaf nodes that have a minimum degree of $c_1 n^{1/k}$, according to Lemma~\ref{lem:tree}.
  Overall, this means that there are at least $c_1^{k+1}\,n^{1+1/k} > n$ nodes in $\mathcal{T}_{t_*}$, since $c_1 \ge 1$, thus providing a contradiction.
\end{proof}

To complete the proof of Lemma~\ref{lem:wakeup}, we show that, once we have reached a ``critical mass'' of $\frac{n}{16}$ nodes that are either awake (or about to wake up), the remaining nodes wake up within $2$ additional time steps, resulting in a time complexity of $t_* + 2 \le k+4$, as claimed.

\begin{lemma} \label{lem:final_step}
  With high probability, all nodes wake up by time $t_* + 2$.
\end{lemma}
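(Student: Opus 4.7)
The plan is to show that once the cover set $C_{t_*}$ contains $\Omega(n)$ nodes, all of them will be awake by time $t_*+1$, and the random wake-up messages they send will reach every remaining asleep node by time $t_*+2$ with high probability.

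First I would establish a \emph{lower} bound $|C_{t_*}| = \Omega(n)$. By the maximality of $t_*$, the very next wake-up step $t_{*+1}$ produces $|C_{t_{*+1}}| > n/16$; since a single wake-up step adds at most $\gamma n^{1/k}$ destinations to the cover set (the ports chosen by the newly-awoken node), this forces $|C_{t_*}| > n/16 - \gamma n^{1/k} \ge n/32$ for sufficiently large $n$. Because every in-transit message is delivered within one unit of time, every node in $C_{t_*}$ is awake by time $t_*+1$, and each such node has already sent its $\gamma n^{1/k}$ wake-up messages by that moment; these messages are all delivered no later than time $t_*+2$.

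Then I would fix any node $u$ that is still asleep and bound the probability that no message reaches it. Because the port mapping is oblivious and each awake node independently samples $\gamma n^{1/k}$ ports uniformly at random without replacement, each of the $N \ge n/32$ awake nodes hits $u$ with probability $\gamma n^{1/k}/(n-1)$, and these events are mutually independent across different awake nodes. Hence
\[
\Pr\lt[\text{$u$ remains asleep at time }t_*+2\rt] \le \lt(1 - \tfrac{\gamma n^{1/k}}{n-1}\rt)^{N} \le \exp\lt(-\tfrac{\gamma}{32}\,n^{1/k}\rt).
\]
The assumption $k \le O(\log n/\log\log n)$ gives $n^{1/k} = \Omega(\log n)$, so with $\gamma$ sufficiently large this probability is at most $1/n^2$, and a union bound over the at most $n$ asleep nodes completes the argument.

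The main obstacle is the very first step: the definition of $t_*$ only supplies the upper bound $|C_{t_*}| \le n/16$, and one must rule out the possibility that $|C_{t_*}|$ is much smaller; this is resolved by the per-step additive cap $\gamma n^{1/k}$ on the growth of the cover set, which pins $|C_{t_*}|$ close to $n/16$ from below. A secondary technicality is the independence of the port-sampling events across awake nodes, but this is immediate because the oblivious adversary must commit to the port mapping before any random coin is consumed, and each node samples its own ports independently.
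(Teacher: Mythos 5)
Your overall strategy matches the paper's — establish $|C_{t_*}| = \Omega(n)$, argue that every still-asleep node is hit with high probability by the wake-up messages sent by the nodes of $C_{t_*}$ in the interval $[t_*, t_*+1]$, and union-bound — and your derivation of $|C_{t_*}| \ge n/16 - \gamma n^{1/k} \ge n/32$ via the per-step additive cap is in fact cleaner than the paper's terse ``$\ge n/16 - 1$''. However, there is a genuine gap in the probability bound.

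You write that ``each of the $N \ge n/32$ awake nodes hits $u$ with probability $\gamma n^{1/k}/(n-1)$, and these events are mutually independent,'' and conclude $\Pr[u \text{ asleep at } t_*+2] \le (1-\gamma n^{1/k}/(n-1))^N$. Independence of the port samples across a \emph{fixed} set of nodes is indeed immediate, but $C_{t_*}$ is a \emph{random} set that is determined by exactly those port samples, and the model lets the adversary inspect all random strings before choosing the schedule (and hence $t_*$). Concretely, the internal nodes of the cover tree $\mathcal{T}_{t_*}$ have already opened their ports before $t_*$; for a node $u$ that is still asleep at $t_*$, those samples are already conditioned to have missed $u$, so they contribute no ``fresh'' randomness. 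Treating all $N$ nodes as yielding independent fresh trials is therefore not valid as stated. The paper sidesteps this: it invokes Lemma~\ref{lem:tree} to show that the cover tree has at least $|\mathcal{T}_{t_*}| - |\mathcal{T}_{t_*}|/(c_1 n^{1/k}) \ge n/64$ \emph{leaves}, i.e., nodes that are covered but have not yet woken up at $t_*$ and hence have not yet sampled any ports. Only their $\Theta(n^{1+1/k})$ messages in total — sent between $t_*$ and $t_*+1$ — are counted, and these provide genuinely fresh, independent randomness against which one can apply the $(1-1/n)^{|M|}$ bound and then union-bound over nodes. Your bound is plausibly still true via a careful stopping-time / principle-of-deferred-decisions argument, but your proof as written relies on an independence claim that does not hold for the random set $C_{t_*}$ in the presence of an adaptive scheduler. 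The missing ingredient is precisely the leaf-counting step based on the minimum out-degree of the cover tree's internal nodes.
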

\begin{proof}
  Recall the cover tree $\mathcal{T}_{t_*}$ for time $t_*$ defined above, which was the latest wake-up step at which $|C_{t_*}| \le \frac{n}{16}$.
	Since
    $|C_{t_*}| \ge \frac{n}{16} - 1 \ge \frac{n}{32}$,
  it follows that $\mathcal{T}_{t_*}$ contains at least $|\mathcal{T}_{t_*}|-\frac{|\mathcal{T}_{t_*}|}{c_1n^{1/k}}\ge\frac{n}{32} - \frac{n^{1 - 1/k}}{c_1} \ge \frac{n}{64}$ leafs.
  According to Lemma~\ref{lem:tree}, a leaf is a node that has not yet woken up by time $t_*$, which means that there are at least $\frac{n}{64}$ nodes that will wake up  due to messages sent before time $t_*$, all of which must arrive at some point in the time interval $[t_*,t_* + 1]$.

  Let $M$ be the set of the messages sent by these nodes during their wake-up steps and note that $|M| \ge \frac{\gamma n^{1+1/k}}{64}$.
  The probability that a node $u \in \bar{C}_{t_*}$ is \emph{not} the destination of at least one of the messages in $M$ is at most
  \[
    \lt( 1 - \frac{1}{n} \rt)^{{\gamma n^{1+1/k}}/{64}}
    \le
    e^{\lt( -\frac{\gamma n^{1+1/k}}{64}\rt)}
    \le
    n^{-4},
  \]
  where the last inequality follows from the assumption that $n^{1/k} = \Omega(\log n)$ and that $\gamma$ is a sufficiently large constant.
  Observing that all messages in $M$ must arrive by time $t_*+2$, and taking a union bound over the at most $\frac{31}{32}n$ nodes that are not in $C_{t_*}$, ensures that every node will be awoken by time $t_* + 2$ with high probability.
\end{proof}

\subsection{Analysis of the Election Phase} \label{sec:election}

\begin{lemma} \label{lem:one_leader}
  Suppose that every node wakes up eventually.
  Then there is exactly one leader with high probability.
\end{lemma}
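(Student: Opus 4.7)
The plan is to reduce the statement to a small collection of ``clean'' structural events that hold with high probability, and then to argue both existence and uniqueness of the leader deterministically on the good event. The three structural events I will aim for are: (i) the candidate set $C$ is nonempty, with $|C|=\Theta(\log n)$; (ii) all candidates sample pairwise distinct ranks from $[n^4]$; and (iii) every pair of candidates chooses at least one common referee.

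Each of (i)--(iii) is routine. The candidate count is a sum of independent $\mathrm{Bernoulli}(4\log n / n)$ variables with expectation $4\log n$, and a Chernoff bound gives $|C|=\Theta(\log n)$ whp. For (ii), a union bound over the at most $O(\log^2 n)$ candidate pairs against the rank space $[n^4]$ makes all ranks distinct whp. For (iii), each candidate samples $m=\lceil 4\sqrt{n\log n}\rceil$ referees uniformly without replacement, so a fixed pair of candidates has no common referee with probability at most $(1-m/n)^m \le e^{-m^2/n} = n^{-\Omega(1)}$; another union bound over $O(\log^2 n)$ candidate pairs closes the argument.

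For \emph{uniqueness}, I would assume two candidates $u,v$ both become leaders with $\rho_u > \rho_v$ and derive a contradiction at their shared referee $w$ (guaranteed by (iii)). The key invariant to observe is that $w$'s stored winner-rank $\winner$ is monotonically non-decreasing in time, because any $\texttt{compete}$ with rank $\le \winner$ is rejected outright with a $\texttt{you lose!}$. Hence the $\texttt{you win!}$ sent by $w$ to $v$ must strictly precede the $\texttt{you win!}$ sent by $w$ to $u$. At the moment $w$ processes $u$'s compete, $w$ consults $v$: if $v$ has already decided to be a leader, then $u$ receives $\texttt{you lose!}$; if not, then $v$ drops out and decides non-leader. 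Either case contradicts the assumption that both $u$ and $v$ end as leaders.

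For \emph{existence}, I would fix the unique (by (ii)) candidate $c^*$ of maximum rank and track its $\texttt{compete}$ messages. Reliable FIFO delivery means every referee of $c^*$ eventually processes that compete, and since $\rho_{c^*}$ strictly exceeds every other rank, the ``$\rho_{c^*} \le \winner$'' branch at a referee can never fire. The only way a referee $w$ of $c^*$ fails to reply $\texttt{you win!}$ is if the candidate currently stored at $w$ has \emph{already} decided to be leader when consulted --- but that alternative already exhibits a leader. Combined with uniqueness, exactly one leader is elected whp. The main subtlety I expect to navigate is extracting the monotonicity of $\winner$ and the irrevocability of the leader / non-leader decisions cleanly from the pseudo-code, since both comparison arguments above lean entirely on those invariants.
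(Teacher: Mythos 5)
Your proposal is correct and follows the same high-level strategy as the paper's proof for uniqueness: condition on a nonempty candidate set with distinct ranks, argue any two candidates share a referee whp, and derive a contradiction at the shared referee. You are more explicit than the paper in two useful ways. First, you name the invariant that actually drives the contradiction --- that each node's $\winner$ is monotonically non-decreasing --- which the paper uses only implicitly while ordering the two \texttt{you win!}\ replies temporally; with either ordering one should also remark that if $\winner$ has already moved strictly past $\rho_v$ by the time $u$'s \texttt{compete} arrives at $w$, then $v$ was demoted to non-leader at that intermediate update, so the contradiction still goes through. Second, and more substantively, the paper's proof as written establishes only that a candidate exists and that there is \emph{at most} one leader; it never argues that some candidate actually decides leader. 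Your max-rank-candidate argument for existence (no referee can reject $c^*$'s \texttt{compete} on rank grounds, and a referee of $c^*$ can withhold \texttt{you win!}\ only if its stored winner has already decided leader, which itself exhibits a leader; moreover $c^*$ can never be demoted by a consultation since no strictly higher-rank \texttt{compete} exists) supplies exactly this missing piece and is worth including, since the paper's proof by itself does not fully justify the ``exactly one'' in the statement.
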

\begin{proof}
  Since each node becomes candidate with probability $4\log n / n$, it follows that the probability there is a candidate is at least $1 - \lt(1 - \frac{4\log n}{n}\rt)^n \ge 1 - e^{-4\log n} = 1 - \frac{1}{n^4}$.
  By a simple calculation that is analogous to the one in the proof of Theorem~\ref{thm:adversarial_two_round_algo} in Section~\ref{sec:adversarial}, it follows that every candidate will have a unique ID with high probability.
  For the remainder of the proof, we condition on these high-probability events.

  Now, assume towards a contradiction that there exist two candidates $u$ and $v$ that both become leaders.
  Let $R_u$ and $R_v$ be the set of referees chosen by $u$ respectively $v$.
  Since it holds that
  \begin{align*}
    \Pr[ R_u \cap R_v = \emptyset ]
    \le \lt(1 - \frac{4\sqrt{\log n}}{\sqrt{n}}\rt)^{4\sqrt{n\log n}}
    \le \frac{1}{n^{2}},
  \end{align*}
  we know that $u$ and $v$ will share some common referee $w$.
  By the code of the algorithm, we know that $w$ sends a winner-message at most once to each node that contacts it.
  By the code of the algorithm, any undecided node becomes non-leader upon receiving a $\msg{\texttt{you lose!}}$ message, which tells us that
  $w$ sent a message $\msg{\texttt{you win!}}$ to both $u$ and $v$.
  Without loss of generality, assume that $w$ sent the $\msg{\texttt{you win!}}$ first to $u$ and later on to $v$.
  It follows that $w$ must have processed the message $\msg{\rho_u}$ before $\msg{\rho_v}$.
  However, by the code of the algorithm, $w$ stores $u$'s rank in its winner-rank variable upon sending $\msg{\texttt{you win!}}$ to $u$ at some time $t_1$.
  Moreover, since $w$ sent a $\msg{\texttt{you win!}}$ also to $v$ at some later time $t_2>t_1$, it follows that $u$ cannot have elected itself as a leader before $t_2$, as otherwise $w$ would have sent $\msg{\texttt{you lose!}}$ to $v$ upon receiving $u$'s reply.
  This implies that $u$ must have already decided to become non-leader upon being contacted by $v$.
  We have arrived at a contradiction.
\end{proof}

\begin{lemma} \label{lem:election_complexity}
  Each node decides in at most $4$ units of time after its wake-up step.
\end{lemma}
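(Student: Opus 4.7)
The plan is to analyze $u$'s decision time by tracing the message flow triggered by its wake-up. Let $t_u$ denote the time at which $u$ wakes up, and case-split on whether $u$ becomes a candidate.

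If $u$ is not a candidate, $u$ has failed its coin flip and by the code of the algorithm cannot become leader; hence $u$ immediately decides to be a non-leader at time $t_u$, which is trivially within the claimed bound.

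Now suppose $u$ is a candidate. At time $t_u$, $u$ samples its $\Theta(\sqrt{n\log n})$ referees and sends each of them a $\msg{\rho_u,\texttt{compete}}$ message, which arrives by time $t_u+1$. A referee $v$ processes the message in one of two ways. In the easy case, $v$ replies immediately with $\msg{\texttt{you win!}}$ (if its winner-rank $\winner$ was empty) or with $\msg{\texttt{you lose!}}$ (if $\rho_u$ does not exceed $\winner$), so the reply reaches $u$ by time $t_u+2$. In the harder case, $\rho_u$ exceeds the current winner-rank, and $v$ must consult the node $w$ whose rank is stored in $\winner$; the query arrives at $w$ by time $t_u+2$, $w$ answers immediately based on whether it has already become leader, the answer returns to $v$ by time $t_u+3$, and $v$'s final response reaches $u$ by time $t_u+4$.

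Thus every referee response arrives at $u$ by time $t_u+4$. By the algorithm's decision rule, $u$ becomes non-leader as soon as it receives any $\msg{\texttt{you lose!}}$, and becomes leader once all its referees have replied $\msg{\texttt{you win!}}$; in either subcase $u$ has decided by time $t_u+4$. The main subtlety I would emphasize is that the consultation step never extends into a longer chain: when $v$ queries $w$, the algorithm specifies that $w$ answers immediately based solely on its current leader/non-leader state, independently of any of $w$'s own pending referee responses. This keeps the $v$--$w$ round trip at exactly $2$ time units and is what makes the $4$-unit bound tight.
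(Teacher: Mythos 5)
Your proof is correct and takes essentially the same approach as the paper, just spelling out the message timings in more detail: the non-candidate case is immediate, and for a candidate the worst case is one extra round trip ($v$ consulting $w$) nested inside the $u$--$v$ round trip, giving $4$ time units total.
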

\begin{proof}
  Suppose node $u$ wakes up at time $t$.
  If $u$ does not become a candidate, it decides instantly; thus, assume that $u$ does become a candidate and contacts a set of referees.
  Upon receiving $u$'s message, a referee may need to confirm the status of the current winner, which takes one additional round-trip before it sends its response to $u$.
  Thus, it takes at most $4$ units of time until $u$ receives a response from each of its referees and decides.
  Note that $u$ may also serve as referee for other nodes, however, responding to these messages does not slow down its own decision process.
\end{proof}

\subsection{Putting Everything Together}

We have now assembled all the ingredients for completing the proof of Theorem~\ref{thm:async}.
We first argue correctness:
With high probability, the algorithm will elect exactly one node as the leader since Lemma~\ref{lem:wakeup} shows that every node wakes up and, in that case, Lemma~\ref{lem:one_leader} guarantees that exactly one of them will become leader.

For the claimed time complexity of $k+8$, we observe that Lemma~\ref{lem:wakeup} ensures that every node wakes up within $k+4$ units of time, and Lemma~\ref{lem:election_complexity} tells us that everyone decides to become leader or non-leader within $4$ additional time units.

Finally, note that each node sends $\Theta(n^{1+1/k})$ messages upon wake-up.
We have $\Theta(\log n)$ candidates (whp), and each of them communicates with $\Theta(\sqrt{n\log n})$ referees.
Upon being contacted by a candidate, a referee may need to communicate with its currently stored winner node, which requires an additional $2$ messages.
Thus, we have an overall message complexity of $O\lt( n^{1+1/k} \rt) + O\lt(\sqrt{n}\log^{3/2}n\rt) = O\lt( n^{1+1/k} \rt)$ with high probability.

\subsection{Asynchronizing the Trade-off of Afek and Gafni under Simultaneous Wake-up} \label{sec:async_afek}

\onlyLong{
The following algorithm can be viewed as an
asynchronous version of the synchronous algorithm of
 Afek and Gafni~\cite{afek1991time}. However, Afek and Gafni
argue that their algorithm would not work in asynchronous networks, and thus suggested three asynchronous algorithms where the $O(n \log n)$
messages were obtained at the cost of $\Omega(n)$ time. We managed to make their original algorithm asynchronous without increasing their original time, assuming that we count only from the last spontaneous (adversarial) wake up of a node. Alternatively, for the sake of this algorithm only, we can assume that all the nodes wake up simultaneously.
}%
\onlyShort{
In the full paper~\cite{DBLP:journals/corr/abs-2301-08235}, we give an algorithm that can be viewed as an
asynchronous version of the synchronous algorithm of
 Afek and Gafni~\cite{afek1991time}. However, Afek and Gafni
argue that their algorithm would not work in asynchronous networks, and thus suggested three asynchronous algorithms where the $O(n \log n)$
messages were obtained at the cost of $\Omega(n)$ time. We managed to make their original algorithm asynchronous without increasing their original time, assuming that we count only from the last spontaneous (adversarial) wake up of a node. Alternatively, for the sake of this algorithm only, we can assume that all the nodes wake up simultaneously.
}\xspace
\onlyLong{%
The algorithm works as follows: Each surviving candidate $v$ (initially everybody) is at a level in $[0,1,2,... \log n]$ (initially zero). It sends $2^i$ request messages to its first $2^i$ neighbors ($v$ is its own neighbor number 1; we also assume that the ports are numbered $2,3,...n$). If $v$ gets acks for all the messages it sends (and has not meanwhile been killed) then it climbs to level $i+1$. If it sent messages to all the nodes and received acks, then it terminates as a leader.

A node $v$ that gets a request message for level $i$ and has not sent an ack yet, now sends an ack. If $v$ did send an ack to some $u$ and now receives a request in (level i) from $w>u$ (w.r.t.\ to their IDs)  then $v$ sends a conditional cancel message to $u$.
If $u$ is already in a level higher than i, then $u$ refuses, and $v$ kills $w$. Otherwise, $u$ is killed and $v$ sends an ack to $w$ (unless $v$ meanwhile received a request from some $z>w$).

\subsubsection*{Analysis}

\begin{lemma}
If $2^i<n$ and there was a (live) candidate at level
$i$, then there is a candidate at level $i+1$.
\end{lemma}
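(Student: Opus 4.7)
The plan is to proceed by a standard chain-of-killings argument, tracing who kills whom starting from the assumed live candidate at level $i$ and arguing that the chain must eventually terminate in a candidate that reaches level $i+1$.

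First I would set up notation: fix a live candidate $c_0$ that reaches level $i$ at some point (existing by hypothesis). If $c_0$ never gets killed, then since every request it sends at level $i$ must eventually receive either an $\texttt{ack}$ or a $\texttt{kill}$ from the referee (by inspection of the algorithm), and since ``kill'' would contradict $c_0$ being live, all $2^i$ referees must eventually ack $c_0$. Because $2^i < n$, the $2^i$ requests go to distinct valid neighbors, so $c_0$ collects all acks and climbs to level $i+1$, and we are done.

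Otherwise $c_0$ is killed. Inspecting the kill rules, the only way for a level-$i$ candidate $u$ to be killed is that some referee of $u$ receives a subsequent request from a candidate $w$ with $w > u$ (as IDs) at level $\ge i$, and the ``conditional cancel'' step succeeds, which in particular means $w$'s request was at a level no smaller than $i$. Thus there exists $c_1$ with $c_1 > c_0$ (in ID) and with level at the moment of the kill at least $i$, and once a candidate is at level $\ge i$ it never drops below. Iterating: if $c_1$ is itself killed, then some $c_2 > c_1$ exists with level $\ge$ level$(c_1) \ge i$, and so on. This yields a strictly ID-increasing sequence $c_0 < c_1 < c_2 < \cdots$, all of which reached level $\ge i$.

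Since IDs are distinct and there are only $n$ candidates, the chain must terminate at some $c_\ell$ that is never killed. Applying the same argument as in the opening paragraph to $c_\ell$ (at its current level, which is $\ge i$): all of $c_\ell$'s outstanding referees at that level must eventually respond, none can respond with $\texttt{kill}$ (else $c_\ell$ dies, contradicting termination of the chain), so all respond with $\texttt{ack}$ and $c_\ell$ advances. Hence $c_\ell$ reaches some level $\ge i+1$, proving the lemma.

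The main obstacle I anticipate is verifying that the kill rules indeed force the successor $c_{k+1}$ in the chain to be at level $\ge i$ at the moment it kills $c_k$, rather than merely having briefly sent a request at that level; this requires a careful case analysis of both the ``$u$ refuses'' branch (which kills $w$, not $u$, so goes the wrong way for our chain) and the ``$u$ is killed'' branch, together with the asynchronous interleaving of conditional-cancel exchanges. A secondary subtlety is ensuring that the argument ``all referees must eventually respond'' is valid despite asynchrony, which follows from the fact that every request triggers a bounded sequence of message exchanges and the algorithm overall terminates.
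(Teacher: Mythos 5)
Your chain-of-killings argument is closely related to, but structurally different from, the paper's proof, which is an extremal-element argument: the paper simply fixes $w$ to be the \emph{highest-ID} candidate that reaches level $i$, observes that $w$ cannot be the victim of a successful conditional cancel (since that would require a higher-ID requester at level $\ge i$, contradicting maximality), and concludes that if $w$ is killed it can only be because some referee returned a kill on behalf of a smaller-ID node $u$ that had already reached a level $> i$, i.e., level $\ge i+1$. Picking the maximum up front collapses your chain to length one and sidesteps the case analysis.

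Your version would also work, but your stated characterization of the kill rules is not quite right as written: it is \emph{not} the case that the only way a level-$i$ candidate can be killed is by a higher-ID requester at level $\ge i$. A candidate at level $i$ is also killed when one of \emph{its own} level-$i$ requests provokes a conditional cancel to a previously-acked node $u'$ of \emph{smaller} ID, and $u'$ refuses because its level already exceeds $i$; the referee then kills the requester. You flag this ``$u$ refuses'' branch as a worry that ``goes the wrong way for our chain,'' but in fact it is a shortcut, not an obstacle: in that branch the killer $u'$ has level $> i$, hence $\ge i+1$, so the chain terminates immediately and the lemma is proved on the spot. Once you split the kill event into these two cases (cancelled by a higher-ID node at level $\ge i$, which extends the chain; or killed by a referee on behalf of a smaller-ID node already at level $\ge i+1$, which terminates it successfully), your argument is complete. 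Without that split, the step ``there exists $c_1 > c_0$ with level $\ge i$'' is an overclaim, so the resolution should be stated explicitly rather than left as an anticipated difficulty.
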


\begin{proof}
Consider the highest candidate $w$ to reach level $i$. Suppose that $w$ never reaches level $i+1$. Since $w$ is the highest,
the only way it can get killed is if it sent a request to some node $v$, and $v$ sent as a result a cancel message to $w$ because some node $u$ (with a smaller ID than $w$) already reached level $i+1$. 
This proves the lemma.
\end{proof}

\begin{lemma}
No more than $\frac{n}{2^i}$ candidates reach level $i$.
    \end{lemma}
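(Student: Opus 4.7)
My approach is a double-counting argument based on the notion of a permanently locked-in ack. Define a node $v$ to be a \emph{permanent supporter} of candidate $u$ at level $i$ once $v$ has sent an ack to $u$ in response to a level-$i$ request and $u$ has subsequently climbed to a level strictly greater than $i$ (in particular, $u$ is still alive at that moment). The plan is to show (i) that every candidate which reaches level $i$ has accumulated $2^i$ distinct permanent supporters at the appropriate level, and (ii) that each node $v$ can be a permanent supporter of at most one candidate at any fixed level. Together these give $c_i \cdot 2^i \le n$, where $c_i$ is the number of candidates that reach level $i$, which yields the desired bound $c_i \le n/2^i$.

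The heart of the argument is the uniqueness invariant in (ii). Suppose $v$ is already a permanent supporter of $u$ at level $i$, and later $v$ receives a level-$i$ request from some $w \ne u$ with $w > u$. By the code, $v$ sends a conditional cancel to $u$; since $u$'s current level is already strictly greater than $i$, $u$ refuses, and then $v$ kills $w$ rather than reassigning its ack. Conversely, the only way $v$'s ack to $u$ could ever have been canceled is if some higher-ID level-$i$ request had arrived at $v$ \emph{before} $u$ crossed level $i$, in which case $u$ was never a permanent supporter from $v$ at level $i$ in the first place. In either case, $v$ contributes at most one level-$i$ permanent support, ever.

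Part (i) is then essentially the definition: by the algorithm a candidate advances from level $i$ to level $i+1$ only after receiving the full set of $2^i$ level-$i$ acks, and once it has advanced, those $2^i$ acks meet the definition of permanent supports by the invariant established in (ii). Combining (i) and (ii), the double counting over (candidate, permanent supporter)-pairs at the relevant level gives the stated bound, together with the previous lemma (which guarantees the competition does not collapse before $2^i = n$).

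The main obstacle I anticipate is making the invariant in (ii) airtight under asynchrony. The algorithm involves several interacting message types (request, ack, conditional cancel, refusal, kill) that may be simultaneously in flight along different links, and one has to case-analyze the possible relative orderings of these messages to ensure the cancel handshake behaves as intended. Two facts do the heavy lifting here: a candidate's level is monotonically nondecreasing until it is killed, and FIFO delivery holds on each link, which together rule out pathological reorderings such as a cancel response from $u$ being overtaken by a subsequent level-$i$ request from another candidate.
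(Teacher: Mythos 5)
Your proposal is correct and takes essentially the same double-counting approach as the paper: both hinge on the observation that the cancel handshake forces each node to lend its ack exclusively to one candidate per level, and then count (candidate, acker)-pairs. You package this as an explicit uniqueness invariant over a named notion of permanent supporter, while the paper argues directly about the $2^i$ ackers of a fixed candidate that reached level $i+1$, but the underlying argument is the same (including the same minor off-by-one/factor-of-two slack relative to the literal $n/2^i$ stated in the lemma, which is harmless for the corollary).
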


\begin{proof}
Consider a candidate $v$ who reached level $i+1$. Then $v$ received an ack from $2^i$ nodes. Among above mentioned $2^i$ nodes, consider some node $u$. If $u$ had sent an ack to some other node $w$ before sending an ack to $v$, then $u$ had also sent a cancel message to $w$ and has verified that $w$ stopped being a candidate, before $u$ sent an ack to $v$. Moreover, if $u$ has not sent an ack to any node after sending to $v$, since otherwise, this would have meant that $v$ accepted a cancel request and ceased being a candidate before reaching level $i+1$. Hence, for every candidate $v$ who reached level $i+1$ there exist a set of $2^i$ nodes who supported $v$ at the time it increased its level to $i+1$, and none of them supported any other node $w$ when $w$ increased its own level to $i+1$. The lemma follows.
\end{proof}

\begin{corollary}
The number of levels is at most $O(\log n)$, the total number of messages sent by nodes on each level is $O(n)$, and the time duration of each level is constant.
\end{corollary}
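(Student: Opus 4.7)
The plan is to derive each of the three claims as a direct counting consequence of the two preceding lemmas, combined with inspection of the message exchange pattern within a single level.

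First, for the bound on the number of levels, I would invoke the second lemma: at most $n/2^i$ candidates can reach level $i$, so no candidate can ever reach any level $i$ with $2^i > n$. Hence the algorithm uses at most $\lfloor \log_2 n \rfloor + 1 = O(\log n)$ levels.

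Second, for the message bound at a single level $i$, I would count as follows. By the second lemma, at most $n/2^i$ candidates survive to level $i$, and each such candidate sends $2^i$ request messages at that level, for a total of at most $n$ request messages. Every other message attributable to level $i$ can be charged to one of these requests: a request elicits at most one ack or one conditional cancel in reply, a cancel in turn elicits at most one refuse/accept response from the previously-supported candidate, and that response produces at most one subsequent ack back to the requester. Since each such chain has constant length, the total message count at level $i$ is $O(n)$.

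Third, for the time duration of a single level, I would argue that each surviving candidate launches all its level-$i$ requests in parallel in one time step, and each request is resolved by the same constant-length exchange described above (request-ack, or request-cancel-refuse/accept-ack). Since every asynchronous message takes at most one time unit, the level completes within a constant number of time units after its candidates enter it.

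The main delicate point, rather than a true obstacle, is the charging argument in the second claim: I must confirm that the cancels and their replies are in bijection (up to a constant factor) with the $\le n$ originating requests at level $i$, and this relies on the algorithm's invariant that a supporter node emits a cancel only in response to an incoming level-$i$ request and handles at most one cancel-chain per such request, so that no extra message traffic unattached to a level-$i$ request can arise.
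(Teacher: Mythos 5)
Your proof is correct. The paper leaves this corollary unproven, presenting it as a direct consequence of the two lemmas immediately above it, so there is no explicit paper proof to compare against word for word --- but your derivation is exactly the one the paper intends. The level bound from Lemma~2 (zero survivors once $n/2^i < 1$), the $O(n)$ messages-per-level bound obtained by charging the at most $n$ level-$i$ requests with constant-length reply chains, and the constant-time-per-level bound from the same constant-length exchanges are the natural readings of the argument.

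One slight imprecision worth flagging, though it is a refinement rather than a gap: your time argument reads as if \emph{each} request resolves via a single constant-length exchange. Strictly speaking, a referee $v$ can receive a sequence of level-$i$ requests with strictly increasing IDs and so run several cancel-chains during one level. What actually suffices, and what does hold, is that the highest-ID survivor at level $i$ is never preempted once a referee begins processing its request (no higher request can arrive to override it), so its $2^i$ acks all return within a constant number of time units after it enters level $i$; together with the first lemma, which guarantees that some candidate advances, this gives the constant time per level. For the message count, your charging argument needs no such caveat: every cancel-chain is triggered by exactly one incoming level-$i$ request and has constant length, so $O(1)$ follow-up traffic per request and $O(n)$ per level is exactly right.
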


The above lemmas and corollary establish the following theorem:
}

\begin{theorem}
There exist a leader election algorithm for asynchronous networks under simultaneous wake-up that uses $O(\log n)$ time and $O(n\log n)$ messages, if the time is counted from the last spontaneous (adversarial) wake up of a node.
\end{theorem}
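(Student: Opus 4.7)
The plan is to establish four technical pieces that, taken together, yield the theorem: (a) a survivorship bound of at most $n/2^i$ live candidates at level $i$; (b) a progress lemma showing that whenever $2^i < n$ and a live candidate has reached level $i$, some candidate reaches level $i+1$; (c) a constant-time bound per level; and (d) an $O(n)$ message accounting per level. Combining (a), (b) and (d) yields $O(\log n)$ levels with $O(n)$ messages each, hence $O(n\log n)$ messages overall, and combining (c) with $O(\log n)$ levels yields the $O(\log n)$ time bound. The simultaneous wake-up assumption (or counting time from the last spontaneous wake-up) is used only to guarantee that all candidates start level $0$ within a bounded time window, so that ``level $i$'' is a temporally meaningful notion.

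For the progress lemma, I would take the surviving candidate $w$ of largest ID among those that reach level $i$. Every kill in the algorithm is initiated by a cancel triggered by some strictly larger ID; since no live candidate has ID larger than $w$, no request of $w$ at level $i$ can be refused by the conditional-cancel mechanism (any node $v$ that previously acked some $u < w$ will successfully cancel $u$ and ack $w$, or $v$ already acked some node larger than $w$, but such a larger node cannot be a live candidate). Hence $w$ collects all $2^i$ acks it needs and advances. The survivorship bound is the subtler of the two: for each candidate $v$ that reaches level $i+1$, I will exhibit a ``witness set'' of $2^i$ nodes that acked $v$ at level $i$ and show that these witness sets are pairwise disjoint across candidates that reach level $i+1$. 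The key observation is that once a node $u$ sends an ack to $v$ at level $i$, it can later ack a different candidate $w$ only after executing the cancel protocol on $v$, which succeeds only if $v$ has not already climbed to level $i+1$; therefore, at the moment $v$ actually advances to level $i+1$, none of its $2^i$ supporters has already reassigned itself to some other candidate that has also reached level $i+1$. Dividing $n$ by $2^i$ gives the bound.

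For the time and message accounting, I would argue per level as follows. On level $i$, each live candidate generates $2^i$ request messages, and each request triggers at most one ack, one conditional cancel, and one response to the cancel, so level-$i$ traffic is $O(2^i)$ messages per candidate and, by the survivorship bound, $O((n/2^i)\cdot 2^i) = O(n)$ in total. The critical path in asynchronous time for a candidate to finish level $i$ is a constant number of message hops (request, possible cancel and its reply, ack); under simultaneous wake-up all candidates start level $0$ together, so level $i$ completes within $O(i)$ time, and the whole algorithm terminates in $O(\log n)$ time.

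The main obstacle I anticipate is making the witness-set disjointness argument airtight in the asynchronous setting, because the conditional-cancel protocol is exactly where adversarial interleavings could in principle cause double-counting. I would handle this by a careful invariant: at every point in time, and every level $i$, each node $u$ has at most one ``currently supported'' candidate at level $i$, and the transition from supporting $v$ to supporting $w > v$ is atomic in the sense that it completes only after $u$ has confirmed $v$ is still at level at most $i$ and has killed $v$. Combined with FIFO links and the fact that a candidate that reaches level $i+1$ never retracts, this invariant implies that when $v$ transitions to level $i+1$, its $2^i$ current supporters remain assigned to $v$ at level $i$ forever thereafter, yielding the disjointness required for (a) and completing the proof.
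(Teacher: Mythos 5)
Your proposal follows exactly the paper's route: the same asynchronized Afek--Gafni algorithm with conditional cancels, the same decomposition into a progress lemma, a survivorship bound of $n/2^i$ candidates reaching level $i$, and constant time and $O(n)$ messages per level, and your witness-set disjointness argument for the survivorship bound is essentially identical to the paper's.

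One small but genuine flaw in your progress lemma: you conclude that the surviving candidate $w$ of largest ID among those that reach level $i$ ``collects all $2^i$ acks it needs and advances,'' with the parenthetical claiming that any previously-acked $u<w$ will be successfully cancelled. That claim is false: $u$ may refuse the cancel precisely because $u$ has already reached level $i+1$, in which case the referee kills $w$ rather than acking it. The paper's phrasing handles this correctly --- it does \emph{not} conclude $w$ itself advances, only that if $w$ is killed, the kill is witnessed by some $u<w$ that has already reached level $i+1$, so that either way \emph{some} candidate is at level $i+1$. The statement of your lemma (b) is correct and is all that the downstream argument needs, but the specific chain of reasoning you give for it proves a stronger and in fact false statement; you should weaken the conclusion to ``some candidate reaches level $i+1$'' and add the refusal case to your case analysis.
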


\section{Conclusion}
The problem of leader election has been introduced a long time ago, and has been heavily investigated ever since. Yet, quite a few significant new results have been discovered very recently.
At the time some of the classic results were obtained, it may have seemed that these issues were very well understood.
This suggests that it does makes sense to look again even at classic results such as the those of Afek and Gafni (for tradeoffs), Afek and Matias (for randomized algorithms), and Frederikson and Lynch (for methods of generalizing lower bounds from comparison to general algorithms), which are some of the papers that we revisited in this work (and improved some of their results, even though they may have appeared to be tight at the time).
It will be interesting to know whether some additional classic results can do with a refresh.

An interesting open question raised by our work is the precise impact of the size of the node ID space on the message complexity of deterministic algorithms for leader election and other problems.
As elaborated in Section~\ref{sec:contributions}, showing lower bounds for an ID space of polynomial size is necessary in order to make the lower bound applicable to the standard $\mathsf{CONGEST}$ model.
\onlyLong{
For instance, the work of Awerbuch, Goldreich, Peleg, and Vainish~\cite{awerbuch1990trade} proved that solving single-source broadcast in general networks has a message complexity of $\Omega(m)$, (where $m$ is the number of edges) when assuming an exponentially large ID space, whereas the subsequent work of King, Kutten, and Thorup~\cite{KKT} showed that a message complexity of $\tilde O(n)$ is indeed possible for smaller ID spaces.
Another open problem is whether we can show a tradeoff lower bound in the asynchronous model, similar in spirit to our result in Section~\ref{sec:tradeoff}.
}

\begin{acks}
Shay Kutten was supported in part by grant 2070442 from the ISF.

Xianbin Zhu is a full-time PhD student at the City University of Hong Kong. The work described in this paper was partially supported by a grant from the Research Grants Council of the Hong Kong Special Administrative Region, China [Project No. CityU 11213620].
\end{acks}

\bibliographystyle{ACM-Reference-Format}
\bibliography{references}


\begin{thebibliography}{23}


\ifx \showCODEN    \undefined \def \showCODEN     #1{\unskip}     \fi
\ifx \showDOI      \undefined \def \showDOI       #1{#1}\fi
\ifx \showISBNx    \undefined \def \showISBNx     #1{\unskip}     \fi
\ifx \showISBNxiii \undefined \def \showISBNxiii  #1{\unskip}     \fi
\ifx \showISSN     \undefined \def \showISSN      #1{\unskip}     \fi
\ifx \showLCCN     \undefined \def \showLCCN      #1{\unskip}     \fi
\ifx \shownote     \undefined \def \shownote      #1{#1}          \fi
\ifx \showarticletitle \undefined \def \showarticletitle #1{#1}   \fi
\ifx \showURL      \undefined \def \showURL       {\relax}        \fi
\providecommand\bibfield[2]{#2}
\providecommand\bibinfo[2]{#2}
\providecommand\natexlab[1]{#1}
\providecommand\showeprint[2][]{arXiv:#2}

\bibitem[Afek and Gafni(1991)]%
        {afek1991time}
\bibfield{author}{\bibinfo{person}{Yehuda Afek} {and} \bibinfo{person}{Eli
  Gafni}.} \bibinfo{year}{1991}\natexlab{}.
\newblock \showarticletitle{Time and message bounds for election in synchronous
  and asynchronous complete networks}.
\newblock \bibinfo{journal}{\emph{SIAM J. Comput.}} \bibinfo{volume}{20},
  \bibinfo{number}{2} (\bibinfo{year}{1991}), \bibinfo{pages}{376--394}.
\newblock


\bibitem[Afek and Matias(1994)]%
        {afek1994elections}
\bibfield{author}{\bibinfo{person}{Yehuda Afek} {and} \bibinfo{person}{Yossi
  Matias}.} \bibinfo{year}{1994}\natexlab{}.
\newblock \showarticletitle{Elections in anonymous networks}.
\newblock \bibinfo{journal}{\emph{Information and Computation}}
  \bibinfo{volume}{113}, \bibinfo{number}{2} (\bibinfo{year}{1994}),
  \bibinfo{pages}{312--330}.
\newblock


\bibitem[Angluin et~al\mbox{.}(2004)]%
        {popul}
\bibfield{author}{\bibinfo{person}{Dana Angluin}, \bibinfo{person}{James
  Aspnes}, \bibinfo{person}{Zo{\"e} Diamadi}, \bibinfo{person}{Michael~J
  Fischer}, {and} \bibinfo{person}{Ren{\'e} Peralta}.}
  \bibinfo{year}{2004}\natexlab{}.
\newblock \showarticletitle{Computation in networks of passively mobile
  finite-state sensors}. In \bibinfo{booktitle}{\emph{Proceedings of the
  twenty-third annual ACM symposium on Principles of distributed computing}}.
  \bibinfo{pages}{290--299}.
\newblock


\bibitem[Awerbuch et~al\mbox{.}(1990)]%
        {awerbuch1990trade}
\bibfield{author}{\bibinfo{person}{Baruch Awerbuch}, \bibinfo{person}{Oded
  Goldreich}, \bibinfo{person}{Ronen Vainish}, {and} \bibinfo{person}{David
  Peleg}.} \bibinfo{year}{1990}\natexlab{}.
\newblock \showarticletitle{A trade-off between information and communication
  in broadcast protocols}.
\newblock \bibinfo{journal}{\emph{Journal of the ACM (JACM)}}
  \bibinfo{volume}{37}, \bibinfo{number}{2} (\bibinfo{year}{1990}),
  \bibinfo{pages}{238--256}.
\newblock


\bibitem[Derakhshandeh et~al\mbox{.}(2015)]%
        {derakhshandeh2015leader}
\bibfield{author}{\bibinfo{person}{Zahra Derakhshandeh},
  \bibinfo{person}{Robert Gmyr}, \bibinfo{person}{Thim Strothmann},
  \bibinfo{person}{Rida Bazzi}, \bibinfo{person}{Andr{\'e}a~W Richa}, {and}
  \bibinfo{person}{Christian Scheideler}.} \bibinfo{year}{2015}\natexlab{}.
\newblock \showarticletitle{Leader election and shape formation with
  self-organizing programmable matter}. In
  \bibinfo{booktitle}{\emph{International Workshop on DNA-Based Computers}}.
  Springer, \bibinfo{pages}{117--132}.
\newblock


\bibitem[Frederickson(1983)]%
        {frederickson1983tradeoffs}
\bibfield{author}{\bibinfo{person}{Greg~N Frederickson}.}
  \bibinfo{year}{1983}\natexlab{}.
\newblock \showarticletitle{Tradeoffs for selection in distributed networks
  (Preliminary Version)}. In \bibinfo{booktitle}{\emph{Proceedings of the
  second annual ACM symposium on Principles of distributed computing}}.
  \bibinfo{pages}{154--160}.
\newblock


\bibitem[Frederickson and Lynch(1987)]%
        {frederickson1987electing}
\bibfield{author}{\bibinfo{person}{Greg~N Frederickson} {and}
  \bibinfo{person}{Nancy~A Lynch}.} \bibinfo{year}{1987}\natexlab{}.
\newblock \showarticletitle{Electing a leader in a synchronous ring}.
\newblock \bibinfo{journal}{\emph{Journal of the ACM (JACM)}}
  \bibinfo{volume}{34}, \bibinfo{number}{1} (\bibinfo{year}{1987}),
  \bibinfo{pages}{98--115}.
\newblock


\bibitem[Ganeriwal et~al\mbox{.}(2003)]%
        {ganeriwal2003timing}
\bibfield{author}{\bibinfo{person}{Saurabh Ganeriwal}, \bibinfo{person}{Ram
  Kumar}, {and} \bibinfo{person}{Mani~B Srivastava}.}
  \bibinfo{year}{2003}\natexlab{}.
\newblock \showarticletitle{Timing-sync protocol for sensor networks}. In
  \bibinfo{booktitle}{\emph{Proceedings of the 1st international conference on
  Embedded networked sensor systems}}. \bibinfo{pages}{138--149}.
\newblock


\bibitem[Humblet(1984)]%
        {humblet1984selecting}
\bibfield{author}{\bibinfo{person}{Pierre~A Humblet}.}
  \bibinfo{year}{1984}\natexlab{}.
\newblock \bibinfo{booktitle}{\emph{Selecting a leader in a clique in 0 (N log
  N) messages}}.
\newblock \bibinfo{publisher}{Laboratory for Information and Decision Systems,
  Massachusetts Institute of~…}.
\newblock


\bibitem[Impagliazzo and Kabanets(2010)]%
        {impagliazzo2010constructive}
\bibfield{author}{\bibinfo{person}{Russell Impagliazzo} {and}
  \bibinfo{person}{Valentine Kabanets}.} \bibinfo{year}{2010}\natexlab{}.
\newblock \showarticletitle{Constructive proofs of concentration bounds}.
\newblock In \bibinfo{booktitle}{\emph{Approximation, Randomization, and
  Combinatorial Optimization. Algorithms and Techniques}}.
  \bibinfo{publisher}{Springer}, \bibinfo{pages}{617--631}.
\newblock


\bibitem[King et~al\mbox{.}(2015)]%
        {KKT}
\bibfield{author}{\bibinfo{person}{Valerie King}, \bibinfo{person}{Shay
  Kutten}, {and} \bibinfo{person}{Mikkel Thorup}.}
  \bibinfo{year}{2015}\natexlab{}.
\newblock \showarticletitle{Construction and impromptu repair of an MST in a
  distributed network with o (m) communication}. In
  \bibinfo{booktitle}{\emph{Proceedings of the 2015 ACM Symposium on Principles
  of Distributed Computing}}. \bibinfo{pages}{71--80}.
\newblock


\bibitem[Korach et~al\mbox{.}(1990)]%
        {korach1990modular}
\bibfield{author}{\bibinfo{person}{Ephraim Korach}, \bibinfo{person}{Shay
  Kutten}, {and} \bibinfo{person}{Shlomo Moran}.}
  \bibinfo{year}{1990}\natexlab{}.
\newblock \showarticletitle{A modular technique for the design of efficient
  distributed leader finding algorithms}.
\newblock \bibinfo{journal}{\emph{ACM Transactions on Programming Languages and
  Systems (TOPLAS)}} \bibinfo{volume}{12}, \bibinfo{number}{1}
  (\bibinfo{year}{1990}), \bibinfo{pages}{84--101}.
\newblock


\bibitem[Korach et~al\mbox{.}(1984)]%
        {korach1984tight}
\bibfield{author}{\bibinfo{person}{Ephraim Korach}, \bibinfo{person}{Shlomo
  Moran}, {and} \bibinfo{person}{Shmuel Zaks}.}
  \bibinfo{year}{1984}\natexlab{}.
\newblock \showarticletitle{Tight lower and upper bounds for some distributed
  algorithms for a complete network of processors}. In
  \bibinfo{booktitle}{\emph{Proceedings of the third annual ACM symposium on
  Principles of distributed computing}}. \bibinfo{pages}{199--207}.
\newblock


\bibitem[Kutten et~al\mbox{.}(2020)]%
        {kutten2020singularly}
\bibfield{author}{\bibinfo{person}{Shay Kutten}, \bibinfo{person}{William~K
  Moses~Jr}, \bibinfo{person}{Gopal Pandurangan}, {and} \bibinfo{person}{David
  Peleg}.} \bibinfo{year}{2020}\natexlab{}.
\newblock \showarticletitle{Singularly Optimal Randomized Leader Election}. In
  \bibinfo{booktitle}{\emph{34th International Symposium on Distributed
  Computing}}.
\newblock


\bibitem[Kutten et~al\mbox{.}(2015a)]%
        {kutten2015complexity}
\bibfield{author}{\bibinfo{person}{Shay Kutten}, \bibinfo{person}{Gopal
  Pandurangan}, \bibinfo{person}{David Peleg}, \bibinfo{person}{Peter
  Robinson}, {and} \bibinfo{person}{Amitabh Trehan}.}
  \bibinfo{year}{2015}\natexlab{a}.
\newblock \showarticletitle{On the complexity of universal leader election}.
\newblock \bibinfo{journal}{\emph{Journal of the ACM (JACM)}}
  \bibinfo{volume}{62}, \bibinfo{number}{1} (\bibinfo{year}{2015}),
  \bibinfo{pages}{1--27}.
\newblock


\bibitem[Kutten et~al\mbox{.}(2015b)]%
        {kutten2015sublinear}
\bibfield{author}{\bibinfo{person}{Shay Kutten}, \bibinfo{person}{Gopal
  Pandurangan}, \bibinfo{person}{David Peleg}, \bibinfo{person}{Peter
  Robinson}, {and} \bibinfo{person}{Amitabh Trehan}.}
  \bibinfo{year}{2015}\natexlab{b}.
\newblock \showarticletitle{Sublinear bounds for randomized leader election}.
\newblock \bibinfo{journal}{\emph{Theoretical Computer Science}}
  \bibinfo{volume}{561} (\bibinfo{year}{2015}), \bibinfo{pages}{134--143}.
\newblock


\bibitem[Le~Lann(1977)]%
        {le1977distributed}
\bibfield{author}{\bibinfo{person}{G{\'e}rard Le~Lann}.}
  \bibinfo{year}{1977}\natexlab{}.
\newblock \showarticletitle{Distributed Systems-Towards a Formal Approach.}. In
  \bibinfo{booktitle}{\emph{IFIP congress}}, Vol.~\bibinfo{volume}{7}. Toronto,
  \bibinfo{pages}{155--160}.
\newblock


\bibitem[Lynch(1996)]%
        {lynch1996distributed}
\bibfield{author}{\bibinfo{person}{Nancy~A Lynch}.}
  \bibinfo{year}{1996}\natexlab{}.
\newblock \bibinfo{booktitle}{\emph{Distributed algorithms}}.
\newblock \bibinfo{publisher}{Elsevier}.
\newblock


\bibitem[Nygren et~al\mbox{.}(2010)]%
        {nygren2010akamai}
\bibfield{author}{\bibinfo{person}{Erik Nygren}, \bibinfo{person}{Ramesh~K
  Sitaraman}, {and} \bibinfo{person}{Jennifer Sun}.}
  \bibinfo{year}{2010}\natexlab{}.
\newblock \showarticletitle{The akamai network: a platform for high-performance
  internet applications}.
\newblock \bibinfo{journal}{\emph{ACM SIGOPS Operating Systems Review}}
  \bibinfo{volume}{44}, \bibinfo{number}{3} (\bibinfo{year}{2010}),
  \bibinfo{pages}{2--19}.
\newblock


\bibitem[Peleg(2000)]%
        {peleg2000distributed}
\bibfield{author}{\bibinfo{person}{David Peleg}.}
  \bibinfo{year}{2000}\natexlab{}.
\newblock \bibinfo{booktitle}{\emph{Distributed computing: a locality-sensitive
  approach}}.
\newblock \bibinfo{publisher}{SIAM}.
\newblock


\bibitem[Ramanathan et~al\mbox{.}(2007)]%
        {ramanathan2007randomized}
\bibfield{author}{\bibinfo{person}{Murali~Krishna Ramanathan},
  \bibinfo{person}{Ronaldo~A Ferreira}, \bibinfo{person}{Suresh Jagannathan},
  \bibinfo{person}{Ananth Grama}, {and} \bibinfo{person}{Wojciech
  Szpankowski}.} \bibinfo{year}{2007}\natexlab{}.
\newblock \showarticletitle{Randomized leader election}.
\newblock \bibinfo{journal}{\emph{Distributed Computing}} \bibinfo{volume}{19},
  \bibinfo{number}{5} (\bibinfo{year}{2007}), \bibinfo{pages}{403--418}.
\newblock


\bibitem[Singh(1992)]%
        {singh1992leader}
\bibfield{author}{\bibinfo{person}{Gurdip Singh}.}
  \bibinfo{year}{1992}\natexlab{}.
\newblock \showarticletitle{Leader election in complete networks}. In
  \bibinfo{booktitle}{\emph{Proceedings of the eleventh annual ACM symposium on
  Principles of distributed computing}}. \bibinfo{pages}{179--190}.
\newblock


\bibitem[Yao and Gehrke(2002)]%
        {yao2002cougar}
\bibfield{author}{\bibinfo{person}{Yong Yao} {and} \bibinfo{person}{Johannes
  Gehrke}.} \bibinfo{year}{2002}\natexlab{}.
\newblock \showarticletitle{The cougar approach to in-network query processing
  in sensor networks}.
\newblock \bibinfo{journal}{\emph{ACM Sigmod record}} \bibinfo{volume}{31},
  \bibinfo{number}{3} (\bibinfo{year}{2002}), \bibinfo{pages}{9--18}.
\newblock


\end{thebibliography}

\end{document}